\newtheorem{theorem}{Theorem}
\newtheorem{corollary}{Corollary}
\newtheorem{definition}{Definition}
\newtheorem{remark}{Remark}
\newtheorem{lemma}{Lemma}
\def\blfootnote{\xdef\@thefnmark{}\@footnotetext}
\newcommand{\indic}[1]{\ensuremath{\mathds{1}}}
\newcounter{mytempeqcounter}
\newcommand{\Figwidth}{\columnwidth}%
\def\twocolbreak{\nonumber\\ &}%
\def\twocolbreakquad{\nonumber\\ &\quad\quad{}}%
\def\twocolbreakonequad{\nonumber\\ &\quad{}}%
\def\twocolbreaktimes{\nonumber\\ &\times}%
\def\twocolnewline{\nonumber\\}%
\def\twocolAlignMarker{&}%
\def\onecolAlignMarker{}%
\newcommand{\Figwidth}{4.5in}%
\def\twocolbreak{}%
\def\twocolbreakquad{}%
\def\twocolbreakonequad{}%
\def\twocolbreaktimes{}%
\def\twocolnewline{}%
\def\twocolAlignMarker{}%
\def\onecolAlignMarker{&}%
\def\Prob{\mathbb P}
\def\mi{\mathbb I}
\def\ent{\mathbb H}
\def\hatlw{\hat{w}}
\def\dhatlw{\check{w}}
\def\hatuw{\hat{W}}
\def\dhatuw{\check{W}}
\def\hatlu{\hat{u}}
\def\dhatlu{\check{u}}
\def\hatlv{\hat{v}}
\def\dhatlv{\check{v}}
\def\hatll{\hat{\ell}}
\def\dhatll{\check{\ell}}
\begin{document}

\title{Two-Multicast Channel with Confidential Messages}

\author{Hassan ZivariFard, Matthieu Bloch, {\em Senior Member, IEEE}, and Aria Nosratinia, {\em Fellow, IEEE}\thanks{H. ZivariFard and A. Nosratinia are with Department of Electrical Engineering, The University of Texas at Dallas, Richardson, TX, USA. M. Bloch is with School of Electrical and Computer Engineering, Georgia Institute of Technology, Atlanta, GA, USA. E-mail: hassan@utdallas.edu, matthieu.bloch@ece.gatech.edu, aria@utdallas.edu.}\thanks{The material in this paper was presented in part at the 55th Annual Allerton Conference on Communication, Control, and Computing, Monticello, IL, September 2017.}
}

\maketitle

\begin{abstract}
Motivated in part by the problem of secure multicast distributed storage, we analyze secrecy rates for a channel in which two transmitters simultaneously multicast to two receivers in the presence of an eavesdropper. Achievable rates are calculated via extensions of a technique due to Chia and El~Gamal and the method of output statistics of random binning. 
Outer bounds are derived for both the degraded and non-degraded versions of the channel, and examples are provided in which the inner and outer bounds meet. The inner bounds recover known results for the multiple-access wiretap channel, broadcast channel with confidential messages, and the compound MAC channel. An auxiliary result is also produced that derives an inner bound on the minimal randomness necessary to achieve secrecy in multiple-access wiretap channels.
\end{abstract}

\section{Introduction}

We study the multiuser secure multicast problem (Fig.~\ref{fig}), more specifically, when two transmitters multicast messages securely to two receivers in the presence of an eavesdropper. All senders, receivers, and eavesdropper are at different terminals. This problem is motivated in part by secure access of multiple users to data in a distributed cache, which is a multi-transmitter (multiple-access) multicast scenario~\cite{SecureMulticast,TwoMulticast}. This problem is also equivalent to a compound two-state multiple-access wiretap channel. It has been known~\cite{CompoundWT} that problems involving compound channels have an equivalent multicast representation, in which the channel to each multicast receiver is equivalent to one of the states of the compound channel.\footnote{The problem studied herein is the secrecy counterpart of the classical problem posed by Ahlswede~\protect\cite{Ahlswede}, which proved highly influential for the MAC channel~\protect\cite{Verdu50} and the interference channel~\protect\cite{HanKobayashi}.}

This paper takes a two-pronged approach to the analysis of the network mentioned above, producing a number of new results and insights. In Section~\ref{AchivRateRegionWeak} we present an analysis inspired by the work of Chia and El~Gamal~\cite{ChiaElGamal}, which uses Marton coding and indirect decoding (also known as non-unique decoding)~\cite{NairElGamal} to achieve an improved secrecy rate for the transmission of {\em one} common message to two receivers that may experience different channel statistics. 
In extending the method of Chia and El~Gamal to multiple transmitters, we introduce a two-level Marton-type coding with associated non-unique decoding.

\begin{figure}
\centering
\includegraphics[width=\Figwidth]{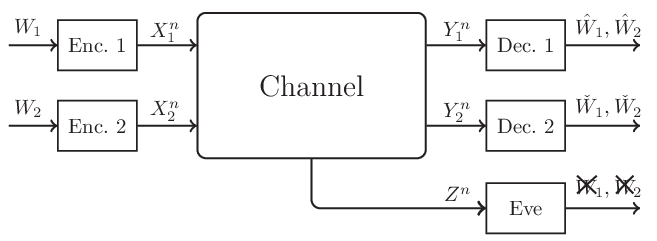}
\caption{Two-sender two-receiver channel with an eavesdropper}

\label{fig}
\end{figure}

In Section~\ref{AchivRateRegionStrong}, we employ the method of output statistics of random binning (OSRB)~\cite{OSRB} for analyzing the two-transmitter two-receiver problem (see also~\cite{RenesRenner2011} for a related approach). OSRB analyzes channel coding problems by conversion to a related source coding problem, where it tests achievability by probability approximation rather than counting arguments on typical sets, followed by a reverse conversion to complete the analysis. OSRB is well suited for secrecy problems because secrecy is tightly related to probability approximation. OSRB encoding is purely by random binning and is enabled by (and named after) the following asymptotic result: apply two independent random binning schemes on the same set and take a random sample from the set. The two bin indices corresponding to the random sample are statistically independent as long as binning rates are sufficiently small \cite{AlmostIndep, OSRB, RenesRenner2011}. 
We extend the tools and techniques of OSRB to match the requirements of the two-transmitter multicast problem.


The different parts of this paper complement each other, producing a more complete picture in the understanding of the problem of multi-transmitter secure multicast. 
The extension of the method of Chia and El~Gamal is utilized to highlight the minimal amount of randomness required to achieve secrecy rates over the multiple-access wiretap channel, and that therein channel prefixing can be replaced with superposition, in a  manner reminiscent of Watanabe and Oohama~\cite{OptimalRandomness} for minimizing the randomness resources for secrecy encoding.
The analysis based on OSRB generates strong secrecy, which interestingly has an expression that is a superset of the {\em achievable} weak secrecy calculated in the first part. Furthermore, the expression for the strong secrecy region can be greatly simplified via a constraint found in the weak secrecy analysis, highlighting the synergy between the two. More broadly, the developments in these two parts each offer techniques and insights that can potentially be useful in a wider class of problems.

Outer bounds for degraded and non-degraded channels are derived and shown to be tight against inner bounds in some special cases. 
Part of the results, including the proof of the outer bounds, appeared in an earlier conference version of this paper~\cite{AllertonPaper} and are not duplicated here in the interest of brevity.

A brief outline of the related literature is as follows. Multicasting with common information in the presence of an eavesdropper has been studied in~\cite{ForenSaleh,EkremUlukus13}, deriving inner bounds on secrecy capacity, and in some special cases also deriving secrecy capacity region. Salehkalaibar \emph{et al.}~\cite{ForenSaleh} studied a one-receiver, two-eavesdropper broadcast channel with three degraded message sets. Ekrem and Ulukus~\cite{EkremUlukus13} studied the transmission of public and confidential messages to two legitimate users, in the presence of an eavesdropper. Benammar and Piantanida~\cite{BenaPiantaWBC} calculated the secrecy capacity region of some classes of wiretap broadcast channels.

The MAC wiretap channel has been investigated in~\cite{LiangPoor,GMAWC,GMAWCJamming,YassaeeMAWC,WieseBoche,PierrotBloch2011,ForenPaper,Chou2018}. In~\cite{LiangPoor}, a discrete memoryless MAC with confidential messages has been studied that consists of a MAC with generalized feedback~\cite{Carleial} where each user's message must be kept confidential from the other. The multiple access wiretap channel~\cite{GMAWC,GMAWCJamming,ForenPaper} consists of a MAC with an additional channel output to an eavesdropper. In~\cite{GMAWC,GMAWCJamming}, achievable rate regions for the secrecy capacity region have been derived. Secrecy in the interference channel and broadcast channel has been studied in~\cite{LiuMaric}, where inner and outer bounds for the broadcast channel with confidential messages and the interference channel with confidential messages have been compared.


\section{Preliminaries}

Throughout this paper, random variables are denoted by capital letters and their realizations by lower case letters.  The set of $\epsilon-$strongly jointly typical sequences of length $n$, according to $p_{X,Y}$, is denoted by $\mathcal{T}_{\epsilon}^{(n)}({p_{X,Y}})$. For convenience in notation, whenever there is no danger of confusion, typicality will reference the random variables rather than the distribution, e.g., $\mathcal{T}_{\epsilon}^{(n)}(X,Y)$. The set of sequences $\{x^n: (x^n,y^n)\in T_\epsilon^{(n)}(X,Y)\}$ for a fixed $y^n$, when the fixed sequence $y^n$ is clear from the context, is denoted with the shorthand notation $\mathcal{T}_{\epsilon}^{(n)}(X|Y)$.  Superscripts denote the dimension of a vector, e.g., $X^n$. The integer set $\{1,\dots,M\}$ is denoted by $\llbracket 1,M\rrbracket$, and $X_{[i:j]}$ indicates the set $\{X_i,X_{i+1},\dots,X_j\}$. The cardinality of a set is denoted by $|\cdot|$. We utilize the total variation between probability mass functions (pmfs), defined by $||q-p||_1=\frac{1}{2}\sum_x |p-q|$. 
Following Cuff~\cite{Cuff} we use the concept of random pmfs denoted by capital letters (e.g. $P_X$).

\begin{definition}
\label{defi1}
A $(M_{1,n},M_{2,n},n)$ code for the considered model (Fig.~\ref{fig}) consists of the following:
\begin{enumerate}[i)]
\item Two message sets $\mathcal{W}_i=\llbracket 1,M_{i,n}\rrbracket$, $i=1,2$, from which independent messages $W_1$ and $W_2$ are drawn uniformly distributed over their respective sets.

\item Stochastic encoders $f_i$, $i=1,2$, which are specified by conditional probability matrices $f_i(X_i^n|w_i)$, where $X_i^n\in\mathcal{X}_i^n$, $w_i\in\mathcal{W}_i$ are channel inputs and private messages, respectively, and $\sum\nolimits_{x_i^n}{f_i(x_i^n|w_i)} = 1$. Here, $f_i(x_i^n|w_i)$ is the probability of the encoder producing the codeword $x_i^n$ for the message $w_i$.

\item A decoding function $\phi_1:\mathcal Y_1^n\to\mathcal{W}_1\times\mathcal{W}_2$ that assigns $(\hatlw_1,\hatlw_2) \in \llbracket 1,{M_{1,n}}\rrbracket\times\llbracket 1,M_{2,n}\rrbracket$ to received sequence $y_1^n$.

\item A decoding function $\phi_2:\mathcal Y_2^n\to\mathcal{W}_1\times\mathcal{W}_2$ that assigns $(\dhatlw_1,\dhatlw_2) \in \llbracket 1,{M_{1,n}}\rrbracket\times\llbracket 1,M_{2,n}\rrbracket$ to received sequence $y_2^n$.
\end{enumerate}
\end{definition}
The probability of error is given by:
\begin{equation*}
\label{pen}
P_{e} \triangleq\Prob\big(\{(\hatuw_1,\hatuw_2)\ne(W_1,W_2)\} \cup \{(\dhatuw_1,\dhatuw_2) \ne (W_1,W_2) \}\big).
\end{equation*}
\begin{definition}[\cite{BlochBarros}]
\label{defiperfect}
A rate pair $(R_1,R_2)$ is said to be achievable if there exists a sequence of $({M_{1,n}},{M_{2,n}},n)$ codes with ${M_{1,n}}\ge{2^{n{R_1}}},{M_{2,n}}\ge{2^{n{R_2}}}$, so that $P_e
\underset{n\rightarrow\infty}{\xrightarrow{\hspace{0.2in}}} 0$ and
\begin{align}
\frac{1}{n}&\mi(W_1,W_2;Z^n) \underset{n\rightarrow\infty}{\xrightarrow{\hspace{0.2in}}} 0\quad\mbox{for weak secrecy regime},\label{Secrecy_Defi}\\
&\mi(W_1,W_2;Z^n) \underset{n\rightarrow\infty}{\xrightarrow{\hspace{0.2in}}} 0\quad\mbox{for strong secrecy regime}.\label{Strong_Secrecy_Defi}
\end{align}
\end{definition}
\begin{definition}
\label{pmfDefi}
$p_X \mathop \approx q_X$ indicates ${\left\| {p_X - q_X} \right\|_1} < \epsilon$. For two random pmfs~\cite{Cuff}, $P_X \mathop
\approx Q_X$ indicates $\mathbb{E}{\left\|
P_X - Q_X \right\|_1} < \epsilon$.
\end{definition}

\section{Achievable Rate Region under Weak Secrecy}
\label{AchivRateRegionWeak}

We start with a lemma that fits Marton coding with indirect decoding in a MAC structure and produces an entropy bound needed in the secrecy analysis. Its basic idea can be highlighted as follows: given $X^n$, if we {\em independently} produce $2^{nR}$ random codevectors $Y^n$, we will have approximately $2^{nR-\mi(X^n;Y^n)}$ {\em jointly} typical pairs, i.e., the ``excess'' rate will determine the number of jointly typical pairs. This lemma extends the basic idea of excess rate to multiple codebooks, multiple conditioning, and furthermore, a generalization is made from a counting argument to the entropy of the index of the codebook, which is essential for the subsequent secrecy analysis.

\begin{lemma}
\label{lemma1}
\sloppy Consider random variables $(Q,U_0,V_0,U_1,V_1,Z)$ distributed according to $p_Qp_{U_0,U_1|Q}p_{V_0,V_1|Q}p_{Z|U_0,U_1,V_0,V_1}$. 
Draw random sequences $Q^n,U_0^n,V_0^n$ according to $\prod_{i = 1}^n p_Q(q_i)\; p_{U_0|Q}(u_{0,i}|q_i)\; p_{V_0|Q}(v_{0,i}|q_i)$. Conditioned on $U_0^n$, draw $2^{nS}$ i.i.d.\ copies of $U_1^n$ according to $\prod\nolimits_{i = 1}^n p_{U_1|U_0}(u_{1,i}|u_{0,i})$, denoted $U_1^n(\ell), \ell \in \llbracket1,2^{nS}\rrbracket$. Similarly, conditioned on $V_0^n$, draw $2^{nT}$ i.i.d.\ copies of $V_1^n$ according to $\prod\nolimits_{i = 1}^n p_{V_1|V_0}(v_{1,i}|v_{0,i})$, denoted $V_1^n(k), k \in \llbracket1,2^{nT}\rrbracket$. Let $L \in \llbracket1,2^{nS}\rrbracket$ and $K \in\llbracket1,2^{nT}\rrbracket$ be random variables with arbitrary pmf. If
\begin{align*}
   S&>\mi(U_1;Z|Q,U_0,V_0)+\delta_1(\epsilon)\\
   T&>\mi(V_1;Z|Q,U_0,V_0)+\delta_1(\epsilon)\\
   S+T&>\mi(U_1,V_1;Z|Q,U_0,V_0)+\delta_1(\epsilon)
\end{align*}
for a positive $\delta_1(\epsilon)$ and if for an arbitrary sequence $Z^n$, \[\Prob\big(
     (Q^n,U_0^n,V_0^n,U_1^n(L),V_1^n(K),Z^n) \in
     \mathcal{T}_{\epsilon}^{(n)}\big) \underset{n\rightarrow\infty}{\xrightarrow{\hspace{0.2in}}} 1.
     \]
     there
     exists a positive $\delta_2(\epsilon) \underset{\epsilon\rightarrow 0}{\xrightarrow{\hspace{0.2in}}} 0$, such that for $n$ sufficiently large
\begin{align}
\label{exlemma}
&\ent(L,K|Q^n,U_0^n,V_0^n,Z^n,{\mathcal C}) \twocolbreak\le
n(S+T-\mi(U_1,V_1;Z|Q,U_0,V_0))  + n \delta_2(\epsilon).
\end{align}
where ${\mathcal C} = \{U_1^n(1), \ldots, U_1^n(2^{nS}), V_1^n(1), \ldots,V_1^n(2^{nT}) \}$.

\end{lemma}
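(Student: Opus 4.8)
The plan is to adapt the counting argument behind~\cite[Lemma~1]{ChiaElGamal} to the two-codebook, two-level structure here. Write $\mathcal{C}$ for the random codebook, and for every index pair $(\ell,k)$ set
\[
E_{\ell,k}=\mathbbm{1}\big\{(Q^n,U_0^n,V_0^n,U_1^n(\ell),V_1^n(k),Z^n)\in\mathcal{T}_{\epsilon}^{(n)}\big\},\qquad N=\sum_{\ell,k}E_{\ell,k},
\]
so that $N$ counts the index pairs whose codewords are jointly typical with the fixed deterministic sequence $Z^n$. Let $G$ be the event $\{(Q^n,U_0^n,V_0^n,U_1^n(L),V_1^n(K),Z^n)\in\mathcal{T}_{\epsilon}^{(n)}\}$; the second hypothesis says precisely that $\Prob(G)\to 1$. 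Abbreviating the conditioning tuple $(Q^n,U_0^n,V_0^n,Z^n,\mathcal{C})$ by $\mathcal{D}$ and noting that $\mathbbm{1}_G$ is a deterministic function of $(L,K,\mathcal{D})$,
\begin{align*}
\ent(L,K|\mathcal{D}) &= \ent(\mathbbm{1}_G|\mathcal{D}) + \ent(L,K|\mathbbm{1}_G,\mathcal{D})\\
&\le 1 + \ent(L,K|G,\mathcal{D}) + \Prob(G^c)\,n(S+T),
\end{align*}
where the last term bounds the contribution of $G^c$ by $\log(2^{nS}2^{nT})$. Conditioned on $G$ and on a realization of $(Q^n,U_0^n,V_0^n,\mathcal{C})$, the pair $(L,K)$ is supported on the set of jointly typical index pairs, which has cardinality $N\ge 1$; hence $\ent(L,K|G,\mathcal{D})\le\expec[\log N\,|\,G]\le\log\expec[N\,|\,G]$ by Jensen's inequality.

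It remains to estimate $\expec[N\,|\,G]\le\expec[N]/\Prob(G)$. By symmetry of the construction, $\expec[N]=2^{n(S+T)}\,\Prob\big((Q^n,U_0^n,V_0^n,U_1^n(1),V_1^n(1),Z^n)\in\mathcal{T}_{\epsilon}^{(n)}\big)$. Conditioning on $(Q^n,U_0^n,V_0^n)$, this probability vanishes unless $(Q^n,U_0^n,V_0^n,Z^n)$ is jointly typical, and otherwise a conditional-typicality count bounds it by $2^{-n(I(U_1,V_1;Z|Q,U_0,V_0)-\delta(\epsilon))}$. The key point is that the two-level generation makes $U_1^n(1)$ conditionally independent of $(Q^n,V_0^n,V_1^n(1))$ given $U_0^n$, and $V_1^n(1)$ conditionally independent of $(Q^n,U_0^n,U_1^n(1))$ given $V_0^n$; multiplying the generation probability of a fixed typical extension (of order $2^{-nH(U_1|U_0)-nH(V_1|V_0)}$) by the number of typical $(u_1^n,v_1^n)$ extensions (of order $2^{nH(U_1,V_1|Q,U_0,V_0,Z)}$), and then using the Markov relations $H(U_1|U_0)=H(U_1|Q,U_0,V_0)$ and $H(V_1|V_0)=H(V_1|Q,U_0,V_0,U_1)$ together with the chain rule $I(U_1,V_1;Z|Q,U_0,V_0)=I(U_1;Z|Q,U_0,V_0)+I(V_1;Z|Q,U_0,V_0,U_1)$, collapses the exponent to $I(U_1,V_1;Z|Q,U_0,V_0)$. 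Thus $\expec[N]\le 2^{n(S+T-I(U_1,V_1;Z|Q,U_0,V_0)+\delta(\epsilon))}$, and since $\Prob(G)\to 1$ we get $\log\expec[N\,|\,G]\le n\big(S+T-I(U_1,V_1;Z|Q,U_0,V_0)\big)+n\delta(\epsilon)+1$ for $n$ large. Substituting into the display above and absorbing the additive constants and the vanishing term $\Prob(G^c)\,n(S+T)$ into $n\delta_2(\epsilon)$ — legitimate for $n$ sufficiently large, with $\delta_2(\epsilon)\to 0$ as $\epsilon\to 0$ — yields~\eqref{exlemma}. The three rate hypotheses serve the same purpose as in~\cite[Lemma~1]{ChiaElGamal}: together with $\Prob(G)\to 1$ they keep the statement non-vacuous (e.g.\ if $S+T<I(U_1,V_1;Z|Q,U_0,V_0)$ then $\expec[N]\to 0$, which is incompatible with $\Prob(G)\to 1$), and they permit the alternative of splitting $\ent(L,K|\mathcal{D})=\ent(L|\mathcal{D})+\ent(K|L,\mathcal{D})$ and bounding the two pieces separately using $S>I(U_1;Z|Q,U_0,V_0)$ and the sum constraint, so that each piece is manifestly nonnegative.

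I expect the conditional-typicality count for $\expec[N]$ to be the main obstacle. Two things require care: first, $Z^n$ is an arbitrary deterministic sequence, so every statement of the form ``$Z^n$ behaves typically'' must be extracted from the hypothesis $\Prob(G)\to 1$, which forces the marginal $(Q^n,U_0^n,V_0^n,Z^n)$ to be jointly typical with probability close to one; second, one must keep straight the distinction between the codebook-generation distribution — under which $U_1^n(\ell)$ and $V_1^n(k)$ are independent satellite extensions of $U_0^n$ and $V_0^n$ — and the target distribution $p(q,u_0,v_0,u_1,v_1,z)$ against which typicality is tested, since it is the Markov structure shared by the two that legitimizes the collapse of the exponent. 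The remaining ingredients — the entropy split with $\mathbbm{1}_G$, Jensen's inequality, and the absorption of the slack into $\delta_2(\epsilon)$ — are routine.
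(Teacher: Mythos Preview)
Your argument is correct but follows a genuinely different route from the paper. The paper defines the same counting variable $N$ and an error event $E_1=\{N\ge (1+\delta_1)2^{n(S+T-\mi(U_1,V_1;Z|Q,U_0,V_0)+\delta)}\}$, then shows $\Prob(E_1)\to 0$ via a Chernoff bound; because the indicators $E_{\ell,k}$ are not mutually independent (pairs sharing an index share a codeword), the paper partitions according to whether the typical set projects onto a single $k$, a single $\ell$, or neither, and applies Chernoff in each case to a family of i.i.d.\ indicators, which is precisely where the three individual rate hypotheses $S>\mi(U_1;Z|\cdot)$, $T>\mi(V_1;Z|\cdot)$, $S+T>\mi(U_1,V_1;Z|\cdot)$ enter. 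The entropy bound then follows from the high-probability upper bound on $N$. Your route bypasses concentration entirely: you bound $\ent(L,K|G,\mathcal{D})\le\expec[\log N\mid G]\le\log\expec[N\mid G]\le\log\expec[N]-\log\Prob(G)$ via Jensen, and compute $\expec[N]$ directly by the conditional-typicality count. This is more elementary and, as you correctly observe, does not actually consume the three rate hypotheses in the derivation; they are only needed for non-vacuity (compatibility with $\Prob(G)\to 1$). What the paper's approach buys is a high-probability bound on $N$ itself, which is a stronger intermediate statement than control of $\expec[N]$; what your approach buys is a shorter proof that avoids the dependency issue among the $E_{\ell,k}$ altogether. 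Both proofs implicitly rely on the Markov relations $U_1-U_0-(Q,V_0,V_1)$ and $V_1-V_0-(Q,U_0,U_1)$ induced by the codebook generation, which you flag explicitly and which the paper uses when bounding the Bernoulli parameter $\alpha$.
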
 
\begin{figure}
\centering
\includegraphics[width=\Figwidth]{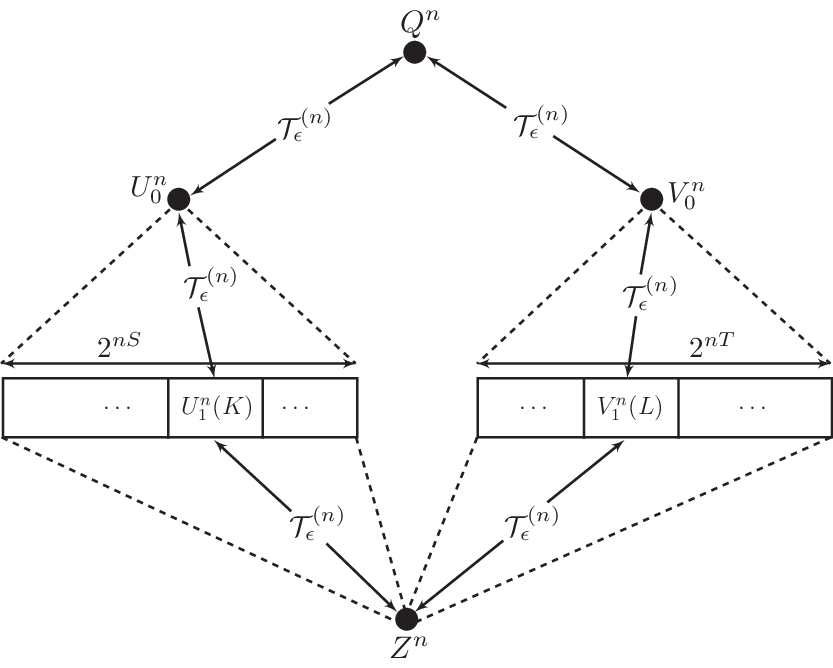}
\caption{Structure of Lemma~\ref{lemma1}: subject to jointly typical sequences $(Q^n,U_0^n,V_0^n,U_1^n(K),V_1^n(L),Z^n)$, finding a bound on the conditional entropy of $(K,L)$, thus implicitly bounding the number of sequence pairs that can be jointly typical with $(Q^n,Z^n)$ from codebooks with certain size.}

\label{LemmaFig}
\end{figure}

The proof is provided in Appendix~\ref{lemmaproof}.
This result is related to, and contains,~\cite[Lemma~1]{ChiaElGamal}. 
In particular,~\cite{ChiaElGamal} considers a single-input channel and explores the properties of codebooks driven by this input, while observing an output $Z$. In contrast, this paper's Lemma~\ref{lemma1} develops a corresponding result for a {\em multiple-access channel} with respect to $Z$,  motivated by the two-transmitters present in the model of this paper. This accounts for the new features of our Lemma~\ref{lemma1}, namely three rate constraints instead of one, as well as monitoring the entropy of two index random variables instead of one. Furthermore, the present result has one additional layer of conditioning to allow for indirect decoding of multiple confidential messages in the sequel, while in~\cite{ChiaElGamal} only one confidential message is decoded. 
\begin{remark}
In addition to establishing the main results of this paper, Lemma~\ref{lemma1} also has broader implications on the necessity of prefixing in multi-transmitter secrecy problems~\cite{ITSliang} and deriving the minimum amount of randomness needed to achieve secrecy. Csisz\'{a}r and K\"{o}rner introduced prefixing in~\cite{BCC:IT78} to expand the achievable rate region of the non-degraded broadcast channel with confidential messages, a technique that was subsequently used in essentially the same manner in multi-transmitter settings. Subsequently, Chia and El~Gamal showed that in a single-transmitter wiretap channel, prefixing can be replaced with superposition coding \cite{ChiaElGamal}. Appendix~\ref{thMAWCproof} extends this concept to a multi-transmitter setting and presents an achievability technique for the multiple access wiretap channel that utilizes minimal randomness and matches the best known achievable rates without prefixing. 
\end{remark}


\begin{theorem}
\label{thper}
An inner bound on the secrecy capacity region of the two-transmitter two-receiver channel with confidential messages is given by the set of non-negative rate pairs $(R_1,R_2)$ such that
\begin{align*}
    &R_1<\mi(U_0,U_1;Y_1|Q,V_0,V_1)-\mi(U_0;Z|Q)-\mi(U_1;Z|U_0,V_0)\\
    &R_1<\mi(U_0,U_2;Y_2|Q,V_0,V_2)-\mi(U_0;Z|Q)-\mi(U_2;Z|U_0,V_0)\\
    &R_1<\mi(U_0,U_1,V_1;Y_1|Q,V_0)-\mi(U_0;Z|Q)-\mi(U_1,V_1;Z|U_0,V_0)\\
    &R_1<\mi(U_0,U_2,V_2;Y_2|Q,V_0)-\mi(U_0;Z|Q)-\mi(U_2,V_2;Z|U_0,V_0)\\
    &R_2<\mi(V_0,V_1;Y_1|Q,U_0,U_1)-\mi(V_0;Z|Q)-\mi(V_1;Z|U_0,V_0)\\
    &R_2<\mi(V_0,V_2;Y_2|Q,U_0,U_2)-\mi(V_0;Z|Q)-\mi(V_2;Z|U_0,V_0)\\
    &R_2<\mi(U_1,V_0,V_1;Y_1|Q,U_0)-\mi(V_0;Z|Q)-\mi(U_1,V_1;Z|U_0,V_0)\\
    &R_2<\mi(U_2,V_0,V_2;Y_2|Q,U_0)-\mi(V_0;Z|Q)-\mi(U_2,V_2;Z|U_0,V_0)\\
    &R_1+R_2<\mi(U_0,U_1,V_0,V_1;Y_1|Q)-\mi(U_0,U_1,V_0,V_1;Z|Q)\\
    &R_1+R_2<\mi(U_0,U_2,V_0,V_2;Y_2|Q)-\mi(U_0,U_2,V_0,V_2;Z|Q)\\
    &R_1+R_2<\mi(U_0,U_1;Y_1|Q,V_0,V_1)+\mi(U_1,V_0,V_1;Y_1|Q,U_0)\nonumber\\
    &\quad-\mi(U_0,U_1,V_0,V_1;Z|Q)-\mi(U_1;Z|U_0,V_0)\\
    &R_1+R_2<\mi(U_0,U_1;Y_1|Q,V_0,V_1)+\mi(V_0,V_2;Y_2|Q,U_0,U_2)\nonumber\\
    &\quad-\mi(U_0,V_0;Z|Q)-\mi(U_1;Z|U_0,V_0)-\mi(V_2;Z|U_0,V_0)\\
    &R_1+R_2<\mi(U_0,U_1;Y_1|Q,V_0,V_1)+\mi(U_2,V_0,V_2;Y_2|Q,U_0)\nonumber\\
    &\quad-\mi(U_1;Z|U_0,V_0)-\mi(U_0,U_2,V_0,V_2;Z|Q)\\
    &R_1+R_2<\mi(V_0,V_1;Y_1|Q,U_0,U_1)+\mi(U_0,U_1,V_1;Y_1|Q,V_0)\nonumber\\
    &\quad-\mi(U_0,U_1,V_0,V_1;Z|Q)-\mi(V_1;Z|U_0,V_0)\\
    &R_1+R_2<\mi(V_0,V_1;Y_1|Q,U_0,U_1)+\mi(U_0,U_2;Y_2|Q,V_0,V_2)\nonumber\\
    &\quad-\mi(U_0,V_0;Z|Q)-\mi(V_1;Z|U_0,V_0)-\mi(U_2;Z|U_0,V_0)\\
    &R_1+R_2<\mi(V_0,V_1;Y_1|Q,U_0,U_1)+\mi(U_0,U_2,V_2;Y_2|Q,V_0)\nonumber\\
    &\quad-\mi(V_1;Z|U_0,V_0)-\mi(U_0,U_2,V_0,V_2;Z|Q)\\
    &R_1+R_2<\mi(U_0,U_2;Y_2|Q,V_0,V_2)+\mi(U_1,V_0,V_1;Y_1|Q,U_0)\nonumber\\
    &\quad-\mi(U_0,U_1,V_0,V_1;Z|Q)-\mi(U_2;Z|U_0,V_0)\\
    &R_1+R_2<\mi(U_0,U_2;Y_2|Q,V_0,V_2)+\mi(U_2,V_0,V_2;Y_2|Q,U_0)\nonumber\\
    &\quad-\mi(U_0,U_2,V_0,V_2;Z|Q)-\mi(U_2;Z|U_0,V_0)\\
    &R_1+R_2<\mi(V_0,V_2;Y_2|Q,U_0,U_2)+\mi(U_0,U_2,V_2;Y_2|Q,V_0)\nonumber\\
    &\quad-\mi(U_0,U_2,V_0,V_2;Z|Q)-\mi(V_2;Z|U_0,V_0)\\
    &R_1+R_2<\mi(V_0,V_2;Y_2|Q,U_0,U_2)+\mi(U_0,U_1,V_1;Y_1|Q,V_0)\nonumber\\
    &\quad-\mi(U_0,U_1,V_0,V_1;Z|Q)-\mi(V_2;Z|U_0,V_0)\\
    &R_1+R_2<\mi(U_0,U_1,V_1;Y_1|Q,V_0)+\mi(U_1,V_0,V_1;Y_1|Q,U_0)\nonumber\\
    &\quad-\mi(U_0,V_0;Z|Q)-2\mi(U_1,V_1;Z|U_0,V_0)\\
    &R_1+R_2<\mi(U_0,U_1,V_1;Y_1|Q,V_0)+\mi(U_2,V_0,V_2;Y_2|Q,U_0)\nonumber\\
    &\quad-\mi(U_0,V_0;Z|Q)-\mi(U_1,V_1;Z|U_0,V_0)-\mi(U_2,V_2;Z|U_0,V_0)\\
    &R_1+R_2<\mi(U_1,V_0,V_1;Y_1|Q,U_0)+\mi(U_0,U_2,V_2;Y_2|Q,V_0)\nonumber\\
    &\quad-\mi(U_0,U_1,V_0,V_1;Z|Q)-\mi(U_2,V_2;Z|U_0,V_0)\\
    &R_1+R_2<\mi(U_0,U_2,V_2;Y_2|Q,V_0)+\mi(U_2,V_0,V_2;Y_2|Q,U_0)\nonumber\\
    &\quad-\mi(U_0,V_0;Z|Q)-2\mi(U_2,V_2;Z|U_0,V_0)
\end{align*}
for some
\begin{align}
&p(q)p(u_0|q)p(u_1,u_2|u_0)p(v_0|q)p(v_1,v_2|v_0)\nonumber\\
&p(x_1|u_0,u_1,u_2)p(x_2|v_0,v_1,v_2)p(y_1,y_2,z|x_1,x_2),
\label{distachiperf}
\end{align}such that
\begin{align}
&\mi(U_1,U_2,V_1,V_2;Z|U_0,V_0) \leq  \mi(U_1,V_1;Z|U_0,V_0) \nonumber\\
& + \mi(U_2,V_2;Z|U_0,V_0) - \mi(U_1;U_2|U_0) - \mi(V_1;V_2|V_0).
\label{RAC}
\end{align}
\end{theorem}

\begin{figure}
\centering
\includegraphics[width=\Figwidth]{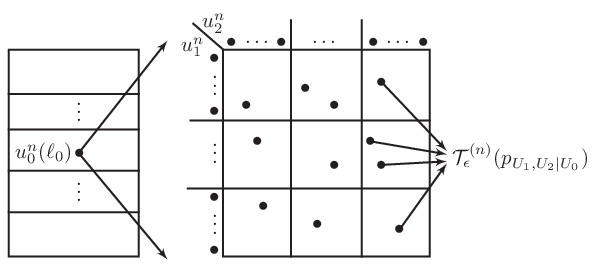}
\caption{Coding scheme for the first transmitter}
\label{CSF}
\end{figure}

The proof uses superposition coding, Wyner's wiretap coding, Marton coding, as well as indirect decoding. The details of the proof are provided in Appendix~\ref{thperproof}.

This result covers several known earlier results:
\begin{itemize}
\item
By setting $Z=\emptyset$, $U_0=U_1=U_2=X_1$, and $V_0=V_1=V_2=X_2$, the result in Theorem~\ref{thper} reduces to the capacity region of compound multiple access channel discussed in \cite{Ahlswede}.
\item
By setting $Y_2= \emptyset$ (or $Y_1= \emptyset$), $U_0=U_1=U_2=X_1$ and $V_0=V_1=V_2=X_2$, the result in Theorem~\ref{thper} reduces to the achievable rate region of multiple access wiretap channel without common message \cite{GMAWC,GMAWCJamming,YassaeeMAWC}.
\item
By setting $X_2=\emptyset$ (or $X_1=\emptyset$), $U_0=U_1=U_2$, and $Y_2=\emptyset$ (or $Y_1=\emptyset$), the result in Theorem~\ref{thper} reduces to the capacity region of broadcast channel with confidential message \cite[Corollary~2]{CsiszarKorner}.
\item
By setting $X_2=\emptyset$ (or $X_1=\emptyset$), the result in Theorem~\ref{thper} reduces to the achievable rate region for two-receiver, one-eavesdropper wiretap channel presented in \cite[Theorem~1]{ChiaElGamal}.
\end{itemize}
\begin{remark}
\label{RemRed}
By doing some algebraic manipulation we can  show that the constraint in \eqref{RAC} holds only if
\begin{align}
\mi(U_1,V_1;U_2,V_2|U_0,V_0,Z) = 0.
\label{EXCE}
\end{align}Intuitively speaking, \eqref{EXCE} shows that the Marton coding codebooks remain independent even if the eavesdropper has access to the the cloud centers.
\end{remark}


\begin{corollary}
\label{AchiCoro}
An inner bound on the secrecy capacity region of degraded two-transmitter two-receiver channel with confidential messages (Definition~\ref{DegradedDefi}) is given by the set of non-negative rate pairs $(R_1,R_2)$ such that
\begin{align}
R_1 &\le \mi(U_0;Y_2|V_0,Q) - \mi(U_0;Z|Q)\\
R_2 &\le \mi(V_0;Y_2|U_0,Q) - \mi(V_0;Z|Q)\\
R_1 + R_2 &\le \mi(U_0,V_0;Y_2|Q)-\mi(U_0;Z|Q)-\mi(V_0;Z|Q)
\end{align}
for some
\begin{align}
\label{distachiperfdeg}
p(q)p(u_0|q)p(v_0|q)p(x_1|u_0)p(x_2|v_0).
\end{align}
\end{corollary}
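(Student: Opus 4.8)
The plan is to obtain Corollary~\ref{AchiCoro} from Theorem~\ref{thper} by collapsing the Marton layers and then exploiting the degraded structure. First I would specialize Theorem~\ref{thper} to the degenerate auxiliaries $U_1=U_2=U_0$ and $V_1=V_2=V_0$. With this choice the test distribution \eqref{distachiperf} reduces to $p(q)p(u_0|q)p(v_0|q)p(x_1|u_0)p(x_2|v_0)$, matching \eqref{distachiperfdeg}, and the admissibility constraint \eqref{RAC} is met with equality: $\mi(U_1;U_2|U_0)=\mi(V_1;V_2|V_0)=0$ and each of the three mutual informations with $Z$ on the right-hand side of \eqref{RAC} also vanishes, so it reads $0\le 0$ (consistently, the reformulation \eqref{EXCE} of Remark~\ref{RemRed} holds vacuously). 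Thus every distribution of the form \eqref{distachiperfdeg} is feasible.

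Next I would substitute $U_1=U_2=U_0$ and $V_1=V_2=V_0$ into the eight rate inequalities \eqref{R11}--\eqref{SR4}. All the eavesdropper terms $\mi(U_1;Z|Q,U_0,V_0)$, $\mi(U_2;Z|Q,U_0,V_0)$, $\mi(V_1;Z|Q,U_0,V_0)$ and $\mi(V_2;Z|Q,U_0,V_0)$ become zero, while the receiver terms contract, e.g.\ $\mi(U_0,U_1;Y_1|Q,V_0,V_1)=\mi(U_0;Y_1|Q,V_0)$. What remains is the region described by $R_1\le\mi(U_0;Y_j|Q,V_0)-\mi(U_0;Z|Q)$ and $R_2\le\mi(V_0;Y_j|Q,U_0)-\mi(V_0;Z|Q)$ for $j\in\{1,2\}$, the sum bounds $R_1+R_2\le\mi(U_0,V_0;Y_j|Q)-\mi(U_0,V_0;Z|Q)$ for $j\in\{1,2\}$, and two further sum bounds inherited from \eqref{SR3}--\eqref{SR4}.

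The last step is to collapse these eight inequalities to the three of Corollary~\ref{AchiCoro} using the degraded model of Definition~\ref{DegradedDefi}. The degradedness supplies a Markov relation making $Y_2$ a stochastically degraded version of $Y_1$, so by the conditional data-processing inequality $\mi(U_0;Y_2|Q,V_0)\le\mi(U_0;Y_1|Q,V_0)$, $\mi(V_0;Y_2|Q,U_0)\le\mi(V_0;Y_1|Q,U_0)$ and $\mi(U_0,V_0;Y_2|Q)\le\mi(U_0,V_0;Y_1|Q)$; hence the $Y_1$-bounds \eqref{R11}, \eqref{R21}, \eqref{SR1} are dominated by their $Y_2$-counterparts, and the same inequalities make the two sum bounds coming from \eqref{SR3}--\eqref{SR4} redundant against the sum of the reduced \eqref{R12} and \eqref{R22}. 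Finally I would rewrite the penalty in the surviving $Y_2$ sum bound: the degraded model is arranged so that the eavesdropper's observation leaves the two superposition layers conditionally independent, $\mi(U_0;V_0|Q,Z)=0$, and since also $U_0\perp V_0\mid Q$ this yields $\mi(U_0,V_0;Z|Q)=\mi(U_0;Z|Q)+\mi(V_0;Z|Q)$, turning the reduced \eqref{SR2} into the stated sum bound. Exactly the three inequalities of the corollary are left, completing the argument.

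The substitution and cancellations are routine; the crux is the last step, namely extracting from Definition~\ref{DegradedDefi} the two structural facts used above — the $Y_1$-to-$Y_2$ data-processing relation and the conditional independence $\mi(U_0;V_0|Q,Z)=0$ — and verifying that together with $U_0\perp V_0\mid Q$ they force all eight constraints of Theorem~\ref{thper} down to the three stated ones without loss (in particular that the decoupled secrecy penalty $\mi(U_0;Z|Q)+\mi(V_0;Z|Q)$ is legitimate here, which it is not for a generic channel).
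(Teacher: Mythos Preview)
Your specialization $U_1=U_2=U_0$, $V_1=V_2=V_0$ and the use of degradedness to discard the $Y_1$-constraints is exactly the paper's route (its entire proof is the single sentence ``set $U_0=U_1=U_2$, $V_0=V_1=V_2$ and use that the channel is degraded''). The gap is in your very last step: the degraded structure of Definition~\ref{DegradedDefi} does \emph{not} force $\mi(U_0;V_0|Q,Z)=0$. Degradedness is only the Markov chain $(X_1,X_2)\to Y_1\to Y_2\to Z$; it says nothing about how $Z$ couples the two inputs. For a concrete counterexample take $Q$ constant, $U_0=X_1$ and $V_0=X_2$ independent uniform bits, and $Y_1=Y_2=Z=X_1\oplus X_2$: the channel is (trivially) degraded, yet conditioned on $Z$ the variables $U_0$ and $V_0$ determine each other, so $\mi(U_0;V_0|Z)=1$. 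Hence the identity $\mi(U_0,V_0;Z|Q)=\mi(U_0;Z|Q)+\mi(V_0;Z|Q)$ cannot be deduced from degradedness.

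Since for $U_0\perp V_0\mid Q$ one always has $\mi(U_0,V_0;Z|Q)=\mi(U_0;Z|Q)+\mi(V_0;Z|Q)+\mi(U_0;V_0|Q,Z)\ge\mi(U_0;Z|Q)+\mi(V_0;Z|Q)$, the specialized \eqref{SR2} actually delivers the \emph{tighter} sum constraint $R_1+R_2\le\mi(U_0,V_0;Y_2|Q)-\mi(U_0,V_0;Z|Q)$, which describes a \emph{smaller} achievable set than the corollary's printed sum bound. In other words, what you can legitimately extract from Theorem~\ref{thper} plus degradedness is the three-line region with the joint penalty $\mi(U_0,V_0;Z|Q)$ in the sum bound (the same form that appears in the outer bounds of Theorems~\ref{outerdegthm} and~\ref{outerdegswithm}); the decoupled penalty $\mi(U_0;Z|Q)+\mi(V_0;Z|Q)$ does not follow. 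The paper's one-line proof does not address this discrepancy either.
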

\begin{proof}
The proof follows from Theorem~\ref{thper} by setting $U_0=U_1=U_2$ and $V_0=V_1=V_2$ and considering the fact that the channel is degraded.
\end{proof}

\section{An Outer Bound for the Degraded Model}
\label{outerssec}
We develop an outer bound for the degraded version of the model and provide an example in which it meets the inner bound of Theorem~\ref{thper}.

 \begin{definition}
\label{DegradedDefi}
 The degraded two-transmitter two-receiver channel with confidential messages obeys:
 \begin{equation}
\label{DegradedDist}
 p(y_1,y_2,z|x_1,x_2) = p(y_1|x_1,x_2)p(y_2|y_1)p(z|y_2).
 \end{equation}
 \end{definition}
 \begin{theorem}
\label{outerdegthm}
The secrecy capacity region for the degraded two-transmitter two-receiver channel with confidential messages is included in the set of rate pairs $(R_1,R_2)$ satisfying
 \begin{align} 
 R_1 &\le \mi(U_0;Y_2|Q) - \mi(U_0;Z|Q),\label{DR1O}\\
 R_2 &\le \mi(V_0;Y_2|Q) - \mi(V_0;Z|Q),\label{DR2O}\\
 R_1 + R_2 &\le \mi(U_0,V_0;Y_2|Q) - \mi(U_0,V_0;Z|Q),\label{DR1R2O}
 \end{align}
 for some joint distribution
 \begin{align}
 p(q)p(u_0,v_0|q)p(x_1|u_0)p(x_2|v_0).
\label{outerdegdist}
 \end{align}
 \end{theorem}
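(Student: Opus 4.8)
\emph{Overall approach.} I would prove this by a standard weak converse: a single-letterization that uses only the stronger legitimate receiver (the one observing $Y_2$) together with the secrecy constraint, exploiting the Markov chain $(X_1^n,X_2^n)-Y_1^n-Y_2^n-Z^n$ forced by \eqref{DegradedDist} to discard $Y_1^n$ and to fix the form of the single-letter distribution (it suffices to treat the weak-secrecy region, since strong-secrecy achievability implies weak-secrecy achievability). First I would fix a sequence of codes with $M_{i,n}\ge 2^{nR_i}$, $P_e\to0$, and $\frac{1}{n}\mi(W_1,W_2;Z^n)\to0$. Fano's inequality applied to $\phi_2$ gives $\ent(W_1,W_2|Y_2^n)\le n\epsilon_n$ with $\epsilon_n\to0$, hence $\ent(W_i|Y_2^n)\le n\epsilon_n$; combining this with $nR_i\le\ent(W_i)$, the independence $\ent(W_1,W_2)=\ent(W_1)+\ent(W_2)$, and $\mi(W_i;Z^n)\le\mi(W_1,W_2;Z^n)$, I would obtain, for some $\delta_n\to0$,
\begin{align*}
nR_i&\le\mi(W_i;Y_2^n)-\mi(W_i;Z^n)+n\delta_n,\quad i=1,2,\\
n(R_1+R_2)&\le\mi(W_1,W_2;Y_2^n)-\mi(W_1,W_2;Z^n)+n\delta_n.
\end{align*}

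\emph{Single-letterization.} Next I would apply the Csisz\'ar sum identity to each difference $\mi(A;Y_2^n)-\mi(A;Z^n)$, rewriting it as $\sum_{i=1}^n\big[\mi(A;Y_{2,i}|Q_i)-\mi(A;Z_i|Q_i)\big]$ with $Q_i=(Y_2^{i-1},Z_{i+1}^n)$. Then I would realize each (possibly stochastic) encoder $f_j$ as a deterministic map of $W_j$ and an independent local randomness $M_j$, and enlarge the auxiliary in each chain by that randomness ($W_1\mapsto(W_1,M_1)$ for the $R_1$ chain, $W_2\mapsto(W_2,M_2)$ for the $R_2$ chain, $(W_1,W_2)\mapsto(W_1,M_1,W_2,M_2)$ for the sum chain). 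This does not weaken the bounds, because the extra per-letter term equals $\mi(M_1;Y_{2,i}|W_1,Q_i)-\mi(M_1;Z_i|W_1,Q_i)\ge0$, the sign coming from the Markov chain $M_1-Y_{2,i}-Z_i$ conditioned on $(W_1,Q_i)$ (a consequence of degradedness, since conditioned on $Y_{2,i}$ the variable $Z_i$ is independent of $(M_1,W_1,Q_i)$), and symmetrically for $M_2$ and for the pair. Finally, with $T$ uniform on $\llbracket 1,n \rrbracket$ and independent of everything, I would set $Q=(T,Q_T)$, $U_0=(W_1,M_1,Q)$, $V_0=(W_2,M_2,Q)$, $X_1=X_{1,T}$, $X_2=X_{2,T}$, $Y_2=Y_{2,T}$, $Z=Z_T$, so that the three displays become the inequalities \eqref{DR1O}, \eqref{DR2O}, \eqref{DR1R2O} up to $\delta_n\to0$.

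\emph{The distribution and the main obstacle.} What remains is to verify that $p(q,u_0,v_0,x_1,x_2,y_2,z)$ factors as in \eqref{outerdegdist} followed by the degraded channel, and this is the only delicate point. Since the channel is a multiple-access channel, putting a channel output into an auxiliary couples $X_{1,\cdot}$ with $X_{2,\cdot}$ across time indices, so the naive choice $U_{0,i}=(W_1,Q_i)$ generally fails the product structure $p(x_1|u_0)p(x_2|v_0)$ once the encoders are stochastic; absorbing $M_1,M_2$ into $U_0,V_0$ repairs this, because $X_1$ is then a deterministic function of $(T,W_1,M_1)\subseteq U_0$ and $X_2$ of $(T,W_2,M_2)\subseteq V_0$, which gives $p(x_1|u_0,q,v_0)=p(x_1|u_0)$ and $p(x_2|v_0,q,u_0)=p(x_2|v_0)$; memorylessness, together with the fact that $Y_{2,i}$ (resp.\ $Z_i$) depends on $(Q_i,U_{0,i},V_{0,i})$ only through $(X_{1,i},X_{2,i})$ (resp.\ $Y_{2,i}$), then yields the factors $p(y_2|x_1,x_2)p(z|y_2)$. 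I would close with a routine Fenchel--Eggleston--Carath\'eodory argument bounding $|\mathcal Q|$, $|\mathcal U_0|$, $|\mathcal V_0|$. The hard part is thus not any single estimate but getting the single-letter joint into \emph{exactly} the claimed product form; the crucial move is the $M_1,M_2$-enlargement, whose legitimacy rests on degradedness, while everything else is bookkeeping.
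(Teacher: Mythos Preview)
Your proof proposal is correct. The paper itself omits the proof of this theorem, deferring to the conference version~\cite[Section~IV]{AllertonPaper}, so a line-by-line comparison is not possible from the present document. Your approach---Fano plus the secrecy leakage to obtain the multi-letter differences $\mi(A;Y_2^n)-\mi(A;Z^n)$, the Csisz\'ar sum identity with $Q_i=(Y_2^{i-1},Z_{i+1}^n)$, and a time-sharing variable $T$---is the standard Csisz\'ar--K\"orner converse template, and every step you describe goes through. In particular, your handling of the one genuine subtlety is right: the enlargement $W_1\mapsto(W_1,M_1)$, $W_2\mapsto(W_2,M_2)$ with the encoders' private randomness is what forces the product form $p(x_1|u_0)p(x_2|v_0)$ (since $X_{j,T}$ then becomes a deterministic function of $U_0$ or $V_0$), and the legitimacy of this enlargement for an \emph{outer} bound follows exactly from the degradedness Markov chain $M_j-(W_j,Q_i,Y_{2,i})-Z_i$ that you invoke. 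The remaining Markov checks---that $(Q,U_0,V_0)-(X_1,X_2)-Y_2$ and $Z-Y_2-(\text{everything else})$ hold with your identifications---are immediate from memorylessness and the degraded structure $p(z|y_2)$. This is almost certainly the same argument the authors use in the conference version; there is no more elementary route to the stated factorization.
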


\begin{proof}
\label{outerdegproof}
The proof is provided by the authors in~\cite[Section~IV]{AllertonPaper} and is omitted here for brevity.
\end{proof}

{\em {Example (Degraded Switch Model):}}
\label{firstexample}
We consider an example of the two-transmitter two-receiver channel where the first legitimate receiver has access to the noisy version of each of the two transmitted values in a time-sharing (switched) manner, without interference from the other transmitter (Fig.~\ref{fig2}). The second legitimate receiver has access to a noisy version of the first receiver, and the eavesdropper has access to a noisy version of the second receiver. 
The switch channel state information is made available to all terminals. In this model the channel outputs are as follows:
 \begin{align}
 y'_1&=(y_1,s),
\label{NewOutp1}\\
 y'_2&=(y_2,s),
\label{NewOutp2}\\
 z'&=(z,s).
\label{NewOutp3}
 \end{align}
This model consists of a channel with states that are causally available at both the encoders and decoders.

The statistics of the channel, conditioned on the switch state, are expressed as follows:
\begin{align}
p(&y'_1,y'_2,z|x_1,,x_2,s) \twocolbreak
 =p(y_1|x_1,x_2,s)\, p(y_2|y_1,s)\, p(z|y_2,s)
\end{align}


The switch model describes, e.g., frequency hopping over two frequencies~\cite{LiuMaric}. 
The state (switch) is a binary random variable that chooses between listening to the Transmitter~1, with probability $\tau$, and listening to the Transmitter~2, with probability $1-\tau$, independently at each time slot. We further assume the state is i.i.d.\ across time.
 \begin{figure}
 \centering
 \includegraphics[width=\Figwidth]{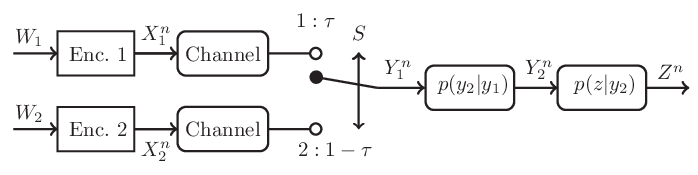}
 \caption{Degraded switch model}
 
\label{fig2}
 \end{figure}
 \begin{align}
 p(y_1|x_1,x_2,s)={} &p(y_1|x_1){\indic{1}}\{s=1\} + p(y_1|x_2){\indic{1}}\{s=2\}\nonumber
 \end{align}
where $\indic{1}$ is the indicator function.
 \begin{theorem}
\label{outerdegswithm}
The secrecy capacity region for the degraded switch two-transmitter two-receiver channel with confidential messages, is given by the set of rate pairs $(R_1,R_2)$ satisfying
 \begin{align}
 R_1 &\le \mi(U_0;Y'_2|V_0,Q) - \mi(U_0;Z'|Q),\label{R1O}\\
 R_2 &\le \mi(V_0;Y'_2|U_0,Q) - \mi(V_0;Z'|Q),\label{R2O}\\
 R_1 + R_2 &\le \mi(U_0,V_0;Y'_2|Q) - \mi(U_0,V_0;Z'|Q),\label{R1R2O}
 \end{align}
 for some joint distribution
 \begin{align}
 p(q)p(u_0|q)p(v_0|q)p(x_1|u_0)p(x_2|v_0).
\label{outerdegswidist}
 \end{align}
 \end{theorem}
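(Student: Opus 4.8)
\emph{Overall plan.} Since the statement is a full characterization, the plan is to supply a matching inner and outer bound. The inner bound will come directly from Corollary~\ref{AchiCoro} applied to the switch channel; the outer bound will come from conditioning on the switch sequence $S^n$, which decouples the problem into two statistically independent degraded wiretap channels.

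\emph{Achievability.} First I would observe that, when the encoders ignore the state, the switch model --- with channel outputs $y_1'=(y_1,s)$, $y_2'=(y_2,s)$, $z'=(z,s)$ and the i.i.d.\ state regarded as internal channel randomness --- is a degraded two-transmitter two-receiver channel in the sense of Definition~\ref{DegradedDefi}, since $p(y_1'|x_1,x_2)=p(s)p(y_1|x_1,x_2,s)$, $p(y_2'|y_1')=p(y_2|y_1,s)$ and $p(z'|y_2')=p(z|y_2,s)$ form a Markov chain $Y_1'-Y_2'-Z'$. Corollary~\ref{AchiCoro} then gives achievability of every $(R_1,R_2)$ satisfying \eqref{R1O}, \eqref{R2O} and $R_1+R_2\le \mi(U_0,V_0;Y_2'|Q)-\mi(U_0;Z'|Q)-\mi(V_0;Z'|Q)$ with respect to the distribution \eqref{outerdegswidist}. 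It then remains only to check that $\mi(U_0;Z'|Q)+\mi(V_0;Z'|Q)=\mi(U_0,V_0;Z'|Q)$ for this channel: conditioned on $\{S=1\}$, $Z$ is a stochastic function of $X_1$, hence of $U_0$, so $V_0$ is independent of $(U_0,Z)$ given $Q$; symmetrically under $\{S=2\}$; averaging over $S$ yields $\mi(U_0;V_0|Q,Z')=0$, which is the claimed identity given $\mi(U_0;V_0|Q)=0$. Hence the region of Corollary~\ref{AchiCoro} specialized to this channel is exactly \eqref{R1O}--\eqref{R1R2O}, and no causal use of the state at the encoders is needed.

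\emph{Converse.} The key structural fact is that, conditioned on $S^n$, the channel decouples: on the coordinates with $S_i=1$ it is the memoryless degraded wiretap channel $X_1\to Y_1\to Y_2\to Z$ (the input $X_2$ plays no role there), and on the coordinates with $S_i=2$ it is $X_2\to Y_1\to Y_2\to Z$; moreover $X_1^n$ is a (possibly randomized, causally state-dependent) function of $(W_1,S^n)$ only and $X_2^n$ of $(W_2,S^n)$ only, with $W_1\perp W_2$, so given $S^n$ the two sub-problems are statistically independent. Consequently $W_1$ is conveyed to the second receiver and hidden from the eavesdropper only through the $S_i=1$ coordinates, and symmetrically for $W_2$. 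I would then run the standard degraded-wiretap converse on each branch: Fano applied to receiver~$2$ (the degraded legitimate receiver) gives $\log M_{1,n}\le \mi(W_1;Y_2^n\mid S^n)+n\epsilon_n$; subtracting the secrecy penalty $\tfrac1n\mi(W_1,W_2;Z'^n)\to 0$ (note $Z'^n=(Z^n,S^n)$ with $S^n\perp(W_1,W_2)$), single-letterizing the $S_i=1$ branch via the Csisz\'ar sum identity, and using that neither prefixing nor a time-sharing variable enlarges the secrecy capacity of a degraded wiretap channel, yields $\log M_{1,n}\le \expec[m_1]\,C_1^{\mathrm s}+n\epsilon_n'$, where $m_1$ is the random number of $S_i=1$ coordinates and $C_1^{\mathrm s}=\max_{p(x_1)}\big[\mi(X_1;Y_2|S=1)-\mi(X_1;Z|S=1)\big]$. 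Since $\expec[m_1]=\tau n$ this is $R_1\le \tau C_1^{\mathrm s}$, and symmetrically $R_2\le(1-\tau)C_2^{\mathrm s}$. To close the argument I would verify that this rectangle coincides with \eqref{R1O}--\eqref{R1R2O}: evaluated at the product distribution \eqref{outerdegswidist} and using the switch structure, $\mi(U_0;Y_2'|V_0,Q)-\mi(U_0;Z'|Q)$ collapses to $\tau\big[\mi(U_0;Y_2|Q,S=1)-\mi(U_0;Z|Q,S=1)\big]$, whose supremum over admissible distributions is $\tau C_1^{\mathrm s}$ (again because prefixing and time-sharing are useless for a degraded wiretap channel); symmetrically for $R_2$; and \eqref{R1R2O} then equals the sum of these two bounds, hence is redundant.

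\emph{Main difficulty.} The subtle point --- and the reason one cannot simply invoke the general degraded outer bound Theorem~\ref{outerdegthm} after substituting $Y_2\mapsto Y_2'$, $Z\mapsto Z'$ --- is that that bound carries a \emph{joint} auxiliary distribution $p(u_0,v_0|q)$, and correlating $U_0$ with $V_0$ can strictly inflate $\mi(U_0;Y_2'|Q)$ through the $S_i=2$ coordinates, even though no genuine code carries information about $W_1$ on those coordinates; thus Theorem~\ref{outerdegthm} is not tight here and the converse genuinely requires the conditional-decoupling argument (equivalently, a direct converse that keeps $W_1$ and $W_2$ apart). The work therefore lies in (i) establishing the independence of the two branches given $S^n$ under an arbitrary causally state-dependent stochastic encoder, with correct handling of the $o(1)$ terms when the two wiretap converses are averaged over the random state pattern, and (ii) the short but bookkeeping-heavy verification that the single-letter region \eqref{R1O}--\eqref{R1R2O} collapses to the time-shared wiretap rectangle, via $\mi(U_0;V_0|Q,Z')=\mi(U_0;V_0|Q,Y_2')=0$ and the irrelevance of the correlation between $U_0$ and $V_0$ for every mutual-information term appearing there.
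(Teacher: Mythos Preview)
Your proposal is sound. The achievability step is exactly right: Corollary~\ref{AchiCoro} applies because the switch channel (with the i.i.d.\ state folded into the outputs) is degraded in the sense of Definition~\ref{DegradedDefi}, and your computation $\mi(U_0;V_0|Q,Z')=0$ under the product law \eqref{outerdegswidist} correctly collapses the sum-rate constraint of Corollary~\ref{AchiCoro} to \eqref{R1R2O}. The converse via conditioning on $S^n$ is also valid: given $S^n=s^n$ the two transmitters feed disjoint coordinate sets through independent memoryless degraded wiretap channels, so the standard single-letter converse for the degraded wiretap channel on each branch yields $R_1\le\tau C_1^{\mathrm s}$ and $R_2\le(1-\tau)C_2^{\mathrm s}$; your verification that the single-letter region \eqref{R1O}--\eqref{R1R2O} over product distributions equals this rectangle (with \eqref{R1R2O} redundant) is correct.

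The paper itself does not reproduce the proof here, deferring it to the Allerton version, so a line-by-line comparison is not possible from this document. Your diagnosis of the ``main difficulty'' is accurate and worth emphasizing: a black-box application of Theorem~\ref{outerdegthm} to the switch channel is \emph{not} tight, because the joint auxiliary $p(u_0,v_0|q)$ permits choices such as $U_0=(X_1,X_2)$, $V_0=X_2$, which inflate the $R_1$ bound to $\tau C_1^{\mathrm s}+(1-\tau)C_2^{\mathrm s}$ while keeping the sum-rate at the same value---so the resulting outer region strictly contains the rectangle whenever $C_2^{\mathrm s}>0$. Any correct converse must therefore exploit the switch structure to keep the contributions of $W_1$ and $W_2$ separated, which is precisely what your conditional-decoupling argument accomplishes. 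Whether the Allerton proof proceeds by the same state-conditioning route or by a tailored auxiliary identification inside the Theorem~\ref{outerdegthm} argument, the essential content is the same.
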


\begin{proof}
\label{outerproof}
The proof is available in \cite[Section~IV]{AllertonPaper}.
\end{proof}

 \section{A General Outer Bound}
\label{Gouterssec}
We now develop a general outer bound for the model of Fig.~\ref{fig} and provide an example in which it meets the inner bound  of Theorem~\ref{thper}.

 \begin{theorem}
\label{Gouterthm}
The secrecy capacity region for the two-transmitter two-receiver channel with confidential messages is included in the set of rate pairs $(R_1,R_2)$ satisfying
\begin{align}
R_1 &\le \mi(U_0;Y_1,Y_2|Q) - \mi(U_0;Z|Q),\label{GR1O}\\
R_2 &\le \mi(V_0;Y_1,Y_2|Q) - \mi(V_0;Z|Q),\label{GR2O}\\
R_1 + R_2 &\le \mi(U_0,V_0;Y_1,Y_2|Q) - \mi(U_0,V_0;Z|Q),\label{GR1R2O}
\end{align}
for some joint distribution
\begin{align}
p(q)p(u_0,v_0|q)p(x_1|u_0)p(x_2|v_0).
\label{Gouterdist}
\end{align}
\end{theorem}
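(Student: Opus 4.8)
The plan is to run a Csisz\'ar--K\"orner-style converse adapted to the multiple-access input structure, with the pair $(Y_1,Y_2)$ playing the role of a genie-aided combined legitimate output. The reduction is immediate: since $\phi_1$ recovers $(W_1,W_2)$ from $Y_1^n$ and $\phi_2$ recovers $(W_1,W_2)$ from $Y_2^n$, a single decoder observing $(Y_1^n,Y_2^n)$ recovers $(W_1,W_2)$ as well, so Fano's inequality gives $\ent(W_1,W_2|Y_1^n,Y_2^n)\le n\epsilon_n$ with $\epsilon_n\to0$ whenever $P_e\to0$. I would then prove the three inequalities by treating, in turn, $W_1$, $W_2$, and the pair $(W_1,W_2)$ as ``the'' confidential message of a single two-sender wiretap channel with eavesdropper $Z$.

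First I would obtain the finite-blocklength bounds. Using $W_1\perp W_2$, write $nR_1\le\ent(W_1)=\ent(W_1|W_2)=\mi(W_1;Y_1^n,Y_2^n|W_2)+\ent(W_1|Y_1^n,Y_2^n,W_2)$, bound the last term by $n\epsilon_n$ via Fano, and then add and subtract $\mi(W_1;Z^n|W_2)$, using $\mi(W_1;Z^n|W_2)\le\mi(W_1,W_2;Z^n)\le n\delta_n$ with $\delta_n\to0$ by the weak-secrecy requirement~\eqref{Secrecy_Defi}. This gives $nR_1\le\mi(W_1;Y_1^n,Y_2^n|W_2)-\mi(W_1;Z^n|W_2)+n(\epsilon_n+\delta_n)$, and symmetrically for $R_2$; running the same steps with $(W_1,W_2)$ as a single message yields $n(R_1+R_2)\le\mi(W_1,W_2;Y_1^n,Y_2^n)-\mi(W_1,W_2;Z^n)+n(\epsilon_n+\delta_n)$.

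Next I would single-letterize each difference of the form $\mi(\cdot\,;Y_1^n,Y_2^n|\cdot)-\mi(\cdot\,;Z^n|\cdot)$ with the Csisz\'ar sum identity, in the telescoping pattern familiar from the broadcast-channel-with-confidential-message converse. After the manipulation the confidential-message term collapses to $\sum_{j}\big[\mi(U_{0,j};Y_{1,j},Y_{2,j}|Q_j)-\mi(U_{0,j};Z_j|Q_j)\big]$, with $U_{0,j}$ built from $W_1$ together with the past outputs $Y_1^{j-1},Y_2^{j-1}$ and the future eavesdropper output $Z_{[j+1:n]}$, with $V_{0,j}$ built analogously from $W_2$, and with $Q_j$ their common part. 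Introducing a uniform time-sharing variable $Q$ (absorbed into $U_0$ and $V_0$) and identifying $(Y_1,Y_2,Z)=(Y_{1,Q},Y_{2,Q},Z_Q)$ then reproduces the three mutual-information expressions of the statement up to the vanishing $\epsilon_n,\delta_n$ terms.

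The hard part will be the remaining structural check: that the single-letter joint distribution produced by this procedure can be taken of the required ``no-cooperation'' form $p(q)p(u_0,v_0|q)p(x_1|u_0)p(x_2|v_0)$, i.e.\ with $X_1$ depending only on $(U_0,Q)$, with $X_2$ depending only on $(V_0,Q)$, and with $X_1\perp X_2$ given $(U_0,V_0,Q)$. Here lies the genuine subtlety of a multi-transmitter converse: because both inputs jointly drive $(Y_1,Y_2,Z)$, conditioning the auxiliaries on channel outputs couples the two encoders' private randomizations (an ``explaining-away'' effect), so the naive auxiliaries above do not, on their own, satisfy the product constraint. Making the construction work calls for a more careful choice of the auxiliaries --- in particular, one builds the common part $Q$ so that, conditioned on it, the two encoders' randomizations are independent (for instance by incorporating the transmitters' own channel inputs at the relevant coordinates), while keeping the choice compatible with the Csisz\'ar-sum telescoping that produced the single-letter form. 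Once such a consistent choice is fixed, the proof is completed by the usual cardinality bound on $(Q,U_0,V_0)$ and the limit $n\to\infty$ (so that $\epsilon_n,\delta_n\to0$), which yields the claimed outer region.
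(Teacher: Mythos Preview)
The paper does not actually contain a proof of this theorem; it defers to the authors' conference paper~\cite{AllertonPaper}. So a line-by-line comparison is not possible from the provided source. That said, your high-level plan---genie-combine $(Y_1,Y_2)$ into a single legitimate output, apply Fano and the secrecy constraint to get the three multi-letter differences $\mi(\cdot;Y_1^n,Y_2^n|\cdot)-\mi(\cdot;Z^n|\cdot)$, and single-letterize via the Csisz\'ar sum identity with auxiliaries built from past legitimate outputs and future eavesdropper outputs---is exactly the standard template for this type of bound and is almost certainly what the conference proof does as well.

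You are right that the factorization $p(q)p(u_0,v_0|q)p(x_1|u_0)p(x_2|v_0)$ is the only non-routine point, and you have correctly diagnosed why: with $Q_j$ containing channel outputs, conditioning on $Q_j$ couples the two encoders' private randomness, so the naive choice $U_{0,j}=(W_1,Q_j)$, $V_{0,j}=(W_2,Q_j)$ does \emph{not} yield $X_{1,j}-U_{0,j}-(V_{0,j},X_{2,j})$. Where your write-up falls short of a proof is that you only gesture at a fix (``incorporate the transmitters' own channel inputs'') without carrying it out and, crucially, without checking that the enlarged auxiliaries still thread through the Csisz\'ar-sum telescoping \emph{and} the secrecy bookkeeping. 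Concretely, if you enlarge $U_{0,j}$ to contain $X_{1,j}$ (or $X_1^n$, or the encoder randomness), the Markov chain becomes trivial, but you must then re-derive the bound starting from $\ent(W_1|W_2)$ and show the extra mutual-information terms introduced on the $Y$-side are matched on the $Z$-side so that the difference is still controlled by the secrecy leakage; this is where the argument needs care, and your proposal stops just before it. In short: the strategy is right, the diagnosis of the hard step is right, but the resolution is asserted rather than executed.
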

\begin{proof}
\label{Gouterproof}
The proof is available in \cite[Section~V]{AllertonPaper}.
\end{proof}

 {\em {Example (Noiseless Switch Model):}}
\label{secondexample}
\begin{figure}
\centering
\includegraphics[width=8.5cm]{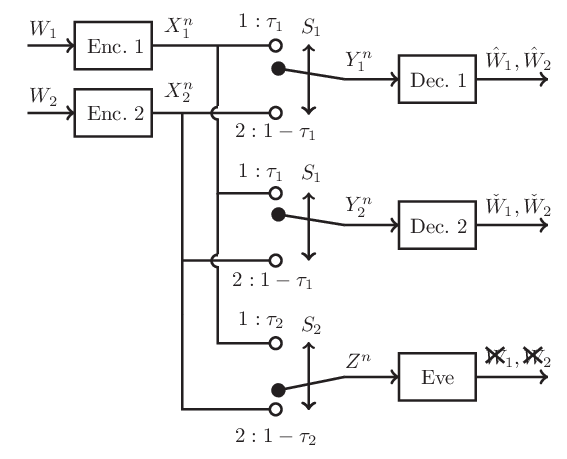}
\caption{Noiseless switch model}

\label{SKD3}
\end{figure}
This example is motivated by two transmitters operating on different spectral bands, while the receiving terminals may receive adaptively on one band at a time~\cite{LiuMaric}. The eavesdropper in our example has access to one noiseless interference-free transmitted value at a time. Here, it is assumed that both legitimate receivers operate according to a common random switch $s_1$ that is connected to Transmitter~1 with probability $\tau_1$ and to Transmitter~2 with probability $1-\tau_1$, and the eavesdropper operates according to another random switch $s_2$ that is connected to Transmitter~1 with probability $\tau_2$ and to Transmitter~2 with probability $1-\tau_2$. Aside from the switches, the channel is noiseless. Both receivers and the eavesdropper have access to their own switch state information.
Therefore the channel outputs are considered
\begin{align}
y'_1&=(y_1,s_1),
\label{NewOutp12}\\
y'_2&=(y_2,s_1),
\label{NewOutp22}\\
z'&=(z,s_2).
\label{NewOutp32}
\end{align}
Since $y_1=y_2$, we also have $y'_1=y'_2$.
\begin{theorem}
\label{outergeneralswithm}
The secrecy capacity region for the noiseless switch two-transmitter two-receiver channel with confidential messages is given by the set of rate pairs $(R_1,R_2)$ satisfying
\begin{align}
R_1 &\le (\tau_1-\tau_2)^{+}\ent(X_1),\label{R1OGSC}\\
R_2 &\le (\tau_2-\tau_1)^{+}\ent(X_2),\label{R2OGSC}
\end{align}
\end{theorem}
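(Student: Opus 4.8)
\textbf{Proof proposal for Theorem~\ref{outergeneralswithm}.}

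The plan is to prove matching inner and outer bounds. For the \emph{converse}, I would specialize the general outer bound of Theorem~\ref{Gouterthm} to the noiseless switch model. Since $y_1=y_2$ (hence $y'_1=y'_2=(x_{s_1},s_1)$ where $x_{s_1}$ denotes the input of the transmitter selected by $s_1$), the pair $(Y_1,Y_2)$ in \eqref{GR1O}--\eqref{GR1R2O} collapses to $Y'_1=(Y_1,S_1)$, and $Z=Z'=(Z_{\mathrm{raw}},S_2)$ with $Z_{\mathrm{raw}}$ equal to the input selected by $S_2$. First I would compute $\mi(U_0;Y'_1|Q)-\mi(U_0;Z'|Q)$ for the admissible distributions \eqref{Gouterdist}, in which $U_0\to X_1$ and $V_0\to X_2$ with $X_1,X_2$ conditionally independent given $Q$. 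Because the switches $S_1,S_2$ are independent of everything else and available on both sides, one gets $\mi(U_0;Y'_1|Q)=\tau_1\mi(U_0;X_1|Q)$ (only the $S_1=1$ slots carry information about $U_0$, which through the Markov chain only influences $X_1$) and similarly $\mi(U_0;Z'|Q)=\tau_2\mi(U_0;X_1|Q)$; hence $R_1\le(\tau_1-\tau_2)\mi(U_0;X_1|Q)\le(\tau_1-\tau_2)\ent(X_1|Q)\le(\tau_1-\tau_2)^{+}\ent(X_1)$, and symmetrically for $R_2$ using the third bound \eqref{GR1R2O} together with the conditional independence of $X_1,X_2$ to decouple the sum constraint (a standard Fano/identification-of-auxiliary step lets one drop $Q$ by choosing it optimally). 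The same argument gives $R_2\le(\tau_2-\tau_1)^{+}\ent(X_2)$; note that when $\tau_1>\tau_2$ the only constraint that binds on $R_2$ comes from the $Z$-dominates-$Y$ direction, forcing $R_2=0$, consistent with $(\tau_2-\tau_1)^{+}=0$.

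For \emph{achievability} I would invoke Corollary~\ref{AchiCoro} or, more directly, Theorem~\ref{thper} with $Q$ absorbing the switch state and $U_0=U_1=U_2=X_1$, $V_0=V_1=V_2=X_2$, taking $X_1,X_2$ independent and each uniform (or arbitrary) on its alphabet, independent of $S_1,S_2$. With the outputs being $Y'_1=Y'_2=(X_{S_1},S_1)$ and $Z'=(X_{S_2},S_2)$, the relevant mutual informations evaluate to $\mi(U_0;Y'_2|V_0,Q)=\tau_1\ent(X_1)$, $\mi(U_0;Z'|Q)=\tau_2\ent(X_1)$, and symmetrically for the $V_0$ terms, so the three inequalities of Corollary~\ref{AchiCoro} become $R_1\le(\tau_1-\tau_2)\ent(X_1)$, $R_2\le(\tau_2-\tau_1)\ent(X_2)$, and a redundant sum bound; choosing $X_i$ uniform and taking the positive part (when a difference is negative we simply set the corresponding rate to $0$, which is trivially achievable) yields exactly \eqref{R1OGSC}--\eqref{R2OGSC}. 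I would also check that the constraint \eqref{RAC} is satisfied: with $U_1=U_2,V_1=V_2$ the terms $\mi(U_1;U_2|U_0)$ and $\mi(V_1;V_2|V_0)$ vanish and both sides reduce to $\mi(U_1,V_1;Z|U_0,V_0)=0$ since $U_0,V_0$ already determine $U_1,V_1$, so \eqref{RAC} holds with equality.

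The main obstacle I anticipate is the bookkeeping in the converse: correctly handling the common randomness $S_1,S_2$ inside the single-letterization so that the switch states end up as a time-sharing factor $\tau_1$ (resp.\ $\tau_2$) multiplying a bona fide single-letter mutual information, and then verifying that the optimal auxiliary choice and the conditional independence $p(x_1|u_0)p(x_2|v_0)$ make the sum-rate bound \eqref{GR1R2O} non-binding so that only the two individual bounds survive. A secondary subtlety is the asymmetry created by the positive parts: one must argue that when $\tau_1\le\tau_2$ the secrecy requirement genuinely forces $R_1=0$ (the eavesdropper's switch is ``at least as good'' at catching Transmitter~1's symbols), which follows because in those slots the eavesdropper sees exactly what the receivers see; the clean way to see this is that $(\tau_1-\tau_2)\ent(X_1)$ is already the exact single-letter expression and the region is simply its intersection with the non-negative orthant. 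Once these two points are pinned down, the remaining computations are the routine evaluations sketched above.
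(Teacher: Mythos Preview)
Your achievability argument is essentially correct (though note that Corollary~\ref{AchiCoro} is stated only for the degraded channel of Definition~\ref{DegradedDefi}, and the noiseless switch model is \emph{not} degraded since $Z'$ is driven by an independent switch $S_2$; your fallback to Theorem~\ref{thper} with $U_0=U_1=U_2=X_1$, $V_0=V_1=V_2=X_2$ is the right fix, and \eqref{RAC} is indeed trivially satisfied).

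The converse, however, has a real gap. You claim $\mi(U_0;Y'_1|Q)=\tau_1\,\mi(U_0;X_1|Q)$, which would require $\mi(U_0;X_2|Q)=0$. But the admissible distribution in Theorem~\ref{Gouterthm} is $p(q)p(u_0,v_0|q)p(x_1|u_0)p(x_2|v_0)$, which allows $U_0$ and $V_0$ to be \emph{correlated} given $Q$; hence $U_0$ can carry information about $X_2$ through $V_0$. The correct evaluation is
\[
\mi(U_0;Y'_1|Q)-\mi(U_0;Z'|Q)=(\tau_1-\tau_2)\bigl[\mi(U_0;X_1|Q)-\mi(U_0;X_2|Q)\bigr],
\]
and the analogous expression for $R_2$. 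When $\tau_1>\tau_2$ this still bounds $R_1$ by $(\tau_1-\tau_2)\ent(X_1)$, but it does \emph{not} force $R_2\le 0$: take $Q$ constant, $U_0=V_0=X_1$ uniform, $X_2$ constant; then all three inequalities of Theorem~\ref{Gouterthm} collapse to $(\tau_1-\tau_2)\ent(X_1)$, so the outer bound admits $R_2=(\tau_1-\tau_2)\ent(X_1)>0$. Thus Theorem~\ref{Gouterthm} as a black box is strictly loose for this channel, and your converse cannot go through as written.

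To close the gap you need one of two additional ingredients. Either (i) revisit the converse \emph{proof} of Theorem~\ref{Gouterthm} and argue that, for this specific channel, the auxiliaries can be taken conditionally independent given $Q$ (exactly the tightening from \eqref{outerdegdist} to \eqref{outerdegswidist} that the paper performs when passing from Theorem~\ref{outerdegthm} to Theorem~\ref{outerdegswithm} in the degraded switch example); with $U_0\perp V_0\mid Q$ your computation becomes valid. Or (ii) give a direct argument: when $\tau_1>\tau_2$, the marginal channel $X_2^n\to Y'^n_1$ is an i.i.d.\ erasure channel with erasure probability $\tau_1$, which is stochastically degraded with respect to $X_2^n\to Z'^n$ (erasure probability $\tau_2<\tau_1$); since $W_2$ is encoded only in $X_2^n$ and $X_1^n$ is independent of $W_2$, this yields $\mi(W_2;Y'^n_1)\le \mi(W_2;Z'^n)$, and Fano plus the secrecy constraint force $R_2=0$. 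The symmetric argument handles $R_1$ when $\tau_1<\tau_2$. The paper defers the proof to~\cite{AllertonPaper}, but one of these two routes is what is actually required.
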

where $(x)^+=\max\{0,x\}$.
\begin{proof}
\label{SCouterproof}
The proof is available in \cite[Section~V]{AllertonPaper}.
\end{proof}
The capacity region in Theorem~\ref{outergeneralswithm} shows that transmitters can securely communicate to receivers as long as $\tau_1\ne\tau_2$.

\section{Strong Secrecy}
\label{AchivRateRegionStrong}
\begin{theorem}
\label{thstr}
An inner bound on the secrecy capacity region of the two-transmitter two-receiver channel with confidential messages is given by the set of non-negative rate pairs $(R_1,R_2)$ such that
\begin{align*}
&R_1<\mi(U_0,U_1;Y_1|Q,V_0,V_1)-\mi(U_0;Z|Q)-\mi(U_1;Z|U_0,V_0)\\
&R_1<\mi(U_0,U_2;Y_2|Q,V_0,V_2)-\mi(U_0;Z|Q)-\mi(U_2;Z|U_0,V_0)\\
&R_1<\mi(U_0,U_1,V_1;Y_1|Q,V_0)-\mi(U_0;Z|Q)-\mi(U_1,V_1;Z|U_0,V_0)\\
&R_1<\mi(U_0,U_2,V_2;Y_2|Q,V_0)-\mi(U_0;Z|Q)-\mi(U_2,V_2;Z|U_0,V_0)\\
&2R_1<\mi(U_0,U_1;Y_1|Q,V_0,V_1)+\mi(U_0,U_2;Y_2|Q,V_0,V_2)\\
&\quad-2\mi(U_0;Z|Q)-\mi(U_1,U_2;Z|U_0,V_0)-\mi(U_1;U_2|U_0)\\
&2R_1<\mi(U_0,U_1;Y_1|Q,V_0,V_1)+\mi(U_0,U_2,V_2;Y_2|Q,V_0)\\
&\quad-2\mi(U_0;Z|Q)-\mi(U_1,U_2,V_2;Z|U_0,V_0)-\mi(U_1;U_2|U_0)\\
&2R_1<\mi(U_0,U_2;Y_2|Q,V_0,V_2)+\mi(U_0,U_1,V_1;Y_1|Q,V_0)\\
&\quad-2\mi(U_0;Z|Q)-\mi(U_1,U_2,V_1;Z|U_0,V_0)-\mi(U_1;U_2|U_0)\\
&2R_1<\mi(U_0,U_1,V_1;Y_1|Q,U_0)+\mi(U_0,U_2,V_2;Y_2|Q,U_0)\\
&\quad-2\mi(U_0;Z|Q)-\mi(U_1,U_2,V_1,V_2;Z|U_0,V_0)-\mi(U_1,V_1;U_2,V_2|U_0,V_0)\\
&R_2<\mi(V_0,V_1;Y_1|Q,U_0,U_1)-\mi(V_0;Z|Q)-\mi(V_1;Z|U_0,V_0)\\
&R_2<\mi(V_0,V_2;Y_2|Q,U_0,U_2)-\mi(V_0;Z|Q)-\mi(V_2;Z|U_0,V_0)\\
&R_2<\mi(U_1,V_0,V_1;Y_1|Q,U_0)-\mi(V_0;Z|Q)-\mi(U_1,V_1;Z|U_0,V_0)\\
&R_2<\mi(U_2,V_0,V_2;Y_2|Q,U_0)-\mi(V_0;Z|Q)-\mi(U_2,V_2;Z|U_0,V_0)\\
&2R_2<\mi(V_0,V_1;Y_1|Q,U_0,U_1)+\mi(V_0,V_2;Y_2|Q,U_0,U_2)\\
&\quad-2\mi(V_0;Z|Q)-\mi(V_1,V_2;Z|U_0,V_0)-\mi(V_1;V_2|V_0)\\
&2R_2<\mi(V_0,V_1;Y_1|Q,U_0,U_1)+\mi(U_2,V_0,V_2;Y_2|Q,U_0)\\
&\quad-2\mi(V_0;Z|Q)-\mi(U_2,V_1,V_2;Z|U_0,V_0)-\mi(V_1;V_2|V_0)\\
&2R_2<\mi(V_0,V_2;Y_2|Q,U_0,U_2)+\mi(U_1,V_0,V_1;Y_1|Q,U_0)\\
&\quad-2\mi(V_0;Z|Q)-\mi(U_1,V_1,V_2;Z|U_0,V_0)-\mi(V_1;V_2|V_0)\\
&2R_2<\mi(U_1,V_0,V_1;Y_1|Q,U_0)+\mi(U_2,V_0,V_2;Y_2|Q,U_0)\\
&\quad-2\mi(V_0;Z|Q)-\mi(U_1,U_2,V_1,V_2;Z|U_0,V_0)-\mi(U_1,V_1;U_2,V_2|U_0,V_0)\\
&R_1+R_2<\mi(U_0,U_1;Y_1|Q,V_0,V_1)+\mi(V_0,V_2;Y_2|Q,U_0,U_2)\\
&\quad-\mi(U_0,V_0;Z|Q)-\mi(U_1,V_2;Z|U_0,V_0)\\
&R_1+R_2<\mi(U_0,U_1;Y_1|Q,V_0,V_1)+\mi(U_1,V_0,V_1;Y_1|Q,U_0)\\
&\quad-\mi(U_0,V_0;Z|Q)-\mi(U_1;Z|U_0,V_0)-\mi(U_1,V_1;Z|U_0,V_0)\\
&R_1+R_2<\mi(U_0,U_1;Y_1|Q,V_0,V_1)+\mi(U_2,V_0,V_2;Y_2|Q,U_0)\\
&\quad-\mi(U_0,V_0;Z|Q)-\mi(U_1,U_2,V_2;Z|U_0,V_0)-\mi(U_1;U_2|U_0)\\
&R_1+R_2<\mi(U_0,U_2;Y_2|Q,V_0,V_2)+\mi(V_0,V_1;Y_1|Q,U_0,U_1)\\
&\quad-\mi(U_0,V_0;Z|Q)-\mi(U_2,V_1;Z|U_0,V_0)\\
&R_1+R_2<\mi(U_0,U_2;Y_2|Q,V_0,V_2)+\mi(U_1,V_0,V_1;Y_1|Q,U_0)\\
&\quad-\mi(U_0,V_0;Z|Q)-\mi(U_1,U_2,V_1;Z|U_0,V_0)-\mi(U_1;U_2|U_0)\\
&R_1+R_2<\mi(U_0,U_2;Y_2|Q,V_0,V_2)+\mi(U_2,V_0,V_2;Y_2|Q,U_0)\\
&\quad-\mi(U_0,V_0;Z|Q)-\mi(U_2;Z|U_0,V_0)-\mi(U_2,V_2;Z|U_0,V_0)\\
&R_1+R_2<\mi(V_0,V_1;Y_1|Q,U_0,U_1)+\mi(U_0,U_1,V_1;Y_1|Q,V_0)\\
&\quad-\mi(U_0,V_0;Z|Q)-\mi(V_1;Z|U_0,V_0)-\mi(U_1,V_1;Z|U_0,V_0)\\
&R_1+R_2<\mi(V_0,V_1;Y_1|Q,U_0,U_1)+\mi(U_0,U_2,V_2;Y_2|Q,V_0)\\
&\quad-\mi(U_0,V_0;Z|Q)-\mi(U_2,V_1,V_2;Z|U_0,V_0)-\mi(V_1;V_2|V_0)\\
&R_1+R_2<\mi(V_0,V_2;Y_2|Q,U_0,U_2)+\mi(U_0,U_1,V_1;Y_1|Q,V_0)\\
&\quad-\mi(U_0,V_0;Z|Q)-\mi(U_1,V_1,V_2;Z|U_0,V_0)-\mi(V_1;V_2|V_0)\\
&R_1+R_2<\mi(V_0,V_2;Y_2|Q,U_0,U_2)+\mi(U_0,U_2,V_2;Y_2|Q,V_0)\\
&\quad-\mi(U_0,V_0;Z|Q)-\mi(V_2;Z|U_0,V_0)-\mi(U_2,V_2;Z|U_0,V_0)\\
&R_1+R_2<\mi(U_0,U_1,V_1;Y_1|Q,V_0)+\mi(U_1,V_0,V_1;Y_1|Q,U_0)\\
&\quad-\mi(U_0,V_0;Z|Q)-2\mi(U_1,V_1;Z|U_0,V_0)\\
&R_1+R_2<\mi(U_0,U_1,V_1;Y_1|Q,V_0)+\mi(U_2,V_0,V_2;Y_2|Q,U_0)-\mi(U_0,V_0;Z|Q)\\
&\quad-\mi(U_1,U_2,V_1,V_2;Z|U_0,V_0)-\mi(U_1,V_1;U_2,V_2|U_0,V_0)\\
&R_1+R_2<\mi(U_0,U_2,V_2;Y_2|Q,V_0)+\mi(U_1,V_0,V_1;Y_1|Q,U_0)-\mi(U_0,V_0;Z|Q)\\
&\quad-\mi(U_1,U_2,V_1,V_2;Z|U_0,V_0)-\mi(U_1,V_1;U_2,V_2|U_0,V_0)\\
&R_1+R_2<\mi(U_0,U_2,V_2;Y_2|Q,V_0)+\mi(U_2,V_0,V_2;Y_2|Q,U_0)\\
&\quad-\mi(U_0,V_0;Z|Q)-2\mi(U_2,V_2;Z|U_0,V_0)\\
&R_1+R_2<\mi(U_0,U_2,V_0,V_2;Y_2|Q)-\mi(U_0,U_2,V_0,V_2;Z|Q)\\
&R_1+R_2<\mi(U_0,U_1,V_0,V_1;Y_1|Q)-\mi(U_0,U_1,V_0,V_1;Z|Q)\\
&2R_1+R_2<\mi(U_0,U_1;Y_1|Q,V_0,V_1)+\mi(U_0,U_2,V_0,V_2;Y_2|Q)\\
&\quad-\mi(U_0;Z|Q)-\mi(U_0,V_0;Z|Q)-\mi(U_1,U_2,V_2;Z|U_0,V_0)-\mi(U_1;U_2|U_0)\\
&2R_1+R_2<\mi(U_0,U_2;Y_2|Q,V_0,V_2)+\mi(U_0,U_1,V_0,V_1;Y_1|Q)\\
&\quad-\mi(U_0;Z|Q)-\mi(U_0,V_0;Z|Q)-\mi(U_1,U_2,V_1;Z|U_0,V_0)-\mi(U_1;U_2|U_0)\\
&2R_1+R_2<\mi(U_0,U_1,V_1;Y_1|Q,V_0)+\mi(U_0,U_2,V_0,V_2;Y_2|Q)\\
&\quad-\mi(U_0;Z|Q)-\mi(U_0,V_0;Z|Q)-\mi(U_1,U_2,V_1,V_2;Z|U_0,V_0)-\mi(U_1,V_1;U_2,V_2|U_0,V_0)\\
&2R_1+R_2<\mi(U_0,U_2,V_2;Y_2|Q,V_0)+\mi(U_0,U_1,V_0,V_1;Y_1|Q)\\
&\quad-\mi(U_0;Z|Q)-\mi(U_0,V_0;Z|Q)-\mi(U_1,U_2,V_1,V_2;Z|U_0,V_0)-\mi(U_1,V_1;U_2,V_2|U_0,V_0)\\
&R_1+2R_2<\mi(V_0,V_1;Y_1|Q,U_0,U_1)+\mi(U_0,U_2,V_0,V_2;Y_2|Q)\\
&\quad-\mi(V_0;Z|Q)-\mi(U_0,V_0;Z|Q)-\mi(U_2,V_1,V_2;Z|U_0,V_0)-\mi(V_1;V_2|V_0)\\
&R_1+2R_2<\mi(V_0,V_2;Y_2|Q,U_0,U_2)+\mi(U_0,U_1,V_0,V_1;Y_1|Q)\\
&\quad-\mi(V_0;Z|Q)-\mi(U_0,V_0;Z|Q)-\mi(U_1,V_1,V_2;Z|U_0,V_0)-\mi(V_1;V_2|V_0)\\
&R_1+2R_2<\mi(U_0,U_1,V_0,V_1;Y_1|Q)+\mi(U_2,V_0,V_2;Y_2|Q,U_0)\\
&\quad-\mi(V_0;Z|Q)-\mi(U_0,V_0;Z|Q)-\mi(U_1,U_2,V_1,V_2;Z|U_0,V_0)-\mi(U_1,V_1;U_2,V_2|U_0,V_0)\\
&R_1+2R_2<\mi(U_1,V_0,V_1;Y_1|Q,U_0)+\mi(U_0,U_2,V_0,V_2;Y_2|Q)\\
&\quad-\mi(V_0;Z|Q)-\mi(U_0,V_0;Z|Q)-\mi(U_1,U_2,V_1,V_2;Z|U_0,V_0)-\mi(U_1,V_1;U_2,V_2|U_0,V_0)\\
&2R_1+2R_2<2\mi(U_0,U_1;Y_1|Q,V_0,V_1)+\mi(V_0,V_2;Y_2|Q,U_0,U_2)+\mi(U_1,V_0,V_1;Y_1|Q,U_0)\\
&\quad-2\mi(U_0,V_0;Z|Q)-2\mi(U_1;Z|U_0,V_0)-\mi(U_1,V_1,V_2;Z|U_0,V_0)-\mi(V_1;V_2|V_0)\\
&2R_1+2R_2<2\mi(U_0,U_1;Y_1|Q,V_0,V_1)+\mi(U_1,V_0,V_1;Y_1|Q,U_0)+\mi(U_2,V_0,V_2;Y_2|Q,U_0)\\
&\quad-2\mi(U_0,V_0;Z|Q)-2\mi(U_1;Z|U_0,V_0)-\mi(U_1,U_2,V_1,V_2;Z|U_0,V_0)-\mi(U_1,V_1;U_2,V_2|U_0,V_0)\\
&2R_1+2R_2<\mi(U_0,U_1;Y_1|Q,V_0,V_1)+\mi(U_0,U_2;Y_2|Q,V_0,V_2)\\
&\quad+\mi(U_1,V_0,V_1;Y_1|Q,U_0)+\mi(U_2,V_0,V_2;Y_2|Q,U_0)-2\mi(U_0,V_0;Z|Q)-\mi(U_1,U_2;Z|U_0,V_0)\\
&\quad-\mi(U_1,U_2,V_1,V_2;Z|U_0,V_0)-\mi(U_1;U_2|U_0)-\mi(U_1,V_1;U_2,V_2|U_0,V_0)\\
&2R_1+2R_2<\mi(U_0,U_1;Y_1|Q,V_0,V_1)+2\mi(V_0,V_2;Y_2|Q,U_0,U_2)+\mi(U_0,U_2,V_2;Y_2|Q,V_0)\\
&\quad-2\mi(U_0,V_0;Z|Q)-2\mi(V_2;Z|U_0,V_0)-\mi(U_1,U_2,V_2;Z|U_0,V_0)-\mi(U_1;U_2|U_0)\\
&2R_1+2R_2<\mi(U_0,U_1;Y_1|Q,V_0,V_1)+\mi(V_0,V_2;Y_2|Q,U_0,U_2)+\mi(U_0,U_1,V_0,V_1;Y_1|Q)\\
&\quad-2\mi(U_0,V_0;Z|Q)-\mi(U_1;Z|U_0,V_0)-\mi(U_1,V_1,V_2;Z|U_0,V_0)-\mi(V_1;V_2|V_0)\\
&2R_1+2R_2<\mi(U_0,U_1;Y_1|Q,V_0,V_1)+\mi(V_0,V_2;Y_2|Q,U_0,U_2)+\mi(U_0,U_2,V_0,V_2;Y_2|Q)\\
&\quad-2\mi(U_0,V_0;Z|Q)-\mi(V_2;Z|U_0,V_0)-\mi(U_1,U_2,V_2;Z|U_0,V_0)-\mi(U_1;U_2|U_0)\\
&2R_1+2R_2<\mi(U_0,U_1;Y_1|Q,V_0,V_1)+\mi(U_1,V_0,V_1;Y_1|Q,U_0)+\mi(U_0,U_2,V_0,V_2;Y_2|Q)\\
&\quad-2\mi(U_0,V_0;Z|Q)-\mi(U_1;Z|U_0,V_0)-\mi(U_1,U_2,V_1,V_2;Z|U_0,V_0)-\mi(U_1,V_1;U_2,V_2|U_0,V_0)\\
&2R_1+2R_2<\mi(U_0,U_1;Y_1|Q,V_0,V_1)+\mi(U_2,V_0,V_2;Y_2|Q,U_0)+\mi(U_0,U_1,V_0,V_1;Y_1|Q)\\
&\quad-2\mi(U_0,V_0;Z|Q)-\mi(U_1;Z|U_0,V_0)-\mi(U_1,U_2,V_1,V_2;Z|U_0,V_0)-\mi(U_1,V_1;U_2,V_2|U_0,V_0)\\
&2R_1+2R_2<\mi(U_0,U_2;Y_2|Q,V_0,V_2)+2\mi(V_0,V_1;Y_1|Q,U_0,U_1)+\mi(U_0,U_1,V_1;Y_1|Q,V_0)\\
&\quad-2\mi(U_0,V_0;Z|Q)-2\mi(V_1;Z|U_0,V_0)-\mi(U_1,U_2,V_1;Z|U_0,V_0)-\mi(U_1;U_2|U_0)\\
&2R_1+2R_2<2\mi(U_0,U_2;Y_2|Q,V_0,V_2)+\mi(V_0,V_1;Y_1|Q,U_0,U_1)+\mi(U_2,V_0,V_2;Y_2|Q,U_0)\\
&\quad-2\mi(U_0,V_0;Z|Q)-2\mi(U_2;Z|U_0,V_0)-\mi(U_2,V_1,V_2:Z|U_0,V_0)-\mi(V_1;V_2|V_0)\\
&2R_1+2R_2<2\mi(U_0,U_2;Y_2|Q,V_0,V_2)+\mi(U_1,V_0,V_1;Y_1|Q,U_0)+\mi(U_2,V_0,V_2;Y_2|Q,U_0)\\
&\quad-2\mi(U_0,V_0;Z|Q)-2\mi(U_2;Z|U_0,V_0)-\mi(U_1,U_2,V_1,V_2;Z|U_0,V_0)-\mi(U_1,V_1;U_2,V_2|U_0,V_0)\\
&2R_1+2R_2<\mi(U_0,U_2;Y_2|Q,V_0,V_2)+\mi(V_0,V_1;Y_1|Q,U_0,U_1)+\mi(U_0,U_1,V_0,V_1;Y_1|Q)\\
&\quad-2\mi(U_0,V_0;Z|Q)-\mi(V_1;Z|U_0,V_0)-\mi(U_1,U_2,V_1;Z|U_0,V_0)-\mi(U_1;U_2|U_0)\\
&2R_1+2R_2<\mi(U_0,U_2;Y_2|Q,V_0,V_2)+\mi(V_0,V_1;Y_1|Q,U_0,U_1)+\mi(U_0,U_2,V_0,V_2;Y_2|Q)\\
&\quad-2\mi(U_0,V_0;Z|Q)-\mi(U_2;Z|U_0,V_0)-\mi(U_2,V_1,V_2;Z|U_0,V_0)-\mi(V_1;V_2|V_0)\\
&2R_1+2R_2<\mi(U_0,U_2;Y_2|Q,V_0,V_2)+\mi(U_1,V_0,V_1;Y_1|Q,U_0)+\mi(U_0,U_2,V_0,V_2;Y_2|Q)\\
&\quad-2\mi(U_0,V_0;Z|Q)-\mi(U_2;Z|U_0,V_0)-\mi(U_1,U_2,V_1,V_2;Z|U_0,V_0)-\mi(U_1,V_1;U_2,V_2|U_0,V_0)\\
&2R_1+2R_2<\mi(U_0,U_2;Y_2|Q,V_0,V_2)+\mi(U_2,V_0,V_2;Y_2|Q,U_0)+\mi(U_0,U_1,V_0,V_1;Y_1|Q)\\
&\quad-2\mi(U_0,V_0;Z|Q)-\mi(U_2;Z|U_0,V_0)-\mi(U_1,U_2,V_1,V_2;Z|U_0,V_0)-\mi(U_1,V_1;U_2,V_2|U_0,V_0)\\
&2R_1+2R_2<2\mi(V_0,V_1;Y_1|Q,U_0,U_1)+\mi(U_0,U_1,V_1;Y_1|Q,V_0)+\mi(U_0,U_2,V_2;Y_2|Q,V_0)\\
&\quad-2\mi(U_0,V_0;Z|Q)-2\mi(V_1;Z|U_0,V_0)-\mi(U_1,U_2,V_1,V_2;Z|U_0,V_0)-\mi(U_1,V_1;U_2,V_2|U_0,V_0)\\
&2R_1+2R_2<\mi(V_0,V_1;Y_1|Q,U_0,U_1)+\mi(V_0,V_2;Y_2|Q,U_0,U_2)\\
&\quad+\mi(U_0,U_1,V_1;Y_1|Q,V_0)+\mi(U_0,U_2,V_2;Y_2|Q,V_0)-2\mi(U_0,V_0;Z|Q)\\
&\quad-\mi(V_1,V_2;Z|U_0,V_0)-\mi(U_1,U_2,V_1,V_2;Z|U_0,V_0)-\mi(V_1;V_2|V_0)-\mi(U_1,V_1;U_2,V_2|U_0,V_0)\\
&2R_1+2R_2<\mi(V_0,V_1;Y_1|Q,U_0,U_1)+\mi(U_0,U_1,V_1;Y_1|Q,V_0)+\mi(U_0,U_2,V_0,V_2;Y_2|Q)\\
&\quad-2\mi(U_0,V_0;Z|Q)-\mi(V_1;Z|U_0,V_0)-\mi(U_1,U_2,V_1,V_2;Z|U_0,V_0)-\mi(U_1,V_1;U_2,V_2|U_0,V_0)\\
&2R_1+2R_2<\mi(V_0,V_1;Y_1|Q,U_0,U_1)+\mi(U_0,U_2,V_2;Y_2|Q,V_0)+\mi(U_0,U_1,V_0,V_1;Y_1|Q)\\
&\quad-2\mi(U_0,V_0;Z|Q)-\mi(V_1;Z|U_0,V_0)-\mi(U_1,U_2,V_1,V_2;Z|U_0,V_0)-\mi(U_1,V_1;U_2,V_2|U_0,V_0)\\
&2R_1+2R_2<2\mi(V_0,V_2;Y_2|Q,U_0,U_2)+\mi(U_0,U_1,V_1;Y_1|Q,V_0)+\mi(U_0,U_2,V_2;Y_2|Q,V_0)\\
&\quad-2\mi(U_0,V_0;Z|Q)-2\mi(V_2;Z|U_0,V_0)-\mi(U_1,U_2,V_1,V_2;Z|U_0,V_0)-\mi(U_1,V_1;U_2,V_2|U_0,V_0)\\
&2R_1+2R_2<\mi(V_0,V_2;Y_2|Q,U_0,U_2)+\mi(U_0,U_1,V_1;Y_1|Q,V_0)+\mi(U_0,U_2,V_0,V_2;Y_2|Q)\\
&\quad-2\mi(U_0,V_0;Z|Q)-\mi(V_2;Z|U_0,V_0)-\mi(U_1,U_2,V_1,V_2;Z|U_0,V_0)-\mi(U_1,V_1;U_2,V_2|U_0,V_0)\\
&2R_1+2R_2<\mi(V_0,V_2;Y_2|Q,U_0,U_2)+\mi(U_0,U_2,V_2;Y_2|Q,V_0)+\mi(U_0,U_1,V_0,V_1;Y_1|Q)\\
&\quad-2\mi(U_0,V_0;Z|Q)-\mi(V_2;Z|U_0,V_0)-\mi(U_1,U_2,V_1,V_2;Z|U_0,V_0)-\mi(U_1,V_1;U_2,V_2|U_0,V_0)\\
&2R_1+2R_2<\mi(U_0,U_1,V_0,V_1;Y_1|Q)+\mi(U_0,U_2,V_0,V_2;Y_2|Q)\\
&\quad-2\mi(U_0,V_0;Z|Q)-\mi(U_1,U_2,V_1,V_2;Z|U_0,V_0)-\mi(U_1,V_1;U_2,V_2|U_0,V_0)\\
&3R_1+3R_2<\mi(U_0,U_1;Y_1|Q,V_0,V_1)+\mi(U_1,V_0,V_1;Y_1|Q,U_0)\\
&\quad+2\mi(U_0,U_2,V_0,V_2;Y_2|Q)-3\mi(U_0,V_0;Z)-\mi(U_1,U_2,V_2;Z|U_0,V_0)\\
&\quad-\mi(U_1,U_2,V_1,V_2;Z|U_0,V_0)-\mi(U_1;U_2|U_0)-\mi(U_1,V_1;U_2,V_2|U_0,V_0)\\
&3R_1+3R_2<\mi(U_0,U_2;Y_2|Q,V_0,V_2)+\mi(U_2,V_0,V_2;Y_2|Q,U_0)\\
&\quad+2\mi(U_0,U_1,V_0,V_1;Y_1|Q)-3\mi(U_0,V_0;Z|Q)-\mi(U_1,U_2,V_1;Z|U_0,V_0)\\
&\quad-\mi(U_1,U_2,V_1,V_2;Z|U_0,V_0)-\mi(U_1;U_2|U_0)-\mi(U_1,V_1;U_2,V_2|U_0,V_0)\\
&3R_1+3R_2<\mi(V_0,V_1;Y_1|Q,U_0,U_1)+\mi(U_0,U_1,V_1;Y_1|Q,V_0)\\
&\quad+2\mi(U_0,U_2,V_0,V_2;Y_2|Q)-3\mi(U_0,V_0;Z|Q)-\mi(U_2,V_1,V_2;Z|U_0,V_0)\\
&\quad-\mi(U_1,U_2,V_1,V_2;Z|U_0,V_0)-\mi(V_1;V_2|V_0)-\mi(U_1,V_1;U_2,V_2|U_0,V_0)\\
&3R_1+3R_2<\mi(V_0,V_2;Y_2|Q,U_0,U_2)+\mi(U_0,U_2,V_2;Y_2|Q,V_0)\\
&\quad+2\mi(U_0,U_1,V_0,V_1;Y_1|Q)-3\mi(U_0,V_0;Z|Q)-\mi(U_1,V_1,V_2;Z|U_0,V_0)\\
&\quad-\mi(U_1,U_2,V_1,V_2;Z|U_0,V_0)-\mi(V_1;V_2|V_0)-\mi(U_1,V_1;U_2,V_2|U_0,V_0)\\
&3R_1+3R_2<\mi(U_0,U_1,V_1;Y_1|Q,V_0)+\mi(U_1,V_0,V_1;Y_1|Q,U_0)+2\mi(U_0,U_2,V_0,V_2;Y_2|Q)\\
&\quad-3\mi(U_0,V_0;Z|Q)-2\mi(U_1,U_2,V_1,V_2;Z|U_0,V_0)-2\mi(U_1,V_1;U_2,V_2|U_0,V_0)\\
&3R_1+3R_2<\mi(U_0,U_2,V_2;Y_2|Q,V_0)+\mi(U_2,V_0,V_2;Y_2|Q,U_0)+2\mi(U_0,U_1,V_0,V_1;Y_1|Q)\\
&\quad-3\mi(U_0,V_0;Z|Q)-2\mi(U_1,U_2,V_1,V_2;Z|U_0,V_0)-2\mi(U_1,V_1;U_2,V_2|U_0,V_0)\\
&4R_1+4R_2<\mi(U_0,U_1;Y_1|V_0,V_1)+\mi(U_0,U_2;Y_2|Q,V_0,V_2)+2\mi(U_1,V_0,V_1;Y_1|Q,U_0)\\
&\quad+2\mi(U_0,U_2,V_0,V_2;Y_2|Q)-4\mi(U_0,V_0;Z|Q)-\mi(U_1,U_2;Z|U_0,V_0)\\
&\quad-2\mi(U_1,U_2,V_1,V_2;Z|U_0,V_0)-\mi(U_1;U_2|U_0)-2\mi(U_1,V_1;U_2,V_2|U_0,V_0)\\
&4R_1+4R_2<\mi(U_0,U_1;Y_1|Q,V_0,V_1)+\mi(U_0,U_2;Y_2|Q,V_0,V_2)\\
&\quad+2\mi(U_2,V_0,V_2;Y_2|Q,U_0)+2\mi(U_0,U_1,V_0,V_1;Y_1|Q)-4\mi(U_0,V_0;Z|Q)\\
&\quad-\mi(U_1,U_2;Z|U_0,V_0)-2\mi(U_1,U_2,V_1,V_2;Z|U_0,V_0)-\mi(U_1;U_2|U_0)-2\mi(U_1,V_1;U_2,V_2|U_0,V_0)\\
&4R_1+4R_2<\mi(V_0,V_1;Y_1|Q,U_0,U_1)+\mi(V_0,V_2;Y_2|Q,U_0,U_2)\\
&\quad+2\mi(U_0,U_1,V_1;Y_1|Q,V_0)+2\mi(U_0,U_2,V_0,V_2;Y_2|Q)-4\mi(U_0,V_0;Z|Q)\\
&\quad-\mi(V_1,V_2;Z|U_0,V_0)-2\mi(U_1,U_2,V_1,V_2;Z|U_0,V_0)-\mi(V_1;V_2|V_0)-2\mi(U_1,V_1;U_2,V_2|U_0,V_0)\\
&4R_1+4R_2<\mi(V_0,V_1;Y_1|Q,U_0,U_1)+\mi(V_0,V_2;Y_2|Q,U_0,U_2)\\
&\quad+2\mi(U_0,U_2,V_2;Y_2|Q,V_0)+2\mi(U_0,U_1,V_0,V_1;Y_1|Q)-4\mi(U_0,V_0;Z|Q)\\
&\quad-\mi(V_1,V_2;Z|U_0,V_0)-2\mi(U_1,U_2,V_1,V_2;Z|U_0,V_0)-\mi(V_1;V_2|V_0)-2\mi(U_1,V_1;U_2,V_2|U_0,V_0)
\end{align*}
for some distribution
\begin{align}
&p(q)p(u_0,u_1,u_2|q)p(v_0,v_1,v_2|q)\twocolbreaktimes p(x_1|u_0,u_1,u_2)p(x_2|v_0,v_1,v_2)p(y_1,y_2,z|x_1,x_2),
\label{distachistr}
\end{align}
\end{theorem}


\begin{proof}

The achievability proof is inspired by \cite{OSRB}, and is based on solving a dual secret key agreement problem in the source model that includes shared randomness at all terminals (see Fig.~\ref{SKD}). 
In this dual model, rate constraints are derived  so that the input and output distributions of the dual model approximate that of the original model while satisfying reliability and secrecy conditions in the dual model. The probability approximation then guarantees that reliability {\em and} secrecy conditions can be achieved in the original model. Finally, it is shown that there exists one realization of shared randomness for which the above mentioned are valid, thus removing the necessity for common randomness.


We begin by developing the encoding and decoding strategies for the source model and the original model, and derive and compare the joint probability distributions arising from these two strategies.

\begin{figure}
\centering
\includegraphics[width=\Figwidth]{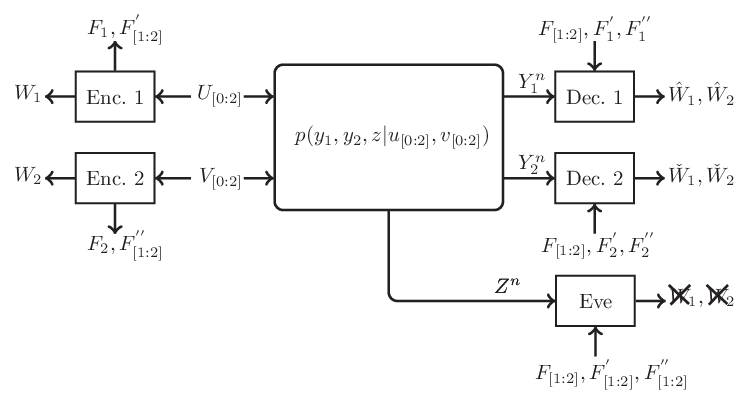}
\caption{Dual secret key agreement problem in the source model for the original problem.}

\label{SKD}
\end{figure}

We begin with the multi-terminal secret key agreement problem in the source model  as depicted in Fig.~\ref{SKD}. Let $(U_{[0:2]}^n,V_{[0:2]}^n,X_1^n,X_2^n,Y_1^n,Y_2^n,Z^n)$ be i.i.d. and distributed according to 
\begin{equation}
\label{GDistStrong}
p(u_{[0:2]},x_1)p(v_{[0:2]},x_2)p(y_1,y_2,z|x_1,x_2).
\end{equation}

{\em Random Binning:} 
\begin{itemize}
\item To {\em each and every} $u_0^n$, uniformly and independently assign two random bin indices $w_1 \in \llbracket 1,2^{nR_1}\rrbracket$ and $f_1 \in \llbracket 1,2^{n\tilde{R}_1}\rrbracket$.
\item To each pair $(u_0^n,u_j^n)$ for $j=1,2$ uniformly and independently assign random bin index $f'_j \in \llbracket 1,2^{n\tilde{R}'_j}\rrbracket$.
\item To each $v_0^n$ uniformly and independently assign two random bin indices $w_2 \in \llbracket 1,2^{nR_2}\rrbracket$ and $f_2 \in \llbracket 1,2^{n\tilde{R}_2}\rrbracket$.
\item To each pair $(v_0^n,v_j^n)$ for $j=1,2$ uniformly and independently assign random bin index  $f''_j \in \llbracket 1,2^{n\tilde{R}''_j}\rrbracket$.
\item
The random variables representing bin indices are:
\begin{equation}
\quad W_{[1:2]} \;,\quad F_{[1:2]} \;,\quad F'_{[1:2]}\; , \quad F''_{[1:2]}.
\label{eq:shared-randomness}
\end{equation}
\item \sloppy Decoder~1 is a Slepian-Wolf decoder observing $(y_1^n,f_{[1:2]},f'_1,f''_1)$, and producing $(\hatlu_0^n,\hat u_1^n)$ and
$(\hatlv_0^n,\hat v_1^n)$, thus declaring $\hatlw_1=W_1(\hatlu_0^n)$ and $\hatlw_2=W_2(\hatlv_0^n)$ to be the estimate of the pair $(w_1,w_2)$.
\item\sloppy
Decoder~2 is a Slepian-Wolf decoder observing $(y_2^n,f_{[1:2]},f'_2,f''_2)$,
and producing $(\dhatlu_0^n,\dhatlu_2^n)$ and $(\dhatlv_0^n,\dhatlv_2^n)$, thus declaring the bin indices $\dhatlw_1=W_1(\dhatlu_0^n)$ and $\dhatlw_2=W_2(\dhatlv_0^n)$ as the estimate of the pair $(w_1,w_2)$.
\end{itemize}

To condense the notation, we define the following variables:
\begin{gather}
{\bf f} \triangleq \big ( f_{[1:2]},f'_{[1:2]},f''_{[1:2]}\big),
\label{CRF}\\
\hat{\bf u} \triangleq \big (\hatlu_0^n,\dhatlu_0^n,\hatlu_1^n,\dhatlu_2^n,\hatlv_0^n,\dhatlv_0^n,\hatlv_1^n,\dhatlv_2^n\big),
\label{Uhat}
\end{gather}
The random pmf induced by random binning is then as follows:
\begin{align}
&P(u_{[0:2]}^n,v_{[0:2]}^n,x_{[1:2]}^n,y_1^n,y_2^n,z^n,w_{[1:2]},{\bf f}, \hat{\bf u})\nonumber\\
&=p(u_{[0:2]}^n,v_{[0:2]}^n,x_{[1:2]}^n,y_1^n,y_2^n,z^n)P(w_{[1:2]},f_{[1:2]}|u_0^n,v_0^n)\twocolbreaktimes
P(f'_{[1:2]},f''_{[1:2]}|u_{[0:2]}^n,v_{[0:2]}^n)\nonumber\\
&\times
P^{SW}(\hatlu_0^n,\hatlu_1^n,\hatlv_0^n,\hatlv_1^n|y_1^n,f_{[1:2]},f'_1,f''_1)\twocolbreaktimes
P^{SW}(\dhatlu_0^n,\dhatlu_2^n,\dhatlv_0^n,\dhatlv_2^n|y_2^n,f_{[1:2]},f'_2,f''_2)\nonumber\\
&=P(w_{[1:2]},f_{[1:2]},u_0^n,v_0^n)
P(f'_{[1:2]},f''_{[1:2]},u_{[1:2]}^n,v_{[1:2]}^n|u_0^n,v_0^n)\nonumber\\
&\times p(x_1^n|u_{[0:2]}^n) p(x_2^n|v_{[0:2]}^n) p(y_1^n,y_2^n,z^n|x_1^n,x_2^n)\nonumber\\
&\times
P^{SW}(\hatlu_0^n,\hatlu_1^n,\hatlv_0^n,\hatlv_1^n|y_1^n,f_{[1:2]},f'_1,f''_1)\twocolbreaktimes
P^{SW}(\dhatlu_0^n,\dhatlu_2^n,\dhatlv_0^n,\dhatlv_2^n|y_2^n,f_{[1:2]},f'_2,f''_2)\nonumber\\
&=P(w_{[1:2]},f_{[1:2]})P(u_0^n,v_0^n|w_{[1:2]},f_{[1:2]})\twocolbreaktimes
P(f'_{[1:2]},f''_{[1:2]}|u_0^n,v_0^n) P(u_{[1:2]}^n,v_{[1:2]}^n|u_0^n,v_0^n,f'_{[1:2]},f''_{[1:2]})\nonumber\\
&\times p(x_1^n|u_{[0:2]}^n)p(x_2^n|v_{[0:2]}^n) p(y_1^n,y_2^n,z^n|x_1^n,x_2^n)\nonumber\\
&\times
P^{SW}(\hatlu_0^n,\hatlu_1^n,\hatlv_0^n,\hatlv_1^n|y_1^n,f_{[1:2]},f'_1,f''_1)\twocolbreaktimes
P^{SW}(\dhatlu_0^n,\dhatlu_2^n,\dhatlv_0^n,\dhatlv_2^n|y_2^n,f_{[1:2]},f'_2,f''_2).
\label{P}
\end{align}
$P^{SW}$ denotes the pmf of the output of the Slepian-Wolf decoder, which is a random pmf. $\hat{W}_1,\hat{W}_2$ and $\check{W}_1,\check{W}_2$  are omitted because they are functions of other random variables.

We now return to the original problem illustrated in Fig.~\ref{fig}  except that, in addition, a genie provides all terminals with shared randomness described by $(F_{[1:2]},F'_{[1:2]},F''_{[1:2]})$, whose distribution will be clarified in the sequel. In this augmented model:

\begin{itemize}
\item\sloppy The messages $W_1,W_2$ are mutually independent and uniformly distributed with rates $R_1,R_2$ respectively. The shared randomness $(F_1,F_2)$ is uniformly distributed over $\llbracket 1,2^{n\tilde{R}_1}\rrbracket$, $\llbracket 1,2^{n\tilde{R}_2}\rrbracket $, and independent of  $W_1,W_2$.

\item Encoder~1 and~2 are stochastic encoders producing codewords $U_0^n,V_0^n$ according to distributions $P(u_0^n|w_{[1:2]},f_{[1:2]})$ and $P(v_0^n|w_{[1:2]},f_{[1:2]})$, respectively, which are the marginals of distribution $P(u_0^n,v_0^n|w_{[1:2]},f_{[1:2]})$ appearing in~\eqref{P}. This choice of encoder establishes a the connection between the two models.

\item The four random variables $F'_{[1:2]},F''_{[1:2]}$ are mutually independent and uniformly distributed over, respectively, $\llbracket 1,2^{n\tilde{R}'_1}\rrbracket$ and $\llbracket 1,2^{n\tilde{R}'_2}\rrbracket $, $\llbracket 1,2^{n\tilde{R}''_1}\rrbracket$ and $\llbracket 1,2^{n\tilde{R}''_2}\rrbracket $. They are also independent of $(U_0^n,V_0^n)$ and therefore are independent of $(W_{[1:2]},F_{[1:2]})$. 

\item Encoder~1 and~2 further generate $U_{[1:2]}^n$, $V_{[1:2]}^n$ according to $P(u_{[1:2]}^n|u_0^n,v_0^n,f'_{[1:2]},f''_{[1:2]})$ and $P(v_{[1:2]}^n|u_0^n,v_0^n,f'_{[1:2]},f''_{[1:2]})$, respectively, which are marginal distributions of $P(u_{[1:2]}^n,v_{[1:2]}^n|u_0^n,v_0^n,f'_{[1:2]},f''_{[1:2]})$ from~\eqref{P}.

\item Encoder~1 generates $X_1^n$ i.i.d.\ according to $p(x_1|u_{[0:2]})$. Encoder~2 generates $X_2^n$ i.i.d.\ according to $p(x_2|v_{[0:2]})$. $X_1, X_2$ are transmitted over the channel.

\item \sloppy 
Decoders~1 and~2 are Slepian-Wolf decoders inherited from the source model secret key agreement problem, observing respectively $(y_1^n,f_{[1:2]},f'_1,f''_1)$ and $(y_2^n,f_{[1:2]},f'_2,f''_2)$, and producing $(\hatlu_0^n,\hatlu_1^n,\hatlv_0^n,\hatlv_1^n)$ and $(\dhatlu_0^n,\dhatlu_2^n,\dhatlv_0^n,\dhatlv_2^n)$. Therefore the following random PMFs for the decoder output distributions are inherited from the source model:
\begin{align*}
    P^{SW}(\hatlu_0^n,\hatlu_1^n,\hatlv_0^n,\hatlv_1^n|y_1^n,f_{[1:2]},f'_1,f''_1)\\
    P^{SW}(\dhatlu_0^n,\dhatlu_2^n,\dhatlv_0^n,\dhatlv_2^n|y_2^n,f_{[1:2]},f'_2,f''_2)
\end{align*}

\item Decoders~1 and~2 then produce estimates of $(W_1,W_2)$, which are denoted $(\hat{W}_1,\hat{W}_2)$ and $(\check{W}_1,\check{W}_2)$ respectively.
\end{itemize}
The random pmf induced by the random binning and the encoding/decoding strategy is as follows:
\begin{align}
&\hat{P}(u_{[0:2]}^n,v_{[0:2]}^n,y_1^n,y_2^n,z^n,w_{[1:2]},{\bf f}, \hat{\bf u})\nonumber\\
&= p^U(w_{[1:2]})p^U(f_{[1:2]})P(u_0^n,v_0^n|w_{[1:2]},f_{[1:2]})\twocolbreaktimes
p^U(f'_{[1:2]})p^U(f''_{[1:2]})P(u_{[1:2]}^n,v_{[1:2]}^n|u_0^n,v_0^n,f'_{[1:2]},f''_{[1:2]})\nonumber\\
&\times p(x_1^n|u_{[0:2]}^n)p(x_2^n|v_{[0:2]}^n) p(y_1^n,y_2^n,z^n|x_1^n,x_2^n)\nonumber\\
&\times P^{SW}(\hatlu_0^n,\hatlu_1^n,\hatlv_0^n,\hatlv_1^n|y_1^n,f_{[1:2]},f'_1,f''_1)\twocolbreaktimes
P^{SW}(\dhatlu_0^n,\dhatlu_2^n,\dhatlv_0^n,\dhatlv_2^n|y_2^n,f_{[1:2]},f'_2,f''_2),
\label{Phat}
\end{align}
where ${\bf f}$ and $\hat{\bf u}$ are defined in \eqref{CRF} and \eqref{Uhat}, respectively, and $p^U$ is the uniform distribution.



We now find constraints that ensure that the pmfs $\hat{P}$ and $P$ are close in total variation distance. 
For the source model secret key agreement problem, substituting $X_1=X_2 \leftarrow U_0$, and $X_3=X_4 \leftarrow V_0$, in \cite[Theorem~1]{OSRB} implies that $W_{[1:2]}$ is nearly independent of $F_{[1:2]}$ and  $Z^n$, if
\begin{align}
R_1 + \tilde{R}_1 &\leq \ent(U_0|Z),
\label{secur1}\\
R_2 + \tilde{R}_2 &\leq \ent(V_0|Z),
\label{secur2}\\
R_1 + \tilde{R}_1+R_2 + \tilde{R}_2 &\leq \ent(U_0,V_0|Z),
\label{secur3}
\end{align}note that~\cite[Theorem~1]{OSRB} returns a total of 15 inequalities, but the remaining are redundant because of \eqref{secur1}-\eqref{secur3}. The above constraints imply that
\begin{align*}
P(z^n,w_{[1:2]},f_{[1:2]})\approx p(z^n)p^U(w_{[1:2]})p^U(f_{[1:2]}).
\end{align*}
Similarly, substituting
$X_1\leftarrow (U_0,U_1)$, $X_2\leftarrow (U_0,U_2)$, $X_3 \leftarrow (V_0,V_1)$, $X_4\leftarrow (V_0,V_2)$, and $Z\leftarrow (U_0,V_0,Z)$ in \cite[Theorem~1]{OSRB} implies that $(f'_{[1:2]},f''_{[1:2]})$ are nearly mutually independent and independent of  $(U_0,V_0,Z)$, therefore they are independent of $(w_{[1:2]},f_{[1:2]})$, if
\begin{align}
\tilde{R}'_j  &\leq \ent(U_j|U_0,V_0,Z),
\label{secur4}\\
\tilde{R}''_j  &\leq \ent(V_j|U_0,V_0,Z),
\label{secur5}\\
\tilde{R}'_1+\tilde{R}''_j  &\leq \ent(U_1,V_j|U_0,V_0,Z),
\label{secur6}\\
\tilde{R}'_2+\tilde{R}''_j  &\leq \ent(U_2,V_j|U_0,V_0,Z),
\label{secur7}\\
\tilde{R}'_1+\tilde{R}'_2  &\leq \ent(U_1,U_2|U_0,V_0,Z),
\label{secur8}\\
\tilde{R}''_1+\tilde{R}''_2  &\leq \ent(V_1,V_2|U_0,V_0,Z),
\label{secur9}\\
\tilde{R}'_1+\tilde{R}'_2+\tilde{R}''_j  &\leq
\ent(U_1,U_2,V_j|U_0,V_0,Z),
\label{secur10}\\
\tilde{R}'_j+\tilde{R}''_1+\tilde{R}''_2  &\leq
\ent(U_j,V_1,V_2|U_0,V_0,Z),
\label{secur11}\\
\tilde{R}'_1+\tilde{R}'_2+\tilde{R}''_1+\tilde{R}''_2  &\leq \ent(U_1,U_2,V_1,V_2|U_0,V_0,Z),
\label{secur12}
\end{align}for $j=1,2$. The above constraints imply
\begin{align}
P(&z^n,u_0^n,v_0^n,f'_{[1:2]},f''_{[1:2]}) \twocolbreak\approx p(z^n,u_0^n,v_0^n) p^U(f'_{[1:2]})p^U(f''_{[1:2]}).
\end{align}
Hence,
\begin{align}
P(w_{[1:2]},f_{[1:2]})&=\hat P(w_{[1:2]},f_{[1:2]})\twocolbreak=p^U(w_{[1:2]})p^U(f_{[1:2]}),\\
P(f'_{[1:2]},f''_{[1:2]}|u_0^n,v_0^n)&=\hat P(f'_{[1:2]},f''_{[1:2]}|u_0^n,v_0^n)\twocolbreak=p^U(f'_{[1:2]})p^U(f''_{[1:2]}).
\end{align}
In other words, the inequalities \eqref{secur1}-\eqref{secur3} and \eqref{secur4}-\eqref{secur12} imply that
\begin{align}
P(&z^n,w_{[1:2]},f_{[1:2]},f'_{[1:2]},f''_{[1:2]})\twocolbreak\approx
p(z^n)p^U(w_{[1:2]})p^U(f_{[1:2]})p^U(f'_{[1:2]})p^U(f''_{[1:2]}).
\label{PiP}
\end{align}Here, the pmf $P(z^n)$ is equal to $p(z^n)$ because the marginal distribution does not include random binning.

Therefore, the distributions in \eqref{P} and \eqref{Phat} are nearly equal, that is
\begin{align}
P(&u_{[0:2]}^n,v_{[0:2]}^n,y_1^n,y_2^n,z^n,w_{[1:2]},{\bf f}, \hat{\bf u}) \twocolbreak\approx
\hat{P}(u_{[0:2]}^n,v_{[0:2]}^n,y_1^n,y_2^n,z^n,w_{[1:2]},{\bf f}, \hat{\bf u}).
\label{PhatP}
\end{align}

Similar to indirect decoding for channel coding it is possible to use indirect decoding for source coding. More precisely, the first and the second decoders only need $(u_0^n,v_0^n)$ to decode $(w_1,w_2)$. Decoder~1 and Decoder~2 can indirectly decode $(u_0^n,v_0^n)$ from $(y_1^n,f_{[1:2]},f'_1,f''_1)$ and $(y_2^n,f_{[1:2]},f'_2,f''_2)$, respectively. From \cite[Lemma~1]{OSRB} decoding is successful if
\begin{align}
\tilde{R}_1 + \tilde{R}'_j &\geq \ent(U_0,U_j|V_0,V_j,Y_j),
\label{Red41}\\
\tilde{R}_2 + \tilde{R}''_j &\geq \ent(V_0,V_j|U_0,U_j,Y_j),
\label{Red52}\\
\tilde{R}_1 + \tilde{R}'_j + \tilde{R}''_j &\geq \ent(U_0,U_j,V_j|V_0,Y_j),
\label{Red6}\\
\tilde{R}_1 + \tilde{R}_2 + \tilde{R}''_j &\geq \ent(V_0,V_j|U_0,U_j,Y_j),
\label{Red62}\\
\tilde{R}'_j + \tilde{R}_2 + \tilde{R}''_j &\geq \ent(U_j,V_0,V_j|U_0,Y_j),
\label{Red7}\\
\tilde{R}_1 + \tilde{R}'_j + \tilde{R}_2 + \tilde{R}''_j &\geq \ent(U_0,U_j,V_0,V_j|Y_j),
\label{Red8}
\end{align}for $j=1,2$. Note that, inequality \eqref{Red62} is redundant because of \eqref{Red52}.
It yields
\begin{align}
&P(u_{[0:2]}^n,v_{[0:2]}^n,y_1^n,y_2^n,z^n,w_{[1:2]},{\bf f},
  \hat{\bf u}) \twocolbreak
\approx P(u_{[0:2]}^n,v_{[0:2]}^n,y_1^n,y_2^n,z^n,w_{[1:2]},{\bf f})\nonumber\\
&\times\indic{1}\{\hatlu_0^n=\dhatlu_0^n=u_0^n,\hatlu_1^n=u_1^n,\dhatlu_2^n=u_2^n\}\twocolbreaktimes
\indic{1}\{\hatlv_0^n=\dhatlv_0^n=v_0^n,\hatlv_1^n=v_1^n,\dhatlv_2^n=v_2^n\}. 
\label{P1}
\end{align}From Equations \eqref{PhatP}, \eqref{P1}, and the triangle inequality,
\begin{align}
&{\hat P}(u_{[0:2]}^n,v_{[0:2]}^n,y_1^n,y_2^n,z^n,w_{[1:2]},{\bf f},
  \hat{\bf u}) \twocolbreak
\approx P(u_{[0:2]}^n,v_{[0:2]}^n,y_1^n,y_2^n,z^n,w_{[1:2]},{\bf f})\nonumber\\
&\times\indic{1}\{\hatlu_0^n=\dhatlu_0^n=u_0^n,\hatlu_1^n=u_1^n,\dhatlu_2^n=u_2^n\}\twocolbreaktimes
\indic{1}\{\hatlv_0^n=\dhatlv_0^n=v_0^n,\hatlv_1^n=v_1^n,\dhatlv_2^n=v_2^n\}. 
\label{P211}
\end{align}

For convenience, we reintroduce a lemma from~\cite{OSRB}:
\begin{lemma}(\cite[Lemma~4]{OSRB})
\label{lemma:DistEq}
Consider distributions $p_{X^n}$, $p_{Y^n|X^n}$, $q_{X^n}$, and $q_{Y^n|X^n}$ and random pmfs $P_{X^n}$, $P_{Y^n|X^n}$, $Q_{X^n}$, and $Q_{Y^n|X^n}$. Denoting asymptotic equality under total variation with $\approx$, we have:
\begin{enumerate}
    \item 
    \begin{align}
    &P_{X^n} \approx Q_{X^n} \; \Rightarrow \; P_{X^n}P_{Y^n|X^n} \approx \;Q_{X^n}P_{Y^n|X^n} \label{eq:DistLemma1}\\
    &P_{X^n}P_{Y^n|X^n} \approx \;Q_{X^n}Q_{Y^n|X^n}  \; \Rightarrow \; P_{X^n} \approx Q_{X^n}\label{eq:DistLemma2}
    \end{align}
    \item
If $p_{X^n} p_{Y^n|X^n} \approx q_{X^n} q_{Y^n|X^n}$, then there exists a sequence $x^n \in \mathcal{X}^n$ such that 
\begin{equation}
p_{Y^n|X^n=x^n} \approx q_{Y^n|X^n=x^n}.
\label{eq:DistLemma3}
\end{equation}
   \item If $P_{X^n} \approx Q_{X^n}$ and $P_{X^n} P_{Y^n|X^n} \approx P_{X^n}Q_{Y^n|X^n}$, then 
   \begin{equation}
 P_{X^n}P_{Y^n|X^n} \approx Q_{X^n}Q_{Y^n|X^n}.
   \label{eq:DistLemma4}
   \end{equation}
\end{enumerate}
\end{lemma}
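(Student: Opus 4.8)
The plan is to prove the four assertions \eqref{eq:DistLemma1}, \eqref{eq:DistLemma2}, \eqref{eq:DistLemma3}, \eqref{eq:DistLemma4} in exactly this order, since the later ones reuse the earlier ones. Each assertion reduces to an elementary manipulation of the $\ell_1$ distance between ordinary pmfs; the only extra ingredient for random pmfs is that, after establishing a deterministic inequality that holds pointwise for every realization of the common randomness, one takes $\expec[\cdot]$ of both sides and appeals to linearity of expectation together with Definition~\ref{pmfDefi}. The two basic facts I will lean on throughout are that appending a \emph{common} channel leaves the $\ell_1$ distance unchanged, and that marginalizing out a variable can only decrease it.

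For \eqref{eq:DistLemma1}, I would write $\|P_{X^n}P_{Y^n|X^n}-Q_{X^n}P_{Y^n|X^n}\|_1=\tfrac12\sum_{x^n,y^n}P_{Y^n|X^n}(y^n|x^n)\,\bigl|P_{X^n}(x^n)-Q_{X^n}(x^n)\bigr|$ and carry out the sum over $y^n$ first, using $\sum_{y^n}P_{Y^n|X^n}(y^n|x^n)=1$, to obtain exactly $\|P_{X^n}-Q_{X^n}\|_1$; taking expectations and invoking the hypothesis finishes it. For \eqref{eq:DistLemma2}, I would instead pull $\sum_{y^n}$ outside the absolute value in $\tfrac12\sum_{x^n}\bigl|\sum_{y^n}\bigl(P_{X^n}(x^n)P_{Y^n|X^n}(y^n|x^n)-Q_{X^n}(x^n)Q_{Y^n|X^n}(y^n|x^n)\bigr)\bigr|$, giving $\|P_{X^n}-Q_{X^n}\|_1\le\|P_{X^n}P_{Y^n|X^n}-Q_{X^n}Q_{Y^n|X^n}\|_1$, and then take expectations. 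For \eqref{eq:DistLemma4} the plan is a triangle inequality in the joint space, $\|P_{X^n}P_{Y^n|X^n}-Q_{X^n}Q_{Y^n|X^n}\|_1\le\|P_{X^n}P_{Y^n|X^n}-P_{X^n}Q_{Y^n|X^n}\|_1+\|P_{X^n}Q_{Y^n|X^n}-Q_{X^n}Q_{Y^n|X^n}\|_1$: the first term vanishes in expectation by the second hypothesis, while the second term equals $\|P_{X^n}-Q_{X^n}\|_1$ by the computation already done for \eqref{eq:DistLemma1} (now with $Q_{Y^n|X^n}$ as the common channel) and hence vanishes by the first hypothesis; linearity of expectation combines the two.

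The genuinely delicate point, and the one I expect to require the most care, is \eqref{eq:DistLemma3}, because it is an existence statement about a conditioning value rather than a norm bound (here $p,q$ are ordinary pmfs, so no expectation is involved). I would start from the identity $p(x^n)p(y^n|x^n)-q(x^n)q(y^n|x^n)=p(x^n)\bigl(p(y^n|x^n)-q(y^n|x^n)\bigr)+\bigl(p(x^n)-q(x^n)\bigr)q(y^n|x^n)$, take absolute values, sum over $y^n$, and rearrange to $2\,p(x^n)\,\|p_{Y^n|X^n=x^n}-q_{Y^n|X^n=x^n}\|_1\le\sum_{y^n}\bigl|p(x^n)p(y^n|x^n)-q(x^n)q(y^n|x^n)\bigr|+\bigl|p(x^n)-q(x^n)\bigr|$. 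Summing over $x^n$ and bounding the residual marginal term using \eqref{eq:DistLemma2} gives $\sum_{x^n}p(x^n)\,\|p_{Y^n|X^n=x^n}-q_{Y^n|X^n=x^n}\|_1\le 2\,\|p_{X^n}p_{Y^n|X^n}-q_{X^n}q_{Y^n|X^n}\|_1$, which tends to $0$ by hypothesis. Since the left-hand side is a $p_{X^n}$-average of nonnegative quantities tending to $0$, for each $n$ at least one $x^n$ in the support of $p_{X^n}$ must attain a value no larger than that average, and this is the desired sequence. The only things to watch are keeping the $\tfrac12$ factors in the definition of $\|\cdot\|_1$ consistent so the final constant is correct, and noting that the averaging argument only furnishes an $x^n\in\mathrm{supp}(p_{X^n})$, which is all that is needed for the applications of the lemma in the sequel.
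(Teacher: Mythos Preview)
Your argument is correct in all four parts; the identities and inequalities you invoke (common-channel invariance of $\ell_1$, monotonicity under marginalization, the triangle inequality, and the averaging/existence step) are precisely the right tools, and the bookkeeping with the $\tfrac12$ convention and the support caveat for part~\eqref{eq:DistLemma3} is handled carefully.

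As for comparison with the paper: there is nothing to compare, because the paper does \emph{not} prove this lemma. It is merely restated verbatim from \cite[Lemma~4]{OSRB} for the reader's convenience and then applied in the OSRB-based achievability proof of Theorem~\ref{thstr}. So your proposal goes beyond what the present paper does; if you want an authoritative benchmark you would need to consult \cite{OSRB} directly, but the elementary approach you outline is standard and sufficient.
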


Using Lemma~\ref{lemma:DistEq}, Equation~\eqref{eq:DistLemma2}, the marginal distributions of the two sides of \eqref{P211} are asymptotically equivalent, i.e.,
\begin{align}
&{\hat P}(u_{[0:2]}^n,v_{[0:2]}^n,z^n,w_{[1:2]},{\bf f},\hat{\bf u}) \twocolbreak
\approx P(u_{[0:2]}^n,v_{[0:2]}^n,z^n,w_{[1:2]},{\bf f})\nonumber\\
&\times\indic{1}\{\hatlu_0^n=\dhatlu_0^n=u_0^n,\hatlu_1^n=u_1^n,\dhatlu_2^n=u_2^n\}\twocolbreaktimes
\indic{1}\{\hatlv_0^n=\dhatlv_0^n=v_0^n,\hatlv_1^n=v_1^n,\dhatlv_2^n=v_2^n\}. 
\label{P2}
\end{align}
\sloppy
Using Lemma~\ref{lemma:DistEq}, Equation~\eqref{eq:DistLemma1} we multiply the two sides of Equation~\eqref{P2} by the conditional distribution:
\begin{align*}
\hat{P}(&\hatlw_1,\dhatlw_1,\hatlw_2,\dhatlw_2 |u_{[0:2]}^n,v_{[0:2]}^n,z^n,w_{[1:2]},{\bf f},
  \hat{\bf u} ) = \\&\indic{1}\{W_1(\hatlu_0^n)=\hatlw_1,W_1(\dhatlu_0^n)=\dhatlw_1\}\twocolbreaktimes
\indic{1}\{W_2(\hatlv_0^n)=\hatlw_2, W_2(\dhatlv_0^n)=\dhatlw_2\}
\end{align*}
to get:
\begin{align}
&{\hat P}(u_{[0:2]}^n,v_{[0:2]}^n,z^n,w_{[1:2]},{\bf f},
  \hat{\bf u},\hatlw_1,\dhatlw_1,\hatlw_2,\dhatlw_2)\twocolbreak
\approx P(u_{[0:2]}^n,v_{[0:2]}^n,z^n,w_{[1:2]},{\bf f})\nonumber\\
&\times\indic{1}\{\hatlu_0^n=\dhatlu_0^n=u_0^n,\hatlu_1^n=u_1^n,\dhatlu_2^n=u_2^n\}\twocolbreaktimes
\indic{1}\{\hatlv_0^n=\dhatlv_0^n=v_0^n,\hatlv_1^n=v_1^n,\dhatlv_2^n=v_2^n\}\nonumber\\
&\times
\indic{1}\{W_1(\hatlu_0^n)=\hatlw_1,W_1(\dhatlu_0^n)=\dhatlw_1\}\twocolbreaktimes
\indic{1}\{W_2(\hatlv_0^n)=\hatlw_2, W_2(\dhatlv_0^n)=\dhatlw_2\} \nonumber\\
&=P(u_{[0:2]}^n,v_{[0:2]}^n,z^n,w_{[1:2]},{\bf f})\nonumber\\
&\times\indic{1}\{\hatlu_0^n=\dhatlu_0^n=u_0^n,\hatlu_1^n=u_1^n,\dhatlu_2^n=u_2^n\}\twocolbreaktimes
\indic{1}\{\hatlv_0^n=\dhatlv_0^n=v_0^n,\hatlv_1^n=v_1^n,\dhatlv_2^n=v_2^n\}\nonumber\\
&\times\indic{1}\{ \hatlw_1=\dhatlw_1=w_1, \hatlw_2=\dhatlw_2=w_2 \}, 
\label{P3}
\end{align}where $W_1(u_0^n)={\hat w_1}$ and $W_2(v_0^n)={\hat w_2}$ denote the bins assigned to $u_0^n$ and $v_0^n$, respectively. Using \eqref{P3} and Lemma~\ref{lemma:DistEq}, Equation~\eqref{eq:DistLemma1} leads to
\begin{align}
{\hat P}(&z^n,w_{[1:2]},{\bf f},\hatlw_1,\dhatlw_1,\hatlw_2,\dhatlw_2)\approx P(z^n,w_{[1:2]},{\bf f})\twocolbreaktimes
\indic{1}\{ \hatlw_1=\dhatlw_1=w_1, \hatlw_2=\dhatlw_2=w_2 \}, 
\label{P4}
\end{align}

Using Equations \eqref{PiP} and \eqref{P4} and Lemma~\ref{lemma:DistEq}, Equation~\eqref{eq:DistLemma4} leads to
\begin{align}
{\hat P}(&z^n,w_{[1:2]},{\bf f},\hatlw_1,\dhatlw_1,\hatlw_2,\dhatlw_2) \nonumber\\
\approx{}& p(z^n)p^U(w_{[1:2]},f_{[1:2]})p^U(f'_{[1:2]},f''_{[1:2]})\twocolbreaktimes
\indic{1}\{ \hatlw_1=\dhatlw_1=w_1, \hatlw_2=\dhatlw_2=w_2 \}. 
\label{P6}
\end{align}

We now eliminate the shared randomness $(F_{[1:2]},F'_{[1:2]},F''_{[1:2]})$ without affecting the secrecy and reliability requirements. By using Definition~{\ref{pmfDefi}}, Equation~\eqref{P6} ensures that there exists a fixed binning with corresponding pmf $p$ that, if used in place of the random coding strategy $P$ in~\eqref{Phat}, will induce the pmf $\hat{p}$ as follows:
\begin{align}
\hat{p}(&z^n,w_{[1:2]},f_{[1:2]},f'_{[1:2]},f''_{[1:2]},\hatlw_1,\dhatlw_1,\hatlw_2,\dhatlw_2) \nonumber\\
\approx{}&
p(z^n)p^U(w_{[1:2]},f_{[1:2]})p^U(f'_{[1:2]},f''_{[1:2]})\twocolbreaktimes
\indic{1}\{ \hatlw_1=\dhatlw_1=w_1, \hatlw_2=\dhatlw_2=w_2 \}. 
\label{P7}
\end{align}
Now, using Lemma~\ref{lemma:DistEq}, Equation~\eqref{eq:DistLemma3} shows that there exists an instance of $(f_{[1:2]},f'_{[1:2]},f''_{[1:2]})$ such that:
\begin{align}
\hat{p}(&z^n,w_{[1:2]},\hatlw_1,\dhatlw_1,\hatlw_2,\dhatlw_2|f_{[1:2]},f'_{[1:2]},f''_{[1:2]})  \nonumber\\
\approx{}& p(z^n)p^U(w_1)p^U(w_2) \indic{1}\{ \hatlw_1=\dhatlw_1=w_1, \hatlw_2=\dhatlw_2=w_2 \}. 
\label{P71}
\end{align}
This distribution satisfies the secrecy and reliability requirements as follows:

\begin{itemize}
\item Reliability: Using Lemma~\ref{lemma:DistEq}, Equation~\eqref{eq:DistLemma2} leads to
\begin{align}
\hat{p}(&w_{[1:2]},{\hat w_{1,1}},{\hat w_{1,2}},{\hat w_{2,1}},{\hat
    w_{2,2}}|f_{[1:2]},f'_{[1:2]},f''_{[1:2]}) \twocolbreak
\approx \indic{1}\{ \hatlw_1=\dhatlw_1=w_1, \hatlw_2=\dhatlw_2=w_2 \}, 
\label{P8}
\end{align}
\sloppy which is equivalent to:
\begin{gather*}
\hat{p}\Big(\{(\hatuw_1,\hatuw_2) \ne
(W_1,W_2) \} \cup \{(\dhatuw_1,\dhatuw_2) \ne
(W_1,W_2)\}\twocolnewline \bigg| f_{[1:2]},f'_{[1:2]},f''_{[1:2]}\bigg) \to 0.
\end{gather*}
\item Security: Again, using Lemma~\ref{lemma:DistEq}, Equation~\eqref{eq:DistLemma2}
\begin{align}
&{\hat p}(z^n,w_{[1:2]}|f_{[1:2]},f'_{[1:2]},f''_{[1:2]}) \approx p(z^n)p^U(w_1)p^U(w_2). 
\label{P9}
\end{align}
\end{itemize}
Finally, we identify $p(x_1^n|w_1,f_1,f'_{[1:2]})$ and $p(x_2^n|w_2,f_2,f''_{[1:2]})$ (which is done by generating $u_{[0:2]}$ and $v_{[0:2]}$ first, respectively) as encoders and the Slepian-Wolf decoders as decoders for the channel coding problem. These encoders and decoders lead to reliable and secure encoders and decoders.

By applying a computer generated Fourier-Motzkin procedure to \eqref{secur1}-\eqref{secur12}, \eqref{Red41}, \eqref{Red52}, and \eqref{Red8} the achievable rate region for the strong secrecy regime in Theorem~\ref{thstr} is obtained \cite{FMEIT}.
\end{proof}

\begin{remark}
If we assume that \eqref{RAC}, and therefore \eqref{EXCE}, holds, the inequalities \eqref{secur6} for $j=2$, \eqref{secur7} for $j=1$, and \eqref{secur8}-\eqref{secur12} will be redundant and by applying the Fourier-Motzkin procedure \cite{FMEIT,Shirani2011} to \eqref{secur1}-\eqref{secur5}, \eqref{secur6} for $j=1$, \eqref{secur7} for $j=2$, \eqref{Red41}, \eqref{Red52}, and \eqref{Red8} the region in Theorem~\ref{thper} over the distribution \eqref{distachistr} will be achieved. This shows that the region derived by OSRB is a superset of the region derived in the weak secrecy regime.
\end{remark}

\begin{remark}
\sloppy The random distributions $P(u_0^n,v_0^n|w_{[1:2]},f_{[1:2]})$ and
$P(u_{[1:2]}^n,v_{[1:2]}^n|u_0^n,v_0^n,f'_{[1:2]},f''_{[1:2]})$
factorize as $P(u_0^n|w_1,f_1)P(v_0^n|w_2,f_2)$ and
$P(u_{[1:2]}^n|u_0^n,f'_{[1:2]})P(v_{[1:2]}^n|v_0^n,f''_{[1:2]})$, respectively, which means that Encoders~$1$ and $2$ are not using the common randomness and the message available at the other encoder to generate the common and private random variables. The common randomness $(F_1,F'_{[1:2]})$ represents the realization of Encoder~1's codebook and $(F_2,F''_{[1:2]})$ represents the realization of Encoder~2's codebook, which is available at all terminals, but the codebook at one encoder does not depend on the codebook of the other encoder.
\end{remark}

\begin{remark}
The achievable region described in the proof of Theorem~\ref{thstr} was without time sharing, i.e., $Q=\emptyset$. One can incorporate this into the proof by generating i.i.d. copies of $Q$, and sharing it among all terminals and conditioning everything on it.
\end{remark}

\begin{appendices}

\section{Proof of Lemma~\ref{lemma1}}
\label{lemmaproof}
\sloppy 
Let
$N(Q^n,U_0^n,V_0^n,Z^n) = |\{ (k,\ell) \in \llbracket 1,{2^{nS}}\rrbracket \times
\llbracket 1,{2^{nT}}\rrbracket:(Q^n,U_0^n,V_0^n,U_1^n(k),V_1^n(\ell),Z^n) \in
\mathcal{T}_{\epsilon}^{(n)}\} |$. Next, let's define the following error events.

Let $E_1(Q^n,U_0^n,V_0^n,Z^n) = 1$ if $ N(Q^n,U_0^n,V_0^n,Z^n) \ge (1 + {\delta _1}(\epsilon )){2^{n(S+T - \mi(U_1,V_1;Z|Q,U_0,V_0) + \delta (\epsilon ))}}$ and $E_1=0$ otherwise.

Let $E=0$ if $(Q^n,U_0^n,V_0^n,U_1^n(K),V_1^n(L),Z^n) \in \mathcal{T}_{\epsilon}^{(n)}$ and $E_1(Q^n,U_0^n,V_0^n,Z^n,K,L) = 0$, and $E=1$ otherwise. 

We now show that if $S\ge \mi(U_1;Z|Q,U_0,V_0) + \delta (\epsilon)$, $T \ge \mi(V_1;Z|Q,U_0,V_0) + \delta (\epsilon)$, and $S+T \ge \mi(U_1,V_1;Z|Q,U_0,V_0) + \delta (\epsilon)$, then $\Prob( E = 1)  \to 0$ as $n \to \infty$.

By the union bound we have
\begin{align}
\Prob(E = 1)  \le{}& 
\Prob\big( (Q^n,U_0^n,V_0^n,U_1^n(K),V_1^n(L),Z^n) \notin
\mathcal{T}_{\epsilon}^{(n)}\big) \twocolbreak
 +\Prob\big( {E_1}(Q^n,U_0^n,V_0^n,Z^n,K,L) = 1\big).
\end{align}
The first term tends to zero by the main assumption of the Lemma. 

We then partition the event $\{E_1=1\}$ based on the composition of the typical sequences $(Q^n,U_0^n,V_0^n,U_1^n(k),V_1^n(\ell),Z^n) \in \mathcal{T}_{\epsilon}^{(n)}$  :
\begin{itemize}
\item
When all such typical sequences share the same $U_1^n(k)$, i.e., correspond to a single $k$.
\item
When all such typical sequences share the same $V_1^n(\ell)$, i.e., correspond to a single $\ell$.
\item
Neither of the above
\end{itemize}





As usual, each of the three partitioned $E_1$ events gives rise to one rate constraint. We discuss the first in detail; the remaining two follow similarly. Define $A(Q^n,U_0^n,V_0^n,z^n)$ as the event $\{E_1(Q^n,U_0^n,V_0^n,Z^n)=1\}\cap\{Z^n=z^n\}$,
\begin{align}
&\Prob\big(E_1(Q^n,U_0^n,V_0^n,Z^n) = 1\big)\nonumber\\
&= \sum\limits_{(q^n,u_0^n,v_0^n)\in\mathcal{T}_{\epsilon}^{(n)}} \Big[p(q^n)p(u_0^n|q^n)p(v_0^n|q^n)\twocolbreakquad\times\Prob\Big( (E_1(Q^n,U_0^n,V_0^n,Z^n) = 1)\twocolbreakquad|Q^n=q^n,U_0^n=u_0^n,V_0^n = v_0^n\Big)\Big]  \nonumber\\ 
&= \sum\limits_{
\substack{
(q^n,u_0^n,v_0^n) \in \mathcal{T}_{\epsilon}^{(n)}(Q,U_0,V_0) \\
z^n \in\mathcal{T}_{\epsilon}^{(n)}(Z|Q,U_0,V_0)}}
p(q^n)p(u_0^n|q^n)p(v_0^n|q^n)
\twocolbreaktimes
\Prob\big( A(q^n,u_0^n,v_0^n,z^n)|Q^n=q^n,U_0^n=u_0^n,V_0^n = v_0^n\big)  \nonumber\\ 
&\le \sum\limits_{{(q^n,u_0^n,v_0^n)} \in \mathcal{T}_{\epsilon}^{(n)}(Q,U_0,V_0)} {p(q^n)p(u_0^n|q^n)p(v_0^n|q^n)}\twocolbreak\sum\limits_{z^n \in \mathcal{T}_{\epsilon}^{(n)}(Z|Q,U_0,V_0)} 
\Prob\big( ({E_1}(q^n,u_0^n,v_0^n,z^n) = 1)\nonumber\\
&\quad\quad{}|Q^n=q^n,U_0^n=u_0^n,V_0^n = v_0^n\big).
\end{align}


Then,
\begin{align}
& \Prob\big( E_1(q^n,u_0^n,v_0^n,z^n) = 1|Q^n=q^n,U_0^n=u_0^n,V_0^n = v_0^n\big)  = \nonumber\\
& \Prob\big( N(q^n,u_0^n,v_0^n,z^n) \ge (1 + {\delta _1}(\epsilon )){2^{n(T - \mi(V_1;Z|Q,U_0,V_0) + \delta (\epsilon ))}}\big).\nonumber
\end{align}
Define $X_{\ell}=1$ if $(q^n,u_0^n,v_0^n,V_1^n(\ell),z^n) \in \mathcal{T}_{\epsilon}^{(n)}$ and $0$ otherwise. Here, $X_{\ell}$, $\ell \in \llbracket 1,2^{nT}\rrbracket$, are i.i.d.\ Bernoulli-$\alpha$ random variables, where 
\begin{equation}
2^{ - n(\mi(V_1;Z|Q,U_0,V_0) + \delta (\epsilon ))} \le \alpha \le {2^{ - n(\mi(V_1;Z|Q,U_0,V_0) - \delta (\epsilon ))}}\nonumber
\end{equation}
Then
\begin{align}
\Prob\bigg( N({}& q^n,u_0^n,v_0^n,z^n) 
\ge (1 + \delta_1(\epsilon))
{2^{n(T - \mi(V_1;Z|Q,U_0,V_0) + \delta (\epsilon ))}}  
 \twocolbreak\Big|Q^n=q^n,U_0^n = u_0^n,V_0^n = v_0^n \bigg)\nonumber\\ 
 \le \Prob\Bigg({}&  {\sum\limits_{\ell = 1}^{2^{nT}} X_{\ell}  \ge (1 + {\delta _1}(\epsilon )){2^{nT}}\alpha } \twocolbreakonequad\Big|Q^n=q^n, U_0^n = u_0^n,V_0^n = v_0^n \Bigg).\nonumber
\end{align}Applying the Chernoff Bound (e.g., see \cite[Appendix~B]{ElGamalKim}), leads to
\begin{align}
 &\Prob\Bigg( { {\sum\limits_{\ell = 1}^{2^{nT}} X_{\ell}  \ge (1 + \delta _1(\epsilon ))2^{nT}\alpha } \Big|Q^n=q^n,U_0^n = u_0^n,V_0^n = v_0^n} \Bigg) \nonumber\\ 
&\le \exp ( - {2^{nT}}\alpha \delta _1^2(\epsilon )/4) \nonumber\\ 
&\le \exp ( - {2^{n(T - \mi(V_1;Z|Q,U_0,V_0) - \delta (\epsilon ))}}\delta _1^2(\epsilon )/4).
\end{align}Therefore,
\begin{align}
 &\Prob( {E_1}(Q^n,U_0^n,V_0^n,Z^n) = 1)  \nonumber\\ 
 & \le \sum\limits_{(q^n,u_0^n,v_0^n) \in \mathcal{T}_{\epsilon}^{(n)}}{p(q^n)p(u_0^n|q^n)p(v_0^n|q^n)}\twocolbreaktimes
\sum\limits_{z^n \in \mathcal{T}_{\epsilon}^{(n)}(Z|Q,U_0,V_0)} {\exp ( - {2^{n(T - \mi(V_1;Z|Q,U_0,V_0) - \delta (\epsilon ))}}\delta _1^2(\epsilon )/4)}  \nonumber\\ 
 & \le {2^{n\log |\mathcal{Z}|}}\exp ( - {2^{n(T - \mi(V_1;Z|Q,U_0,V_0) - \delta (\epsilon ))}}\delta _1^2(\epsilon )/4),
\end{align}
which tends to zero as $n \to \infty$ if $T \geq \mi(V_1;Z|Q,U_0,V_0) + \delta (\epsilon )$. 

In a similar manner, the bounding of error probability for the second and third partition of $E_1$ (please see above) will give rise to the rate constraints $S \geq \mi(U_1;Z|Q,U_0,V_0) + \delta (\epsilon)$, and  $S+T \geq \mi(U_1,V_1;Z|Q,U_0,V_0) + \delta (\epsilon)$, respectively. Details are ommited for brevity.

Finally, we bound $\ent(L,K|Q^n,U_0^n,V_0^n,Z^n,\mathcal{C})$ as follows:
\begin{align}
& \ent(L,K,E|Q^n,U_0^n,V_0^n,Z^n,C) \nonumber\\ 
& \le 1 + \Prob( E = 1)\ent(L,K|E = 1,Q^n,U_0^n,V_0^n,Z^n,C) \twocolbreak
+ \Prob( E = 0) \ent(L,K|E =0,Q^n,U_0^n,V_0^n,Z^n,C) \nonumber\\ 
& \le 1 + \Prob( E = 1) n(S+T) \twocolbreak
+ \log \big((1 + {\delta _1}(\epsilon )){2^{n(S + T - \mi(U_1,V_1;Z|Q,U_0,V_0) + \delta (\epsilon ))}} \big)
\nonumber\\ 
& \le n(S + T - \mi(U_1,V_1;Z|Q,U_0,V_0) + \delta_2(\epsilon )).
\end{align}

\section{Achievable Rate Region for MAC-WTC Under Randomness Constraint}
\label{thMAWCproof}
\begin{figure}
\centering
\includegraphics[width=\Figwidth]{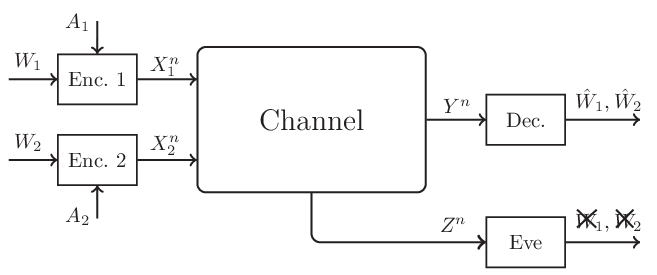}
\caption{Multiple access wiretap channel}

\label{MAWC}
\end{figure}
It is well-known that a stochastic encoding is required to avoid leaking information about the transmitted confidential messages to an eavesdropper.  Here, a new achievability technique for characterizing the trade-off between the rate of the random number to realize the stochastic encoding and the communication rates in multiple access wiretap channel, by employing a variation of superposition coding, is presented.

Consider a MAC-WTC $(\mathcal{X}_1,\mathcal{X}_2,p(y,z|x_1,x_2),\mathcal{Y},\mathcal{Z})$, in which $\mathcal{X}_1$, $\mathcal{X}_2$ are finite input alphabets and $\mathcal{Y}$ and $\mathcal{Z}$ are finite output alphabets at the legitimate receiver and the eavesdropper, respectively (as depicted in Fig.~\ref{MAWC}). In this problem, each transmitter sends a confidential message which is supposed to be decoded by the legitimate receiver and must be kept secret from the eavesdropper. Furthermore, for stochastic encoding, Encoder~1 and Encoder~2 are allowed to use a limited amount of randomness. Thus, we are interested in the trade-off between the rate of randomness, and the rates of confidential messages.
\begin{definition}
\label{defi3}
A $(M_{1,n},M_{2,n},n)$ code for the considered model (Fig.~\ref{MAWC}) consists of the following:
\begin{enumerate}[i)]
\item Two message sets $\mathcal{W}_i=\llbracket 1,M_{i,n}\rrbracket$, $i=1,2$, from which independent messages $W_1$ and $W_2$ are drawn uniformly distributed over their respective sets. Also, Two dummy message sets $\mathcal{A}_i=\llbracket 1,M'_{i,n}\rrbracket$, $i=1,2$, from which independent dummy messages $A_1$ and $A_2$ are drawn uniformly distributed over their respective sets.

\item Deterministic encoders $f_{i,n}$, $i=1,2$, are defined by function $f_{i,n}:\mathcal{W}_i\times\mathcal{A}_i\to\mathcal{X}_i^n$.

\item A decoding function $\phi:\mathcal Y^n\to\mathcal{W}_1\times\mathcal{W}_2$ that assigns $(\hatlw_1,\hatlw_2) \in \llbracket 1,{M_{1,n}}\rrbracket\times\llbracket 1,M_{2,n}\rrbracket$ to received sequence $y^n$.

\end{enumerate}
\end{definition}
The probability of error is given by:
\begin{equation}
\label{pen2}
P_{e} \triangleq\Prob\big(\{(\hatuw_1,\hatuw_2)\ne(w_1,w_2)  \}\big).
\end{equation}
\begin{definition}[\cite{BlochBarros}]
\label{defiperfect2}
A quadruple $(R_1,R_{d_1},R_2,R_{d_2})$ is achievable under weak secrecy if there exists a sequence of $({M_{1,n}},{M_{2,n}},{M'_{1,n}},{M'_{2,n}},n)$ codes with ${M_{1,n}}\ge{2^{n{R_1}}},{M_{2,n}}\ge{2^{n{R_2}}},{M'_{1,n}}\le{2^{n{R_{d_1}}}},{M'_{2,n}}\le{2^{n{R_{d_2}}}}$, so that $P_e
\underset{n\rightarrow\infty}{\xrightarrow{\hspace{0.2in}}} 0$ and
\begin{gather}
\label{Secrecy_Defi2}
\frac{1}{n}\mi(W_1,W_2;Z^n) \underset{n\rightarrow\infty}{\xrightarrow{\hspace{0.2in}}} 0.
\end{gather}
\end{definition}
\begin{theorem}
\label{thperMACWT}
An inner bound on the secrecy capacity region of the multiple access wiretap channel is given by the set of non-negative quadruple $(R_1,R_{d_1},R_2,R_{d_2})$ such that
\begin{align}
R_1 &\le \mi(U;Y|Q,V) - \mi(U;Z|Q),
\label{R11S}\\
R_2 &\le \mi(V;Y|Q,U) - \mi(V;Z|Q),
\label{R12S}\\
R_1 + R_2 &\le \mi(U,V;Y|Q)-\mi(U,V;Z|Q),
\label{SR1S}\\
R_{d_1} &\ge \mi(U;Z|Q) + \mi(X_1;Z|Q,U,V),
\label{Rd1S}\\
R_{d_2} &\ge \mi(V;Z|Q) + \mi(X_2;Z|Q,U,V),
\label{Rd2S}\\
R_{d_1} + R_{d_2} &\ge \mi(U,V;Z|Q) + \mi(X_1,X_2;Z|Q,U,V),
\label{Rd1d2S}
\end{align}
for some
\begin{align}
p(q)p(u|q)p(v|q)p(x_1|u)p(x_2|v)p(y,z|x_1,x_2).
\label{distachiperfMACWT}
\end{align}
\end{theorem}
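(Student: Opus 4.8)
\noindent\emph{Proof proposal.} The plan is to prove achievability with a two‑layer superposition code in which the channel prefixes $p(x_1|u)$ and $p(x_2|v)$ used in the standard multiple‑access wiretap scheme are replaced by an \emph{additional superposition layer}; in this way every transmitted sequence is a deterministic function of a message and a dummy message (as Definition~\ref{defi3} demands) and $R_{d_1},R_{d_2}$ bookkeep exactly the randomness spent. Fix $p$ as in \eqref{distachiperfMACWT} and split $R_{d_1}=R'_1+R''_1$, $R_{d_2}=R'_2+R''_2$. Generate a time‑sharing sequence $Q^n$ (i.i.d.\ $\sim\prod p(q_i)$, shared as part of the code); cloud centers $U^n(w_1,\ell)$, $(w_1,\ell)\in\llbracket 1,2^{nR_1}\rrbracket\times\llbracket 1,2^{nR'_1}\rrbracket$, i.i.d.\ $\sim\prod p(u_i|q_i)$; and, superimposed on each cloud center, satellites $X_1^n(w_1,\ell,m)$, $m\in\llbracket 1,2^{nR''_1}\rrbracket$, i.i.d.\ $\sim\prod p(x_{1,i}|u_i)$. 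Symmetrically generate $V^n(w_2,k)$ and $X_2^n(w_2,k,p)$ with rates $R'_2,R''_2$. Identifying the dummy index $a_i$ with the pair of layer indices makes $(w_i,a_i)\mapsto x_i^n$ deterministic; transmitter~$i$ sends the corresponding satellite codeword.

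For reliability, the receiver runs joint‑typicality decoding of the cloud centers $U^n,V^n$, decoding the satellite indices $m,p$ \emph{non‑uniquely} since only $(W_1,W_2)$ must be recovered. The usual error‑event accounting — wrong cloud center with the other user decoded correctly, both cloud centers wrong, etc., in each case asking whether the satellite layer can still confirm a spurious tuple — yields inequalities of the form ``(sum of message and cloud‑center rates of the wrongly decoded users) $<\mi(\cdot;Y|\cdot,Q)$'' together with auxiliary ones involving $R''_1,R''_2$. Intersecting these with the secrecy constraints derived below and eliminating $R'_1,R'_2,R''_1,R''_2$ by Fourier–Motzkin should leave precisely \eqref{R11S}–\eqref{SR1S} in $(R_1,R_2)$; the slack introduced by the extra superposition layer is what keeps these known prefix‑free rates achievable even when a large dummy rate is forced by secrecy.

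The core of the argument is the equivocation bound. Write $\mathbf L=(\ell,k)$ and $\mathbf M=(m,p)$ for the cloud‑center and satellite dummy indices, $\mathcal C$ for the codebooks, and condition throughout on the public $Q^n$ (legitimate since $W_1,W_2\perp Q^n$). Because $(W_1,W_2,\mathbf L,\mathbf M)$ determines $(X_1^n,X_2^n)$ and $Z^n-(X_1^n,X_2^n,Q^n)-(W_1,W_2,\mathbf L,\mathbf M)$, the chain rule gives
\begin{align*}
\mi(W_1,W_2;Z^n|Q^n,\mathcal C)&\le \mi(X_1^n,X_2^n;Z^n|Q^n,\mathcal C)-n(R_{d_1}+R_{d_2})\\
&\quad+\ent(\mathbf L,\mathbf M|W_1,W_2,Z^n,Q^n,\mathcal C).
\end{align*}
The last term is bounded above by two applications of Lemma~\ref{lemma1}: the first with $U_0,V_0\leftarrow\emptyset$, $U_1\leftarrow U$, $V_1\leftarrow V$, $(S,T)\leftarrow(R'_1,R'_2)$ (applied inside the sub‑codebooks indexed by the true $(W_1,W_2)$, whose joint‑typicality hypothesis holds by conditional typicality since $Z-(X_1,X_2)-(U,V)$); the second, conditioned on the realized cloud centers (which are determined by $W_1,W_2,\mathbf L,\mathcal C$), with $U_0\leftarrow U$, $V_0\leftarrow V$, $U_1\leftarrow X_1$, $V_1\leftarrow X_2$, $(S,T)\leftarrow(R''_1,R''_2)$. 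Adding the two bounds and using $\mi(X_1,X_2;Z|Q)=\mi(U,V;Z|Q)+\mi(X_1,X_2;Z|U,V,Q)$ (valid because $Z-(X_1,X_2)-(U,V)$ given $Q$) gives $\ent(\mathbf L,\mathbf M|W_1,W_2,Z^n,Q^n,\mathcal C)\le n(R_{d_1}+R_{d_2})-n\,\mi(X_1,X_2;Z|Q)+n\epsilon_n$ provided
\begin{align*}
R'_1&>\mi(U;Z|Q),\quad R'_2>\mi(V;Z|Q),\quad R'_1+R'_2>\mi(U,V;Z|Q),\\
R''_1&>\mi(X_1;Z|U,V,Q),\quad R''_2>\mi(X_2;Z|U,V,Q),\quad R''_1+R''_2>\mi(X_1,X_2;Z|U,V,Q).
\end{align*}
On the other hand, the codebook‑averaged output marginal equals the target marginal, so concavity of entropy yields $\expec_{\mathcal C}[\mi(X_1^n,X_2^n;Z^n|Q^n,\mathcal C)]\le n\,\mi(X_1,X_2;Z|Q)$; taking $\expec_{\mathcal C}$ of the displayed inequality then gives $\expec_{\mathcal C}[\tfrac1n\mi(W_1,W_2;Z^n|Q^n,\mathcal C)]\le\epsilon_n$. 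A union bound with the vanishing error probability and selection of a good realization of $(\mathcal C,Q^n)$ produces a deterministic code meeting \eqref{Secrecy_Defi2}. Finally, since $U\perp V|Q$ and $X_1\perp X_2|U,V,Q$ give $\mi(U,V;Z|Q)\ge\mi(U;Z|Q)+\mi(V;Z|Q)$ and $\mi(X_1,X_2;Z|U,V,Q)\ge\mi(X_1;Z|U,V,Q)+\mi(X_2;Z|U,V,Q)$, projecting the six rate inequalities above onto $(R_{d_1},R_{d_2})=(R'_1+R''_1,R'_2+R''_2)$ reproduces exactly \eqref{Rd1S}, \eqref{Rd2S} and \eqref{Rd1d2S}.

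I expect the main obstacle to be the bookkeeping that glues the two halves together: (i) verifying the joint‑typicality/Markov‑lemma hypothesis of Lemma~\ref{lemma1} at each layer once the conditioning on the correct messages and cloud centers is in place — the ``$Z^n$'' fed to the lemma is the channel output generated by the \emph{true} satellite codewords, so this hinges on the conditional typicality lemma applied through the memoryless channel — and (ii) carrying out the Fourier–Motzkin elimination so as to confirm that the reliability inequalities (which do involve $R''_1,R''_2$) shrink neither \eqref{R11S}–\eqref{SR1S} nor \eqref{Rd1S}–\eqref{Rd1d2S}; intuitively this holds because the legitimate receiver never needs to decode the satellite layer, but it must be made precise through the non‑unique‑decoding argument. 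Everything else is routine superposition/MAC coding, and the conceptual point — that confusing the eavesdropper among satellites superimposed on a fixed cloud center costs only dummy rate and not message rate — is what makes prefixing dispensable and pins down the minimal randomness.
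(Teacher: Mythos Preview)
Your proposal is correct and follows essentially the same route as the paper: the paper also splits $R_{d_i}$ into two parts (its $R_{i,1},R_{i,2}$ are your $R'_i,R''_i$), builds the same two-layer superposition code $u^n(w_1,a_{1,1})\to x_1^n(w_1,a_{1,1},a_{1,2})$, decodes the cloud centers non-uniquely, and bounds the leakage via the same chain-rule decomposition followed by two invocations of Lemma~\ref{lemma1} (once for the $U,V$ layer, once for the $X_1,X_2$ layer), with Fourier--Motzkin at the end. One small simplification relative to your sketch: in the paper the legitimate decoder tests typicality only of $(q^n,u^n,v^n,y^n)$ and ignores the satellite layer entirely, so the reliability constraints do \emph{not} involve $R''_1,R''_2$ at all---your anticipated difficulty~(ii) therefore does not arise.
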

\begin{remark}
By setting $U=X_1$, $V=X_2$, and by taking sufficiently large $R_{d_1}$ and $R_{d_2}$, the result in Theorem~\ref{thperMACWT} reduces to the achievable rate region of multiple access wiretap channel without common message \cite{GMAWC,GMAWCJamming,YassaeeMAWC}.
\end{remark}
\begin{remark}
By setting $X_2=\emptyset$ and $V=\emptyset$ (or $X_1=\emptyset$ and $U=\emptyset$), the result in Theorem~\ref{thperMACWT} reduces to the capacity rate region of broadcast channel with confidential messages under randomness constraint in \cite[Corollary~11]{OptimalRandomness}.
\end{remark}
\begin{proof}
{\em {Rate Splitting:}}
Divide the dummy message $A_1$ into independent dummy messages $ A_{1,1}\in \llbracket 1, 2^{nR_{1,1}}\rrbracket$ and $A_{1,2}\in \llbracket 1, 2^{nR_{1,2}}\rrbracket$. Also, divide the dummy message $A_2$ into independent dummy messages $A_{2,1}\in \llbracket 1, 2^{nR_{2,1}}\rrbracket$ and $A_{2,2}\in \llbracket 1, 2^{nR_{2,2}}\rrbracket$. Therefore,  $R_{d_1}=R_{1,1}+R_{1,2}$ and $R_{d_2}=R_{2,1}+R_{2,2}$.

{\em {Codebook Generation:}}
Fix $p(q)$, $p(u|q)$, $p(v|q)$, $p(x_1|u)$, $p(x_2|v)$, and $\epsilon>0$.  Randomly and independently generate a typical sequence $q^n$ according to $p(q^n)=\prod\limits_{i = 1}^n p(q_i)$. We suppose that all the terminals know $q^n$.
\begin{enumerate}[i)]

\item Generate $2^{n(R_1+R_{1,1})}$ sequences according to $\prod\nolimits_{i = 1}^n{p_{U|Q}(u_{i}|q_i)}$. Then, randomly bin these $2^{n(R_1+R_{1,1})}$ sequences into $2^{nR_1}$ bins. We index these sequences as $u^n(w_1,a_{1,1})$. For each $(w_1,a_{1,1})$, generate $2^{nR_{1,2}}$ codewords $x_1^n(w_1,a_{1,1},a_{1,2})$ each according to $\prod\nolimits_{i = 1}^n{p_{X_1|U}(x_{1,i}|u_{i})}$.

\item Generate $2^{n(R_2+R_{2,1})}$ sequences according to $\prod\nolimits_{i = 1}^n{p_{V|Q}(v_{i}|q_i)}$. Then, randomly bin these $2^{n(R_2+R_{2,1})}$ sequences into $2^{nR_2}$ bins. We index these sequences as $v^n(w_2,a_{2,1})$. For each $(w_2,a_{2,1})$, generate $2^{nR_{2,2}}$ codewords $x_1^n(w_2,a_{2,1},a_{2,2})$ each according to $\prod\nolimits_{i = 1}^n{p_{X_2|V}(x_{2,i}|v_{i})}$.

\end{enumerate}

{\em {Encoding:}} 
To send the message $w_1$, the Encoder~1 splits $a_1$ into $(a_{1,1}, a_{1,2})$, and chooses $u^n(w_1,a_{1,1})$. Then it chooses codeword $x_1^n(w_1,a_{1,1},a_{1,2})$ and send it over the channel.

To send the message $w_2$, the Encoder~2 splits $a_2$ into $(a_{2,1}, a_{2,2})$, and chooses $v^n(w_2,a_{2,1})$. Then it chooses codeword $x_2^n(w_2,a_{2,1},a_{2,2})$ and send it over the channel.


{\em {Decoding and Error Probability Analysis:}}
\begin{itemize}
\item \sloppy Decoder decodes $(w_1,w_2)$ by finding a unique pair $(w_1,w_2)$ such that $(q^n,u^n(w_1,a_{1,1}),v^n(w_2,a_{2,1}),y^n)\in\mathcal{T}_{\epsilon}^{(n)}({p_{U,V,Y}})$ for some $(a_{1,1},a_{2,1})$. The probability of error for Receiver goes to zero as $n\rightarrow\infty$ if we choose \cite{ElGamalKim}
\begin{align}
&R_1 + R_{1,1} \leq \mi(U;Y|Q,V) - \epsilon,
\label{dec1s}\\
&R_2 + R_{2,1}\leq \mi(V;Y|Q,U) - \epsilon,
\label{dec2s}\\
&R_1 + R_{1,1} + R_2 + R_{2,1} \leq \mi(U,V;Y|Q) - \epsilon.
\label{dec12s}
\end{align}

\end{itemize}

{\em {Equivocation Calculation:}}
We analyze mutual information between $(W_1,W_2)$ and $Z^n$, averaged over all random codebooks
\begin{align}
\mi(&W_1,W_2;Z^n|Q^n,\mathcal{C})\nonumber\\
={}&\mi(W_1,W_2,A_{1,1},A_{1,2},A_{2,1},A_{2,2};Z^n|Q^n,\mathcal{C})\twocolbreak 
-\mi(A_{1,1},A_{1,2},A_{2,1},A_{2,2};Z^n|W_1,W_2,Q^n,\mathcal{C})\nonumber\\ 
\mathop= \limits^{(a)}{}&\mi(W_1,W_2,A_{1,1},A_{1,2},A_{2,1},A_{2,2},X_1^n,X_2^n;Z^n|Q^n,\mathcal{C})\twocolbreak 
-\mi(A_{1,1},A_{1,2},A_{2,1},A_{2,2};Z^n|W_1,W_2,Q^n,\mathcal{C})\nonumber\\ 
\mathop= \limits^{(b)}{}& \mi(X_1^n,X_2^n;Z^n|Q^n,\mathcal{C})\twocolbreak -\mi(A_{1,1},A_{1,2},A_{2,1},A_{2,2};Z^n|W_1,W_2,Q^n,\mathcal{C})\nonumber\\ 
={}&\mi(X_1^n,X_2^n;Z^n|Q^n,\mathcal{C}) - \mi(A_{1,1},A_{2,1};Z^n|W_1,W_2,Q^n,\mathcal{C}) \nonumber\\
&-\mi(A_{1,2},A_{2,2};Z^n|W_1,W_2,A_{1,1},A_{1,2},Q^n,\mathcal{C})\nonumber\\ 
={}&\mi(X_1^n,X_2^n;Z^n|Q^n,\mathcal{C}) - \ent(A_{1,1},A_{2,1}|W_1,W_2,Q^n,\mathcal{C}) \twocolbreak
+ \ent(A_{1,1},A_{2,1}|W_1,W_2,Z^n,Q^n,\mathcal{C})\nonumber\\
&-\ent(A_{1,2},A_{2,2}|W_1,W_2,A_{1,1},A_{2,1},Q^n,\mathcal{C}) \twocolbreak
+ \ent(A_{1,2},A_{2,2}|W_1,W_2,A_{1,1},A_{2,1},Z^n,Q^n,\mathcal{C})
\label{EquivocationS}
\end{align}
where $(a)$ is due to $X_1^n$ and $X_2^n$ are deterministic functions of $(W_1,A_{1,1},A_{1,2})$ and $(W_2,A_{2,1},A_{2,2})$, respectively. Also, $(b)$ is due to the fact that, given $X_1^n$ and $X_2^n$, the indices $W_1$, $W_2$, $A_{1,1}$, $A_{1,2}$ ,$A_{2,1}$, and $A_{2,2}$ are uniquely determined.

The first term in \eqref{EquivocationS} is bounded as:
\begin{align}
\mi(X_1^n,X_2^n;Z^n|Q^n,\mathcal{C}) \leq n\mi(X_1,X_2;Z|Q) + n\epsilon,
\label{firstterms}
\end{align}
where $\epsilon \underset{n\rightarrow\infty}{\xrightarrow{\hspace{0.2in}}} 0$ similar to \cite{ElGamalKim}.

For the second term in \eqref{EquivocationS} we have
\begin{align}
\ent(A_{1,1},A_{2,1}|W_1,W_2,Q^n,\mathcal{C})=n(R_{1,1} + R_{2,1}).
\label{secterms}
\end{align}
For the third term, substituting $U_0\leftarrow Q$, $V_0\leftarrow Q$, $U_1\leftarrow U$, and $V_1\leftarrow V$ in Lemma~\ref{lemma1} result that if $\Prob\big((Q^n,U^n(W_1,A_{1,1}),V^n(W_2,A_{2,1}),Z^n) \in\mathcal{T}_{\epsilon}^{(n)}\big)\underset{n\rightarrow\infty}{\xrightarrow{\hspace{0.2in}}} 1$ and 
\begin{align}
R_{1,1} &\geq \mi(U;Z|Q) + \epsilon,
\label{condition1}\\
R_{2,1} &\geq \mi(V;Z|Q) + \epsilon,
\label{condition2}\\
R_{1,1} + R_{2,1} &\geq \mi(U,V;Z|Q) + \epsilon.
\label{condition3}
\end{align}
Then,
\begin{align}
&\ent(A_{1,1},A_{2,1}|W_1,W_2,Z^n,Q^n,\mathcal{C}) \twocolbreak
\le  n(R_{1,1} + R_{2,1} - \mi(U,V;Z|Q) + \epsilon).
\label{thidterm1s}
\end{align}
Here, this condition holds because
\begin{align}
\Prob\big({}&(Q^n,U^n(W_1,A_{1,1}),X_1^n(W_1,A_{1,1},A_{1,2}),\twocolbreak V^n(W_2,A_{2,1}),X_2^n(W_2,A_{2,1},A_{2,2}),Z^n)\in \mathcal{T}_{\epsilon}^{(n)}\big)  \underset{n\rightarrow\infty}{\xrightarrow{\hspace{0.2in}}} 1.
\label{typicalitys}
\end{align}
Now, we bound the fourth term in \eqref{EquivocationS},
\begin{align}
\ent(A_{1,2},A_{2,2}|W_1,W_2,A_{1,1},A_{2,1},Q^n,\mathcal{C}) = n(R_{1,2} + R_{2,2}).
\label{FourthTerms}
\end{align}
Now, we bound the last term in \eqref{EquivocationS} by applying Lemma~\ref{lemma1},
\begin{align}
&\ent(A_{1,2},A_{2,2}|W_1,W_2,A_{1,1},A_{2,1},Z^n,Q^n,\mathcal{C})\twocolbreak
\leq n(R_{1,2} + R_{2,2} - \mi(X_1,X_2;Z|Q,U,V)+\epsilon),
\label{fourthterms2}
\end{align}if \eqref{typicalitys} holds and
\begin{align}
    R_{1,2} &\geq \mi(X_1;Z|Q,U,V)+\epsilon,
    \label{SecCondition1}\\
    R_{2,2} &\geq \mi(X_2;Z|Q,U,V)+\epsilon,
    \label{SecCondition2}\\
    R_{1,2} + R_{2,2} &\geq \mi(X_1,X_2;Z|Q,U,V)+\epsilon.
    \label{SecCondition3}
\end{align}

Substituting \eqref{firstterms}, \eqref{secterms}, \eqref{thidterm1s}, \eqref{FourthTerms}, and \eqref{fourthterms2} into \eqref{EquivocationS} yields
\begin{align}
&\mi(W_1,W_2;Z^n|Q^n,\mathcal{C}) \leq n\mi(X_1,X_2;Z|Q) - n(R_{1,1} + R_{2,1}) \nonumber\\
& + n(R_{1,1} + R_{2,1} - \mi(U,V;Z|Q) + \epsilon) - n(R_{1,2} + R_{2,2})\nonumber\\
& + n(R_{1,2} + R_{2,2} - \mi(X_1,X_2;Z|Q,U,V)+\epsilon).
\end{align}
Therefore $\mi(W_1,W_2;Z^n|Q^n,\mathcal{C})\leq 2n\epsilon$. 
By applying the Fourier-Motzkin procedure \cite{FMEIT} to \eqref{dec1s}-\eqref{dec12s}, \eqref{condition1}-\eqref{condition3}, \eqref{SecCondition1}-\eqref{SecCondition3}, $R_{d_1}=R_{1,1}+R_{1,2}$, and $R_{d_2}=R_{2,1}+R_{2,2}$ we obtain the region in Theorem~\ref{thperMACWT}.
\end{proof}

\section{Proof of Theorem~\ref{thper}}
\label{thperproof}
The coding scheme is based on superposition coding, Wyner's random binning \cite{Wyner}, Marton coding, and applying indirect decoding \cite{ChiaElGamal}. 

The random code generation is as follows:

Fix $p(q)$, $p(u_0|q)$, $p(u_1,u_2|u_0)$, $p(v_0|q)$, $p(v_1,v_2|v_0)$, $p(x_1|u_0,u_1,u_2)$, $p(x_2|v_0,v_1,v_2)$, $\epsilon_1<\min\{\epsilon',\epsilon''\}$, and $\epsilon_2<\min\{\epsilon',\epsilon''\}$.\\
{\em {Codebook Generation:}} 
Randomly and independently generate a typical sequence $q^n$ according to $p(q^n)=\prod\limits_{i = 1}^n p(q_i)$. We suppose that all the terminals know $q^n$.
\begin{enumerate}[i)]
\item Generate $2^{n\tilde{R}_1}$ codewords $u_0^n(\ell_0)$ each according to $\prod\nolimits_{i = 1}^n{p_{U_0|Q}(u_{0,i}|q_i)}$. Then, randomly bin the $2^{n\tilde{R}_1}$ codewords into $2^{nR_1}$ bins, $\mathcal{B}(w_1)$, $w_1 \in \llbracket 1,2^{nR_1}\rrbracket$. For each $\ell_0$, generate ${2^{n\rho_1}}$ codewords $u_1^n(\ell_0,t_1)$ each according to $\prod\nolimits_{i = 1}^n {p_{U_1|U_0}(u_{1,i}|u_{0,i})}$. Then, randomly bin the $2^{n\rho_1}$ codewords into $2^{n\rho'_1}$ bins, $\mathcal{B}(\ell_0,\ell_1)$, $\ell_1 \in \llbracket 1,2^{n{\rho'_1}}\rrbracket$. Similarly, for each $\ell_0$, generate ${2^{n\tilde{\rho}_1}}$ codewords $u_2^n(\ell_0,t_2)$ each according to $\prod\nolimits_{i = 1}^n {p_{U_2|U_0}(u_{2,i}|u_{0,i})}$. Then, randomly bin the ${2^{n\tilde{\rho}_1}}$ codewords into $2^{n{\tilde{\rho}'_1}}$ bins, $\mathcal{B}(\ell_0,\ell_2)$, $\ell_2 \in \llbracket 1,2^{n{\tilde{\rho}'_1}}\rrbracket$.
\item Similarly, generate $2^{n\tilde{R}_2}$ codewords $v_0^n(\ell'_0)$ each according to $\prod\nolimits_{i = 1}^n {p_{V_0|Q}(v_{0,i}|q_i)}$. Then, randomly bin the $2^{n\tilde{R}_2}$ codewords into $2^{nR_2}$ bins,   $\mathcal{B}(w_2)$, $w_2 \in \llbracket 1,2^{nR_2}\rrbracket$. For each $\ell'_0$, generate $2^{n\rho_2}$ codewords $v_1^n(\ell'_0,s_1)$ each according to $\prod\nolimits_{i = 1}^n {p_{V_1|V_0}(v_{1,i}|v_{0,i})}$. Then, randomly bin the $2^{n\rho_2}$ codewords into $2^{n\rho'_2}$ bins, $\mathcal{B}(\ell'_0,\ell'_1)$, $\ell'_1 \in\llbracket 1,2^{n\rho'_2}\rrbracket$. Similarly, for each $\ell'_0$, generate $2^{n\tilde{\rho}_2}$ codewords $v_2^n(\ell'_0,s_2)$ each according to $\prod\nolimits_{i = 1}^n {p_{V_2|V_0}(v_{2,i}|v_{0,i})}$. Then, randomly bin the $2^{n\tilde{\rho}_2}$ codewords into $2^{n\tilde{\rho}'_2}$ bins, $\mathcal{B}(\ell'_0,\ell'_2)$, $\ell'_2 \in\llbracket 1,2^{n\tilde{\rho}'_2}\rrbracket$.
\end{enumerate}
{\em {Encoding:}} To send the message $w_1$, the encoder $f_1$ first uniformly chooses index $L_0 \in \mathcal{B}(w_1)$. Then, it uniformly chooses a pair of indices $(L_1,L_2)$ and selects a jointly typical sequence pair $(u_1^n(L_0,t_1(L_0,L_1)),u_2^n(L_0,t_2(L_0,L_1)))\in\mathcal{T}_{\epsilon_1}^{(n)}(U_1,U_2|U_0)$ in the product bin. If the encoder $f_1$ finds more than one such pair, then it chooses one of them uniformly at random. We have an error if there is no such pair, in which the encoder $f_1$ uniformly at random chooses $t_1\in \mathcal{B}(L_0,L_1)$, $t_2 \in \mathcal{B}(L_0,L_2)$. The error
probability of the last event approaches to zero as $n \to \infty$, if
\cite{ElGamalVanMeulen}
\begin{align}
\rho'_1 + \tilde{\rho}'_1 \leq \rho_1 + \tilde{\rho}_1 - \mi(U_1;U_2|U_0) - \epsilon_1.
\label{dec01}
\end{align}
\sloppy Finally, the encoder $f_1$ generates a sequence $X_1^n$ at random
according to $\prod\nolimits_{i=1}^n{p(x_{1,i}|u_{0,i},u_{1,i},u_{2,i})}$. Encoder~2 proceeds similarly to encode $w_2$ and sends codeword $X_2^n$. The probability of not finding a jointly typical sequence pair
$(v_1^n(L'_0,s_1(L'_0,L'_1)),v_2^n(L'_0,s_2(L'_0,L'_1)))\in\mathcal{T}_{\epsilon_2}^{(n)}(V_1,V_2|V_0)$ in the product bin approaches to zero as $n \to \infty$, if \cite{ElGamalVanMeulen}
\begin{align}
\rho'_2 + \tilde{\rho}'_2 \leq \rho_2 + \tilde{\rho}_2 - \mi(V_1;V_2|V_0) - \epsilon_2.
\label{dec02}
\end{align}
\begin{figure}
\centering
\includegraphics[width=\Figwidth]{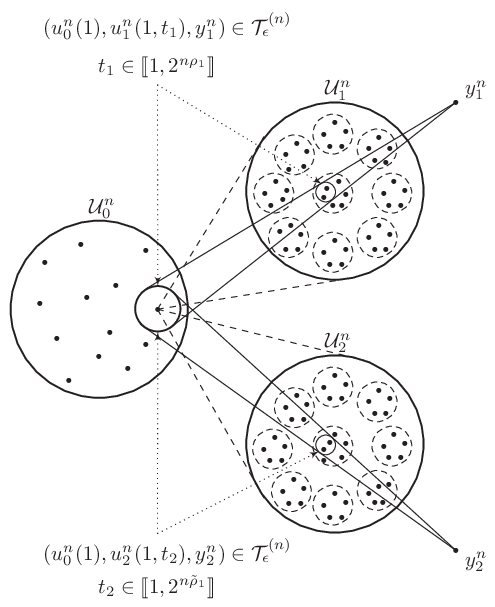}
\caption{Codebook structure and indirect decoding for $u_0^n(1)$ via $u_1^n(1,t_1)$ and $u_2^n(1,t_2)$ for the situation that there is just one transmitter.}
\label{fig6}
\end{figure}
{\em {Decoding and Error Probability Analysis:}}
\begin{itemize}
\item Let $(W_1,L_0,T_1)$ and $(W_2,L'_0,S_1)$ denote the transmitted indices by the first and second transmitter, respectively, and let $(\hat{W}_1,\hat{L}_0,\hat{T}_1)$ and $(\hat{W}_2,\hat{L}'_0,\hat{S}_1)$ denote the corresponding decoded messages by the first receiver, respectively. Receiver~1 decodes $(L_0,L'_0)$ and therefore $(w_1,w_2)$ indirectly by finding a unique pair $(\hatll_0,\hatll'_0)$ such that $(q^n,u_0^n(\hatll_0),u_1^n(\hatll_0,t_1),v_0^n(\hatll'_0),v_1^n(\hatll'_0,s_1),y_1^n)\in\mathcal{T}_{\epsilon'}^{(n)}(U_0,U_1,V_0,V_1,Y_1)$ for some $t_1 \in \llbracket 1,2^{n\rho_1}\rrbracket$ and $s_1 \in \llbracket 1,2^{n\rho_2}\rrbracket$. The idea of indirect decoding for the situation that there is just one transmitter is shown in Fig.~\ref{fig6}. The error event $(\hat{W}_1,\hat{W}_1)\ne (W_1,W_1)$ occurs only if at least one of the following events occurs:
\begin{align}
    \mathcal{E}_1 &= \Big\{ \big(Q^n,U_0^n(\ell_0),U_1^n(\ell_0,t_1),\twocolbreakonequad V_0^n(\ell'_0),V_1^n(\ell'_0,s_1),Y_1^n\big) \notin \mathcal{T}_\epsilon ^{(n)}\Big\},\label{eq:Error_Event_1}\\
    \mathcal{E}_2 &= \Big\{ \big(Q^n,U_0^n(\hat{\ell}_0),U_1^n(\hat{\ell}_0,\hat{t}_1), V_0^n(\ell'_0), V_1^n(\ell'_0,s_1),\twocolbreakonequad Y_1^n\big) \in \mathcal{T}_\epsilon ^{(n)}\,\,\mbox{for some}\,\,\hat{\ell}_0\ne\ell_0,\hat{t}_1 \Big\},\label{eq:Error_Event_2}\\
    \mathcal{E}_3 &= \Big\{ \big(Q^n,U_0^n(\ell_0),U_1^n(\ell_0,t_1),V_0^n(\hat{\ell}'_0),V_1^n(\hat{\ell}'_0,\hat{s}_1),\twocolbreakonequad Y_1^n\big) \in \mathcal{T}_\epsilon ^{(n)}\,\,\mbox{for some}\,\, \hat{\ell}'_0\ne \ell'_0, \hat{s}_1 \Big\},\label{eq:Error_Event_3}\\
    \mathcal{E}_4 &= \Big\{ \big(Q^n,U_0^n(\hat{\ell}_0),U_1^n(\hat{\ell}_0,\hat{t}_1),V_0^n(\ell'_0),V_1^n(\ell'_0,\hat{s}_1),\nonumber\\
    &\quad Y_1^n\big) \in \mathcal{T}_\epsilon ^{(n)}\,\,\mbox{for some}\,\,\hat{\ell}_0\ne\ell_0,\hat{t}_1, \hat{s}_1\ne s_1 \Big\},\label{eq:Error_Event_4}\\
    \mathcal{E}_5 &= \Big\{ \big(Q^n,U_0^n(\ell_0),U_1^n(\ell_0,\hat{t}_1),V_0^n(\hat{\ell}'_0),V_1^n(\hat{\ell}'_0,\hat{s}_1),\nonumber\\
    &\quad Y_1^n\big) \in \mathcal{T}_\epsilon ^{(n)}\,\,\mbox{for some}\,\,\hat{t}_1\ne t_1, \hat{\ell}'_0\ne \ell'_0, \hat{s}_1 \Big\},\label{eq:Error_Event_5}\\
    \mathcal{E}_6 &= \Big\{ \big(Q^n,U_0^n(\hat{\ell}_0),U_1^n(\hat{\ell}_0,\hat{t}_1),V_0^n(\hat{\ell}'_0),V_1^n(\hat{\ell}'_0,\hat{s}_1)\nonumber\\
    &\quad,Y_1^n\big) \in \mathcal{T}_\epsilon ^{(n)}\,\,\mbox{for some}\,\,\hat{\ell}_0\ne\ell_0,\hat{t}_1, \hat{\ell}'_0\ne \ell'_0, \hat{s}_1 \Big\}.\label{eq:Error_Event_6}
\end{align}Therefore, by Union Bound the average probability of error for decoder~1 is upper bounded as
\begin{align*}
    P_{e_1} \leq\Prob(\mathcal{E}_1)+\Prob(\mathcal{E}_2)+\Prob(\mathcal{E}_3)+\Prob(\mathcal{E}_4)+\Prob(\mathcal{E}_5)+\Prob(\mathcal{E}_6).
\end{align*} By law of large numbers, $\Prob(\mathcal{E}_1)$ tends to zero as $n\to\infty$. By packing lemma \cite[Lemma~3.1]{ElGamalKim} $\Prob(\mathcal{E}_2)$ to $\Prob(\mathcal{E}_6)$ respectively tend to zero as $n\to\infty$ if
\begin{align}
\twocolAlignMarker\tilde{R}_1 + \rho_1 \onecolAlignMarker< \mi(U_0,U_1;Y_1|Q,V_0,V_1),
\label{eq:dec11}\\
\twocolAlignMarker\tilde{R}_2 + \rho_2 \onecolAlignMarker< \mi(V_0,V_1;Y_1|Q,U_0,U_1),
\label{eq:dec12}\\
\twocolAlignMarker\tilde{R}_1 + \rho_1 + \rho_2 \onecolAlignMarker< \mi(U_0,U_1,V_1;Y_1|Q,V_0),
\label{eq:dec13}\\
\twocolAlignMarker \rho_1 + \tilde{R}_2 + \rho_2 \onecolAlignMarker< \mi(U_1,V_0,V_1;Y_1|Q,U_0),
\label{eq:dec14}\\
\twocolAlignMarker\tilde{R}_1 + \rho_1 + \tilde{R}_2 + \rho_2 \onecolAlignMarker< \mi(U_0,U_1,V_0,V_1;Y_1|Q).
\label{dec15}
\end{align}

\item Similarly Receiver~2 decodes $(L_0,L'_0)$ and therefore $(w_1,w_2)$ indirectly by finding a unique pair $(\dhatll_0,\dhatll'_0)$ such that $(q^n,u_0^n(\dhatll_0),u_2^n(\dhatll_0,t_2),v_0^n(\dhatll'_0),v_2^n(\dhatll'_0,s_2),y_2^n) \in\mathcal{T}_{\epsilon''}^{(n)}(U_0,U_2,V_0,V_2,Y_2)$ for some $t_2 \in \llbracket 1,2^{n\tilde{\rho}_1}\rrbracket$ and $s_2 \in \llbracket 1,2^{n\tilde{\rho}_2}\rrbracket$. The error analysis for the second receiver is similar to the first receiver and for the interest of brevity it is omitted here. Similar to Receiver~1 the The probability of error for Receiver~2 goes to zero as $n\rightarrow\infty$ if we choose \cite{ElGamalKim}
\begin{align}
\twocolAlignMarker\tilde{R}_1 + \tilde{\rho}_1 \onecolAlignMarker< \mi(U_0,U_2;Y_2|Q,V_0,V_2),
\label{eq:dec21}\\
\twocolAlignMarker\tilde{R}_2 + \tilde{\rho}_2 \onecolAlignMarker< \mi(V_0,V_2;Y_2|Q,U_0,U_2),
\label{eq:dec22}\\
\twocolAlignMarker\tilde{R}_1 + \tilde{\rho}_1 + \tilde{\rho}_2 \onecolAlignMarker< \mi(U_0,U_2,V_2;Y_2|Q,V_0),
\label{eq:dec23}\\
\twocolAlignMarker \tilde{\rho}_1 + \tilde{R}_2 + \tilde{\rho}_2 \onecolAlignMarker< \mi(U_2,V_0,V_2;Y_2|Q,U_0) ,
\label{eq:dec24}\\
\twocolAlignMarker\tilde{R}_1 + \tilde{\rho}_1 + \tilde{R}_2 + \tilde{\rho}_2 \onecolAlignMarker< \mi(U_0,U_2,V_0,V_2;Y_2|Q).
\label{eq:dec25}
\end{align}
\end{itemize}
{\em {Equivocation Calculation:}}
We analyze mutual information between $(W_1,W_2)$ and $Z^n$, averaged over all random codebooks
\begin{align}
\mi(&W_1,W_2;Z^n|Q^n,\mathcal{C})\nonumber\\
={}&\mi(W_1,W_2,L_0,T_1,T_2,L'_0,S_1,S_2;Z^n|Q^n,\mathcal{C})\twocolbreak 
-\mi(L_0,T_1,T_2,L'_0,S_1,S_2;Z^n|W_1,W_2,Q^n,\mathcal{C})\nonumber\\ 
\le{}&
\mi(U_0^n,U_1^n,U_2^n,V_0^n,V_1^n,V_2^n;Z^n|Q^n,\mathcal{C})\twocolbreak
-\mi(L_0,L'_0;Z^n|W_1,W_2,Q^n,\mathcal{C})\nonumber\\ 
&-\mi(T_1,T_2,S_1,S_2;Z^n|L_0,L'_0,Q^n,\mathcal{C})\nonumber\\ 
={}&\mi(U_0^n,U_1^n,U_2^n,V_0^n,V_1^n,V_2^n;Z^n|Q^n,\mathcal{C})\twocolbreak
- \ent(L_0,L'_0|W_1,W_2,Q^n,\mathcal{C})\nonumber\\
&+\ent(L_0,L'_0|Z^n,W_1,W_2,Q^n,\mathcal{C})\twocolbreak
 -\mi(T_1,T_2, S_1,S_2;Z^n|L_0,L'_0,Q^n,\mathcal{C}),
\label{Equivocation}
\end{align}
where the inequality is due to the data processing inequality. Here, $T_1$, $T_2$, $S_1$, and $S_2$ are deterministic functions of $(L_0,L_1)$, $(L_0,L_2)$, $(L'_0,L'_1)$, and $(L'_0,L'_2)$, respectively.

The first term in \eqref{Equivocation} is bounded as:
\begin{align}
&\mi(U_0^n,U_1^n,U_2^n,V_0^n,V_1^n,V_2^n;Z^n|Q^n,\mathcal{C})\twocolbreak
\leq n\mi(U_0,U_1,U_2,V_0,V_1,V_2;Z|Q) + n\epsilon,
\label{firstterm}
\end{align}
as $n \to \infty$ where $\epsilon \to 0$ \cite{ElGamalKim}.

For the second term in \eqref{Equivocation} we have
\begin{equation}
\label{secterm}
\ent(L_0,L'_0|W_1,W_2,Q^n,\mathcal{C})=n(\tilde{R}_1 - R_1 + \tilde{R}_2 - R_2).
\end{equation}
For the third term, substituting $U_0\leftarrow Q$, $V_0\leftarrow Q$, $U_1\leftarrow U_0$, and $V_1\leftarrow V_0$ in Lemma~\ref{lemma1} result that,
\begin{align}
&\ent(L_0,L'_0|Z^n,W_1,W_2,Q^n,\mathcal{C}) \twocolbreak
\le  n(\tilde{R}_1 - R_1 + \tilde{R}_2 - R_2 - \mi(U_0,V_0;Z|Q) + \epsilon),
\label{thidterm1}
\end{align}if $\Prob\big((Q^n,U_0^n(L_0),V_0^n(L'_0),Z^n) \in\mathcal{T}_{\epsilon}^{(n)}\big)\to 1$ as $n\to \infty$ and $\tilde{R}_1
- R_1 \geq \mi(U_0;Z|Q) + \epsilon$, $\tilde{R}_2 - R_2 \geq
\mi(V_0;Z|Q) + \epsilon$, and $\tilde{R}_1 - R_1 + \tilde{R}_2 - R_2 \geq \mi(U_0,V_0;Z|Q) + \epsilon$.

Here, the first condition holds because
\begin{align}
\Prob\big({}&(Q^n,U_0^n(L_0),U_1^n(L_0,t_1(L_0,L_1)),U_2^n(L_0,t_2(L_0,L_1))\twocolbreak,V_0^n(L'_0),V_1^n(L'_0,s_1(L'_0,L'_1))\nonumber\\
&,V_2^n(L'_0,s_2(L'_0,L'_1)),Z^n)\in \mathcal{T}_{\epsilon}^{(n)}\big)\to 1\label{typicality}
\end{align}
as $n\to \infty$. Now, we bound the last term in \eqref{Equivocation}
\begin{align}
\mi(&T_1,T_2,S_1,S_2;Z^n|L_0,L'_0,Q^n,\mathcal{C})\nonumber\\
={}&\ent(T_1,T_2,S_1,S_2|L_0,L'_0,Q^n,\mathcal{C})\twocolbreak
-\ent(T_1,T_2,S_1,S_2|Z^n,L_0,L'_0,Q^n,\mathcal{C})\nonumber\\
\mathop  =
\limits^{(a)}{}&\ent(T_1,T_2,S_1,S_2,L_1,L_2,L'_1,L'_2|L_0,L'_0,Q^n,\mathcal{C})\twocolbreak
-\ent(T_1,T_2,S_1,S_2|Z^n,L_0,L'_0,Q^n,\mathcal{C})\nonumber\\
\geq{}&
\ent(L_1,L_2,L'_1,L'_2|L_0,L'_0,Q^n,\mathcal{C})\twocolbreak
-\ent(T_1,S_1|Z^n,L_0,L'_0,Q^n,\mathcal{C})\twocolbreak
-\ent(T_2,S_2|Z^n,L_0,L'_0,Q^n,\mathcal{C})\nonumber\\
\mathop  = \limits^{(b)}{}& \ent(L_1,L_2|L_0,L'_0,Q^n,\mathcal{C})+\ent(L'_1,L'_2|L_0,L'_0,Q^n,\mathcal{C})\nonumber\\
&-\ent(T_1,S_1|Z^n,L_0,L'_0,Q^n,\mathcal{C})\twocolbreak
-\ent(T_2,S_2|Z^n,L_0,L'_0,Q^n,\mathcal{C}),
\label{IT}
\end{align}
where $(a)$ is due to given the codebook~$\mathcal{C}$ and $(L_0,L'_0)$, $(L_1,L_2,L'_1,L'_2)$ is a deterministic function of $(T_1(L_0,L_1),T_2(L_0,L_2),S_1(L'_0,L'_1),S_2(L'_0,L'_2))$, and $(b)$ holds due to the fact that given $(L_0,L'_0,Q^n,\mathcal{C})$, $(L_1,L_2)$ and $(L'_1,L'_2)$ are independent. Now,
\begin{align}
\ent(&L_1,L_2|L_0,L'_0,Q^n,\mathcal{C})   = n(\rho'_1+\tilde{\rho}'_1),
\label{HL1}\\
\ent(&L'_1,L'_2|L_0,L'_0,Q^n,\mathcal{C}) = n(\rho'_2+\tilde{\rho}'_2),
\label{HL2}\\
\ent(&T_1,S_1|Z^n,L_0,L'_0,Q^n,\mathcal{C}) \twocolbreak
\mathop  \leq  \limits^{(a)} n(\rho_1 + \rho_2 -\mi(U_1,V_1;Z|Q,U_0,V_0)+\epsilon),
\label{HL3}\\
\ent(&T_2,S_2|Z^n,L_0,L'_0,Q^n,\mathcal{C}) \twocolbreak
\mathop  \leq \limits^{(b)} n(\tilde{\rho}_1 + \tilde{\rho}_2 -\mi(U_2,V_2;Z|Q,U_0,V_0)+\epsilon),
\label{HL4}
\end{align}
where $(a)$ is due to the following. Consider,
\begin{align}
\ent(&T_1,S_1|Z^n,L_0,L'_0,Q^n,\mathcal{C}) \twocolbreak
= \ent(T_1,S_1|U_0^n(L_0),V_0^n(L'_0),Z^n,L_0,L'_0,Q^n,\mathcal{C})\nonumber\\
& \leq \ent(T_1,S_1|U_0^n(L_0),V_0^n(L'_0),Z^n,Q^n,\mathcal{C})\nonumber.
\end{align}
\sloppy Now we upper bound the term
$\ent(T_1,S_1|U_0^n(L_0),V_0^n(L'_0),Z^n,Q^n,\mathcal{C})$. From
\eqref{typicality} we have $\Prob\big((Q^n,U_0^n(L_0),U_1^n(L_0,t_1(L_0,L_1)),V_0^n(L'_0)$, $V_1^n(L'_0,s_1(L'_0,L'_1)),Z^n)\in\mathcal{T}_{\epsilon}^{(n)}\big)\to 1$ as $n\to \infty$. Applying Lemma~\ref{lemma1} leads to,
\begin{align}
&\ent(T_1,S_1|U_0^n(L_0),V_0^n(L'_0),Z^n,Q^n,\mathcal{C})\twocolbreak
\leq n(\rho_1 + \rho_2 -\mi(U_1,V_1;Z|Q,U_0,V_0)+\epsilon),
\end{align}if $\rho_1 \geq \mi(U_1;Z|Q,U_0,V_0)+\epsilon$, $\rho_2
\geq \mi(V_1;Z|Q,U_0,V_0)+\epsilon$, and $\rho_1 + \rho_2 \geq \mi(U_1,V_1;Z|Q,U_0,V_0)+\epsilon$. By the same argument the inequality $(b)$ holds, if the following inequalities hold,
\begin{align}
\tilde{\rho}_1 &\geq \mi(U_2;Z|Q,U_0,V_0)+\epsilon,\nonumber\\
\tilde{\rho}_2 &\geq \mi(V_2;Z|Q,U_0,V_0)+\epsilon,\nonumber\\
\tilde{\rho}_1 + \tilde{\rho}_2 &\geq \mi(U_2,V_2;Z|Q,U_0,V_0)+\epsilon,\nonumber
\end{align}
Substituting \eqref{HL1}-\eqref{HL4} into \eqref{IT} leads to,
\begin{align}
\mi(&T_1,T_2, S_1,S_2;Z^n|L_0,L'_0,Q^n,\mathcal{C})\nonumber\\
\geq{}& n(\rho'_1+\tilde{\rho}'_1) + n(\rho'_2+\tilde{\rho}'_2) \twocolbreak
 - n(\rho_1 + \rho_2 -\mi(U_1,V_1;Z|Q,U_0,V_0)+\epsilon) \nonumber\\
&- n(\tilde{\rho}_1 + \tilde{\rho}_2 - \mi(U_2,V_2;Z|Q,U_0,V_0)+\epsilon).
\label{fourthterm}
\end{align}
Substituting \eqref{firstterm}-\eqref{thidterm1} and \eqref{fourthterm} into \eqref{Equivocation} yields
\begin{align}
&\mi(W_1,W_2;Z^n|Q^n,\mathcal{C})\twocolbreak
\leq n\mi(U_0,U_1,U_2,V_0,V_1,V_2;Z|Q)  - n(\tilde{R}_1 - R_1 + \tilde{R}_2 - R_2)\nonumber\\
& + n(\tilde{R}_1 - R_1 + \tilde{R}_2 - R_2 -
\mi(U_0,V_0;Z|Q) )\twocolbreak
 - n(\rho'_1+\tilde{\rho}'_1) - n(\rho'_2+\tilde{\rho}'_2)  \nonumber\\
& + n(\rho_1 + \rho_2 -\mi(U_1,V_1;Z|Q,U_0,V_0)+\epsilon) \twocolbreak
+ n(\tilde{\rho}_1 + \tilde{\rho}_2 - \mi(U_2,V_2;Z|Q,U_0,V_0)+\epsilon).
\end{align}Therefore $\mi(W_1,W_2;Z^n|Q^n,\mathcal{C})\leq n\epsilon$ if
\begin{align}
&\mi(U_1,U_2,V_1,V_2;Z|U_0,V_0) \twocolbreak
- \rho'_1 - \tilde{\rho}'_1 - \rho'_2 - \tilde{\rho}'_2 + \rho_1 + \rho_2 - \mi(U_1,V_1;Z|Q,U_0,V_0)\nonumber\\
& + \tilde{\rho}_1 + \tilde{\rho}_2- \mi(U_2,V_2;Z|Q,U_0,V_0) \leq \epsilon.
\end{align}As a result, the rate constraints derived in equivocation analysis are
\begin{align}
&\tilde{R}_1 - R_1 + \tilde{R}_2 - R_2 \geq \mi(U_0,V_0;Z|Q), 
\label{equiv1}\\
&\tilde{R}_1 - R_1 \geq \mi(U_0;Z|Q), 
\label{equiv11}\\
&\tilde{R}_2 - R_2 \geq \mi(V_0;Z|Q), 
\label{equiv12}\\
&\rho_1 + \rho_2 \geq \mi(U_1,V_1;Z|Q,U_0,V_0),
\label{equiv2}\\
&\rho_1 \geq \mi(U_1;Z|Q,U_0,V_0),
\label{equiv3}\\
&\rho_2 \geq \mi(V_1;Z|Q,U_0,V_0),
\label{equiv4}\\
&\tilde{\rho}_1 + \tilde{\rho}_2 \geq \mi(U_2,V_2;Z|Q,U_0,V_0),
\label{equiv5}\\
&\tilde{\rho}_1 \geq \mi(U_2;Z|Q,U_0,V_0),
\label{equiv6}\\
&\tilde{\rho}_2 \geq \mi(V_2;Z|Q,U_0,V_0),
\label{equiv70}\\
&\rho_1 + \rho_2 + \tilde{\rho}_1 + \tilde{\rho}_2 - \rho'_1 - \tilde{\rho}'_1 - \rho'_2 - \tilde{\rho}'_2\twocolbreak
\leq \mi(U_1,V_1;Z|Q,U_0,V_0) + \mi(U_2,V_2;Z|Q,U_0,V_0) \nonumber\\
&\,\,\,\,\,- \mi(U_1,U_2,V_1,V_2;Z|U_0,V_0).
\label{equiv7}
\end{align}
Finally, by applying the Fourier-Motzkin procedure \cite{Shirani2011} to \eqref{dec01}, \eqref{dec02}, \eqref{eq:dec11}-\eqref{eq:dec25},  and \eqref{equiv1}-\eqref{equiv7} we obtain the inequalities in Theorem~\ref{thper}.
\end{appendices}

\section*{Acknowledgment}
The authors would like to thank anonymous reviewers of ISIT 2018 for their helpful comments.
\bibliographystyle{IEEEtran}
\bibliography{IEEEabrv,2T2R}

\end{document}